\documentclass[11pt]{article}
\usepackage[margin=1in]{geometry}
\usepackage{amsmath,amssymb,amsthm,mathtools}
\usepackage{graphicx}
\usepackage{tikz}
\usepackage{physics}
\usepackage{bm}
\usepackage{microtype}
\usepackage[dvipsnames]{xcolor}
\usepackage[hypertexnames=false,colorlinks=true,linkcolor=MidnightBlue,citecolor=ForestGreen,urlcolor=BrickRed]{hyperref}
\usepackage[nameinlink]{cleveref}

\numberwithin{equation}{section}
\newtheorem{theorem}{Theorem}[section]
\newtheorem{proposition}[theorem]{Proposition}
\newtheorem{lemma}[theorem]{Lemma}
\newtheorem{corollary}[theorem]{Corollary}

\title{On Optimal Quantum Data Hiding and Maximal Separable Ball}
\author{Zhi Li\\{\normalsize IBM Research}}
\date{}

\begin{document}
\maketitle

\begin{abstract}
Quantum data hiding asks how much distinguishing power can be lost when global measurements are restricted to local measurements and classical communication. In this work, we establish sharp results and improved bounds for several natural classes of restricted measurements.
For bipartite systems on $\mathbb C^n\otimes\mathbb C^m$, we prove that the optimal data-hiding ratios against separable and LOCC measurements are both $\min\{n,m\}$. 
This result follows from a stronger result that, for every $2\le p\le\infty$, the largest centered Schatten $p$-ball whose associated binary measurements are implementable by finite-round LOCC has radius $\min\{n,m\}^{2/p-1}$. This strengthens the classic separable-ball theorems, while also providing an explicit finite-round LOCC implementation.
For Alice-first one-way LOCC with Alice's local dimension equal to $n$, we prove that the optimal ratio is $(1+o(1))n$, with the upper bound obtained from a Gaussian rank-one POVM.
For local operations without communication, we improve the universal upper bound to $(\pi\sqrt3/4+o(1))\min\{n,m\}$.
\end{abstract}

\section{Introduction and main results}\label{sec:introduction}

The term quantum data hiding originates in the work of Terhal, DiVincenzo, and Leung \cite{TDL01,DLT02}, who showed that classical information can be encoded into bipartite quantum states that are readily distinguishable by a joint measurement but difficult to distinguish by local operations and classical communication (LOCC). 
Quantitatively, this phenomenon is a separation between global and locally constrained state discrimination. 
In this work, we ask how large this separation can be at fixed local dimensions.

Although quantum data hiding was originally formulated for LOCC, the same question can be asked for any prescribed class $\mathcal M$ of POVMs. Let $M_k$ be the algebra of operators on $\mathbb C^k$. 
Following the distinguishability-norm framework introduced in \cite{MWW09}, for Hermitian $h\in M_n\otimes M_m$, define
\begin{equation}\label{eq:intro-measurement-norm}
  \norm{h}_{\mathcal M}
  =
  \sup_{(M_\alpha)\in\mathcal M}
  \sum_\alpha \abs{\tr(M_\alpha h)}.
\end{equation}
If $\rho$ and $\sigma$ occur with prior probabilities $p$ and $1-p$, respectively, then for $h=p\rho-(1-p)\sigma$, the quantity $\norm{h}_{\mathcal M}$ is precisely the optimal discrimination bias achievable with measurements from $\mathcal M$. For unrestricted measurements, the corresponding bias is $\norm{h}_1$ \cite{Helstrom76}. The largest relative loss under the restriction $\mathcal M$ is therefore the data-hiding ratio
\begin{equation}\label{eq:intro-data-hiding-ratio}
  R_{\mathcal M}(n,m)
  =
  \sup_{0\ne h=h^*}
  \frac{\norm{h}_1}{\norm{h}_{\mathcal M}}.
\end{equation}

The standard locality-restricted measurement classes form a natural hierarchy. Under local operations without communication (LO), Alice and Bob perform fixed local POVMs independently, and the joint measurement outcome is the pair of their local outcomes. In $\mathrm{LOCC}_{A\to B}$, Alice measures first and Bob may choose his measurement according to her outcome. Separable measurements (SEP) and positive-partial-transpose measurements (PPT) are the usual outer relaxations of LOCC, defined by requiring every POVM effect to be separable or PPT, respectively \cite{Peres96,HHH96,Rains01}.
The inclusions
$\mathrm{LO}\subseteq\mathrm{LOCC}_{A\to B}\subseteq\mathrm{LOCC}\subseteq\mathrm{SEP}\subseteq\mathrm{PPT}$ imply
\begin{equation}\label{eq:intro-ratio-chain}
  R_{\mathrm{PPT}}(n,m)
  \le R_{\mathrm{SEP}}(n,m)
  \le R_{\mathrm{LOCC}}(n,m)
  \le R_{A\to B}(n,m)
  \le R_{\mathrm{LO}}(n,m).
\end{equation}

Determining these ratios has been the subject of a sequence of increasingly sharp estimates. Put $d=\min\{n,m\}$. On the lower-bound side, the SWAP operator on embedded $d$-dimensional local subspaces gives $R_{\mathrm{SEP}}(n,m)\ge d$ \cite{LPW18}.
This witness arises from the symmetric--antisymmetric Werner-state construction \cite{Werner89} used in the original data-hiding protocol \cite{TDL01,DLT02}, and later optimized in \cite{LPW18}.

Upper bounds were developed in several steps. For balanced systems ($m=n$), Matthews, Wehner, and Winter proved that LOCC always retains a bias of order $1/d$, establishing the matching linear dependence \cite[Corollary~17]{MWW09}. Lami, Palazuelos, and Winter extended this estimate to arbitrary local dimensions and obtained $R_{\mathrm{LOCC}}(n,m)\le 2d-1$ \cite[Theorem~16]{LPW18}. Corr{\^e}a, Lami, and Palazuelos later proved $R_{\mathrm{LO}}(n,m)\le 2\sqrt2\,d$, showing that the linear upper bound persists even without classical communication \cite[Theorem~1.1]{CLP22}. Finally, the sharp comparison with product observables in \cite{LS26} improved this to $R_{\mathrm{LO}}(n,m)\le\sqrt2\,d$. Thus, before the present work, the best bounds were
\begin{equation}\label{eq:intro-previous-bounds}
  d
  \le R_{\mathrm{SEP}}(n,m)
  \le R_{\mathrm{LOCC}}(n,m)
  \le R_{A\to B}(n,m)
  \le R_{\mathrm{LO}}(n,m)
  \le \sqrt2\,d.
\end{equation}
Thus the dimension dependence was known throughout the hierarchy, but the sharp constants for SEP, LOCC, and one-way LOCC remained open, except that $R_{\mathrm{SEP}}(n,n)=n$ was known \cite{MWW09,LPW18}\footnote{Note that, by using the result of \cite[Corollary 8.4]{Ando04}, the argument in \cite{MWW09} can already give $R_{\mathrm{SEP}}(n,m)=d$.}.

Our first main result determines the ratio exactly for PPT, SEP, and LOCC.
\begin{theorem}\label{thm:exact-data-hiding}
For all $n,m\ge1$,
\begin{equation}\label{eq:exact-data-hiding}
  R_{\mathrm{PPT}}(n,m)
  =R_{\mathrm{SEP}}(n,m)
  =R_{\mathrm{LOCC}}(n,m)
  =\min\{n,m\}.
\end{equation}
\end{theorem}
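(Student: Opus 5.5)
By the chain \eqref{eq:intro-ratio-chain}, it suffices to prove two inequalities: the lower bound $R_{\mathrm{PPT}}(n,m)\ge d$ and the upper bound $R_{\mathrm{LOCC}}(n,m)\le d$, where $d=\min\{n,m\}$.

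\emph{Lower bound.} I would take the flip operator $F$ on embedded subspaces $\mathbb C^d\otimes\mathbb C^d\subseteq\mathbb C^n\otimes\mathbb C^m$, so that $\norm{F}_1=d^2$. For any PPT effect $0\le M\le 1$ we have $\tr(MF)=\tr(M^{\Gamma}F^{\Gamma})$, where $F^\Gamma=d\,P_\Phi$ and $P_\Phi$ is the projector onto the maximally entangled state. Since $0\le M^\Gamma$, this gives $\tr(M^\Gamma F^\Gamma)=d\,\langle\Phi|M^\Gamma|\Phi\rangle\ge0$. The same reasoning applied to the complementary effect $1-M$, which is also PPT, shows $\langle\Phi|M^\Gamma|\Phi\rangle\le1$. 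Summing over a PPT POVM gives
\[
\sum_\alpha\abs{\tr(M_\alpha F)}=d\sum_\alpha\langle\Phi|M_\alpha^\Gamma|\Phi\rangle=d\,\langle\Phi|1|\Phi\rangle=d .
\]
Hence $\norm{F}_1/\norm{F}_{\mathrm{PPT}}\ge d^2/d=d$.

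\emph{Upper bound.} By duality, $\norm{h}_{\mathcal M}=\sup\{\tr(Xh): X=M-(1-M)\text{ for a two-outcome }M\in\mathcal M\}$, at least up to the convexity of the relevant unit ball. So $R_{\mathcal M}\le d$ follows once we show that every Hermitian $X$ with $\norm{X}_\infty\le 1/d$ is of the form $M_+-M_-$ for some two-outcome measurement in $\mathcal M$. Then
\[
\norm{h}_{\mathcal M}\ge\sup_{\norm{X}_\infty\le1/d}\tr(Xh)=\norm{h}_1/d .
\]
This is exactly the $p=\infty$ case of the Schatten-ball statement in the abstract (radius $d^{-1}$). For SEP the required fact is that $\tfrac12(1\pm X)$ is separable whenever $\norm{X}_\infty\le1/d$. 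This is a known separable-ball result: the operator-norm ball of radius $1/d$ around the normalized identity is separable, via Gurvits--Barnum-type arguments and Ando's result.

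For LOCC I would build an explicit finite-round protocol, assuming $n\le m$ so that $d=n$:
\begin{itemize}
\item Write $X=\sum_{ij}E_{ij}\otimes X_{ij}$ with $X_{ij}\in M_m$.
\item Alice measures in a random or designed orthonormal basis drawn from a family of mutually unbiased bases, or a 2-design over $\mathbb C^n$. Each Alice outcome $k$ then leaves Bob with the conditional operator $n\langle e_k|X|e_k\rangle$ to implement as a two-outcome measurement.
\item For Bob's step to be valid we need $\norm{n\langle e|X|e\rangle}_\infty\le 1$, which holds because $\norm{X}_\infty\le1/n$.
\item The difficulty is that averaging these product-diagonal pieces over bases reconstructs only a twirled version of $X$, not $X$ itself.
\end{itemize}

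The fix I would try is to combine the MUB or design decomposition with rescaling. If a complete set of $n+1$ MUBs existed, the identity $X=\sum_b\sum_k P_{b,k}\otimes\tr_A(P_{b,k}X)-\tr_A(X)\otimes\cdots$ expresses $X$ through one-way terms, but the correction term costs a constant factor. To avoid this, I would instead use a two-round protocol in which Bob first measures to "absorb" the correction, with a careful normalization.

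I expect the main obstacle to be exactly this: obtaining radius exactly $1/d$, rather than $1/(2d-1)$ as in \cite{LPW18}, from a finite-round LOCC implementation. My plan is to first handle rank-one or diagonal $X$ and then reduce to those cases by convexity, using that the set of implementable $X$ is convex under classical mixing.
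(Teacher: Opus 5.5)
There is a genuine gap: the proposal never proves $R_{\mathrm{LOCC}}(n,m)\le d$. Your lower bound is correct and is the paper's argument. In fact you get $\norm{F}_{\mathrm{PPT}}=d$ exactly, since every term $\tr(M_\alpha F)=d\langle\Phi|M_\alpha^\Gamma|\Phi\rangle$ is nonnegative. Reducing the upper bound to LOCC-implementability of $\{\tfrac12(I+X),\tfrac12(I-X)\}$ for every Hermitian $X$ with $\norm{X}_\infty\le1/d$ is also the paper's route, and for SEP alone Ando's ball theorem closes it. But an explicit LOCC protocol at radius exactly $1/d$ is the whole substance of the LOCC bound, and you only list candidate strategies while conceding you cannot reach $1/d$.

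The strategies you list also do not lead there.
\begin{itemize}
\item \emph{MUB scheme.} Alice measuring in a basis from a complete set of MUBs, with Bob answering $n\langle e|X|e\rangle$, realizes $\frac{n}{n+1}(X+I\otimes\tr_AX)$. Cancelling the $I\otimes\tr_AX$ term forces Bob's observables $(n+1)\tr_A[(P_{b,k}\otimes I)X]-\tr_AX$, which can have norm close to $2$. That gives radius only about $1/(2d)$.
\item \emph{Any sharp Alice-first scheme} would prove $R_{A\to B}(n,m)=d$.
\item \emph{Convexity reduction to diagonal pieces.} Reducing to operators diagonal in a product basis, or block-diagonal in a basis of Alice, yields LO or Alice-first protocols. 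It would therefore prove $R_{\mathrm{LO}}(n,m)=d$ or $R_{A\to B}(n,m)=d$, and the paper leaves both of these open.
\item \emph{Convexity reduction to rank-one pieces} cannot work at all. Convex combinations of rank-one Hermitian operators of norm at most $1$ have trace norm at most $1$, while $I/d$ lies in the ball and has trace norm $\max\{n,m\}$.
\end{itemize}

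The missing idea is the paper's genuinely two-way block construction. Assume $n\le m$ and write $X=\sum_{j,k}E_{jk}\otimes X_{jk}$.

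Since each block is a compression of $X$, $\sum_k\norm{X_{jk}}_\infty\le n\norm{X}_\infty\le1$. So $a_{jk}=\norm{X_{jk}}_\infty$ for $j\ne k$ and $a_{jj}=1-\sum_{k\ne j}a_{jk}\ge\norm{X_{jj}}_\infty$ define a nonnegative matrix with unit row sums and $a_{jk}a_{kj}=\norm{X_{jk}}_\infty^2$.

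Alice first applies the instrument with Kraus operators $\sqrt{a_{jj}}E_{jj}$ and $\sqrt{a_{jk}}E_{jj}+\sqrt{a_{kj}}E_{kk}$ for $j<k$, and announces the outcome. This splits $I\pm X$ into two kinds of pieces:
\begin{itemize}
\item Bob-only pieces, where Bob measures the observable $X_{jj}/a_{jj}$.
\item Binary POVMs on $\mathbb C^2\otimes\mathbb C^m$ with effects $\tfrac12\bigl[(E_{jj}+E_{kk})\otimes I\pm(E_{jk}\otimes Y+E_{kj}\otimes Y^*)\bigr]$, where $Y=X_{jk}/\sqrt{a_{jk}a_{kj}}$ is a contraction.
\end{itemize}
The second kind is implemented as follows. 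Write $Y$ as the average of two unitaries and select one by shared randomness. Bob measures the spectral projections $Q_\ell$ of the chosen unitary $\sum_\ell e^{i\theta_\ell}Q_\ell$ and reports $\ell$. Alice then measures $e^{i\theta_\ell}E_{jk}+e^{-i\theta_\ell}E_{kj}$.

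This Alice$\to$Bob$\to$Alice protocol reaches radius exactly $1/d$ with no constant-factor loss. It is what your proposal would need in place of the MUB scheme.
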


In particular, this theorem implies the data-hiding protocol in \cite{LPW18} using Werner states (with unequal prior distribution) is optimal.

As observed in \cite{MWW09}, the upper bound in \cref{thm:exact-data-hiding} is closely related to a separable-ball problem. A Hermitian perturbation $H\in M_n\otimes M_m$ such that $\norm{H}_\infty\leq 1$ gives rise to the binary POVM
\begin{equation}\label{eq:intro-binary-povm}
  \left\{
  \frac12(I+H),\frac12(I-H)
  \right\}.
\end{equation}
The size of the centered ball of perturbations for which this measurement is implementable by LOCC directly controls the data-hiding ratio. We determine the relevant balls for all Schatten norms.
\begin{theorem}\label{thm:schatten-balls}
Let $2\le p\le\infty$ and $H=H^*\in M_n\otimes M_m$. If
\begin{equation}\label{eq:intro-schatten-ball}
  \norm{H}_p\le d^{\frac2p-1}, \qquad d=\min\{n,m\},
\end{equation}
then the binary POVM in \cref{eq:intro-binary-povm} is implementable by a two-round LOCC protocol.
Moreover, the radius $d^{2/p-1}$ is optimal even for PPT measurements.
\end{theorem}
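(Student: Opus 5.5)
The plan is to prove the two halves separately. For optimality I will use the swap witness, and for sufficiency I will build an explicit protocol in which the party with the smaller dimension measures first with a rank-one POVM and then the other party measures a two-outcome POVM conditioned on that outcome. Without loss of generality $d=n\le m$; otherwise exchange the roles of Alice and Bob.

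\emph{Optimality.} Embed $\mathbb C^d$ into both local spaces and let $F$ be the swap operator on $\mathbb C^d\otimes\mathbb C^d$, extended by zero. Its eigenvalues are $\pm1$ with total multiplicity $d^2$, so $\norm{F}_p=d^{2/p}$. Put $H=tF$ with $t>1/d$, so that $\norm{H}_p=t\,d^{2/p}>d^{2/p-1}$. The partial transpose $F^{\Gamma}=d\,\Phi$, where $\Phi$ is the projector onto the maximally entangled state of the embedded subspace. Hence $(I-tF)^{\Gamma}$ has eigenvalue $1-td<0$, so the effect $\frac12(I-H)$ is not PPT. Since this holds for every $t>1/d$, no radius larger than $d^{2/p-1}$ works even for PPT measurements.

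\emph{Sufficiency.} Suppose Alice performs a rank-one POVM $\{w_k\ketbra{a_k}\}$ (a weighted $2$-design, or the continuous Haar POVM $n\,\ketbra{a}\,da$). She reports the outcome $k$, and Bob then measures $\{\frac12(I+H_k),\frac12(I-H_k)\}$ and reports his bit. The resulting effects are
\[
\textstyle\sum_k w_k\ketbra{a_k}\otimes\frac12(I\pm H_k).
\]
These equal $\frac12(I\pm H)$ exactly when
\[
H=\textstyle\sum_k w_k\ketbra{a_k}\otimes H_k,
\]
and the protocol is valid when $\norm{H_k}_\infty\le1$ for all $k$. The $2$-design reconstruction formula supplies the family
\[
H_a=(n+1)\bra{a}H\ket{a}_A-\lambda\,\tr_A H-(1-\lambda)\,\tr_A H,
\]
and more generally $H_a$ may be modified by any term $G_a$ with $\int\ketbra a\otimes G_a\,da=0$. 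The task is therefore to choose $G_a$ so that $\sup_a\norm{H_a}_\infty\le1$ whenever $\norm{H}_p\le n^{2/p-1}$.

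\emph{Reduction.} The constraint set is convex and unitarily invariant under $U_A\otimes U_B$. I would first settle the endpoint $p=\infty$ (radius $1/n$) and the endpoint $p=2$ (radius $1$). I would then try to cover the intermediate cases by writing $H$ as a sum of a part with small operator norm and a part with small Hilbert--Schmidt norm, split according to the singular values of $H$. Because the conditional operators $H_a$ depend linearly on $H$, bounds on $\norm{H_a}_\infty$ add. A Hölder/Ky Fan estimate should show that $\norm{H}_p\le n^{2/p-1}$ permits such a split with the two contributions summing to at most $1$. For $p=2$, the naive estimate is $\norm{\bra aH\ket a}_\infty\le\norm{H}_2$. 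To get radius exactly $1$ I would instead use the design average to bound $\norm{H_a}_\infty^2$ by $\norm{H}_2^2$, exploiting that $\int\norm{\bra aH\ket a}_2^2\,da=\frac{\norm{H}_2^2+\norm{\tr_AH}_2^2}{n(n+1)}$.

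\emph{Main obstacle.} The hardest step is the sharp pointwise bound $\sup_a\norm{H_a}_\infty\le1$. The naive choice $G_a=0$ gives only $\norm{H_a}_\infty\le\frac{n+1}{n}+1$ at $p=\infty$, off by a factor of about $2$. My first attempt would be to let Alice's POVM and the correction $G_a$ depend on $H$ through a Schmidt-type decomposition of $H$ across the cut. Concretely, I would use an orthonormal basis together with its Fourier-conjugate basis, so that the diagonal and off-diagonal parts of $H$ in Alice's system are handled by separate outcomes. Each piece would then be bounded directly using the constraint $\norm{\cdot}_\infty\le 1/n$ or $\norm{\cdot}_2\le 1$. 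If the endpoint constants still do not close, I would fall back on Ando's characterization of the separable ball and convert the resulting separable decomposition into an explicit two-round measure-and-prepare protocol.
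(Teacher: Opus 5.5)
Your optimality argument is correct and is the paper's own (the embedded swap $F$, with $(I-tF)^\Gamma$ having eigenvalue $1-td<0$). The sufficiency half has a genuine gap. The sharp pointwise bound $\sup_a\|H_a\|_\infty\le1$ at the endpoints, which you flag as the main obstacle, carries essentially all of the content, and the route you propose for it is not known to work.

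\textbf{Why the proposed route is stuck.} Your protocol (Alice measures first, Bob measures a conditional binary POVM) is Alice-first one-way LOCC, even if Alice's POVM depends on $H$. Carrying it out at $p=\infty$ with radius $1/n$ for $n\le m$ would give $\|h\|_{A\to B}\ge\frac1n\|h\|_1$, i.e.\ $R_{A\to B}(n,m)\le n$. That is exactly the question the paper leaves open. Its one-way analysis (Gaussian rank-one POVM plus truncation and iteration) only reaches $(1+o(1))n$, and your uncorrected $2$-design choice loses a factor of about $2$.

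At $p=2$ your proposed fix fails outright. An average over $a$ of $\|\langle a|H|a\rangle\|_2^2$ gives no control of $\sup_a\|H_a\|_\infty$, which must be at most $1$ for every outcome. For instance, $H=|e_1\rangle\langle e_1|\otimes|f\rangle\langle f|$ has $\|H\|_2=1$, yet your uncorrected $H_a$ equals $n|f\rangle\langle f|$ at $a=e_1$.

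The fallback via Ando does not close the gap either. Ando gives separability of $I+H$ and $I-H$ separately, but separable POVMs need not be LOCC. Producing a \emph{joint} LOCC implementation of both effects is precisely the nontrivial point.

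\textbf{The missing idea} is a genuinely two-way mechanism, built in three steps.
\begin{enumerate}
\item For any contraction $X\in M_m$, the POVM $\frac12\begin{pmatrix}I&\pm X\\ \pm X^*&I\end{pmatrix}$ on $\mathbb C^2\otimes\mathbb C^m$ is LOCC. Write $X=\frac12(U_++U_-)$ with unitaries $U_\pm=U(|X|\pm i\sqrt{I-|X|^2})$ and pick one by shared randomness. Bob measures its spectral projections, and Alice then measures $e^{i\theta_\ell}E_{12}+e^{-i\theta_\ell}E_{21}$.
\item Alice's first move is the ``pair'' measurement $\{a_{ii}E_{ii}\}\cup\{a_{ij}E_{ii}+a_{ji}E_{jj}\}_{i<j}$, for a nonnegative row-stochastic $A$ with $a_{ij}a_{ji}\ge\|H_{ij}\|_\infty^2$. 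After this, each branch is a $2\times m$ problem handled by step 1.
\item The endpoints reduce to choosing $A$. For $p=2$, use the Perron--Frobenius rescaling $a_{ij}=x_{ij}r_j/r_i$ of $x_{ij}=\|H_{ij}\|_\infty$, plus diagonal slack. For $p=\infty$, take $a_{ij}=\|H_{ij}\|_\infty$ for $i\ne j$ and use the row-sum bound $\sum_j\|H_{ij}\|_\infty\le n\|H\|_\infty$.
\end{enumerate}

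Your interpolation step, by contrast, is sound in principle. Splitting $H$ into a part with $d\|H_1\|_\infty$ small and a part with $\|H_2\|_2$ small, then using convexity, is dual to the paper's polar argument, where $\|Y\|_2\le1$ and $\|Y\|_1\le d$ give $\|Y\|_q\le d^{1-2/p}$ by log-convexity. It even avoids needing closedness of the admissible set. However, the ``H\"older/Ky Fan estimate'' you invoke still has to be proved, and it is useless until both endpoints are established.
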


\Cref{thm:schatten-balls} strengthens the classic separable-ball theorem of Gurvits and Barnum \cite{GB02} and Ando \cite{Ando04}. They proved that, for arbitrary local dimensions $n$ and $m$, the operator $I+H$ where $H=H^*$ is separable whenever $\norm{H}_2\le1$ \cite{GB02} or $\norm{H}_\infty\le 1/d$ \cite{Ando04}. 
For balanced systems on $\mathbb C^d\otimes\mathbb C^d$, \cite{GB02} further determined the optimal Schatten $p$-radius $d^{2/p-1}$ for every $2\le p\le\infty$. 
Our theorem extends these results in two directions. First, for $2<p<\infty$, it removes the restriction to equal local dimensions and gives the sharp radius $d^{2/p-1}$ for arbitrary $n$ and $m$. Second, it upgrades the conclusion from separability to LOCC implementability: under the same sharp condition, the two effects $(I+H)/2$ and $(I-H)/2$ are jointly realized by an explicit two-round LOCC protocol.

We next turn to one-way protocols. Let $\norm{\cdot}_{A\to B}$ be the distinguishability norm for protocols in which Alice ($n$-dimensional) measures first and sends her outcome to Bob ($m$-dimensional). 
Note that the symmetry between $n$ and $m$ no longer holds.
Our next result determines the optimal ratio asymptotically, uniformly in Bob's dimension.
\begin{theorem}\label{thm:oneway-main}
As the dimension of Alice $n\to\infty$, we have:\footnote{By optimizing our proof, one can improve it to $1+O(1/\log n)$. We omit the analysis in this paper.}
\begin{equation}\label{eq:intro-oneway-main}
  \sup_{m\ge1} R_{A\to B}(n,m)
  = (1+o(1)) n.
\end{equation}
\end{theorem}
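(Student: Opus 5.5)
The lower bound is immediate from \cref{thm:exact-data-hiding}. For every $m\ge n$ we have $R_{A\to B}(n,m)\ge R_{\mathrm{LOCC}}(n,m)=n$, so $\sup_m R_{A\to B}(n,m)\ge n$. The real content is the upper bound: for every $m$ and every Hermitian $h$ we want $\norm{h}_{A\to B}\ge \norm{h}_1/((1+o(1))n)$, with the $o(1)$ independent of $m$.

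\textbf{Protocol.} Alice performs the Gaussian rank-one POVM. Draw $g\in\mathbb C^n$ with i.i.d. standard complex Gaussian entries, normalized so that $\mathbb E\,gg^*=I_n$. The continuous POVM $\{gg^*\,d\mu(g)\}$ is then valid. When the outcome is $g$, Bob holds the conditional operator $h_g=(g^*\otimes I)h(g\otimes I)\in M_m$ and measures its Helstrom projector, which earns $\norm{h_g}_1$. The total bias is therefore
\[
  \norm{h}_{A\to B}\ \ge\ \mathbb E_g\norm{h_g}_1 .
\]
A standard discretization (finite designs, or truncation and quantization of the Gaussian measure) turns this into a finite POVM with arbitrarily small loss, so we may work with the continuous version. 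It remains to show
\[
  \mathbb E_g\norm{(g^*\otimes I)h(g\otimes I)}_1\ \ge\ \frac{\norm{h}_1}{(1+o(1))n}.
\]

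\textbf{Duality step.} Write $\norm{h}_1=\tr(hU)$ with $U=P_+-P_-$ the sign of $h$. Since $\norm{h_g}_1\ge\tr(h_gV_g)$ for any contraction $V_g$ on Bob's side, I would choose $V_g$ as a normalized compression of $U$:
\[
  V_g=\frac{(g^*\otimes I)U(g\otimes I)}{\norm{g}^2}.
\]
This is a contraction, and it vanishes exactly where $g$ is orthogonal to the relevant part of Alice's space. The lower bound becomes
\[
  \mathbb E_g\,\frac{\tr\bigl[(g^*\otimes I)h(g\otimes I)\,(g^*\otimes I)U(g\otimes I)\bigr]}{\norm{g}^2}.
\]
Using rotation invariance, I would decouple $\norm{g}^2$, which concentrates at $n$ with fluctuation $O(\sqrt n)$, from the direction $g/\norm{g}$. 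The remaining expression is a degree-$(2,2)$ Gaussian moment, which Wick's theorem evaluates as $\tr(h\,\mathbb E[gg^*\otimes\cdot\,gg^*])$. The fourth moment $\mathbb E\,g g^*\otimes g g^*=I+F$ (with $F$ the swap) gives two terms. The first is
\[
  \tr_B\bigl(\tr_A h\cdot\tr_A U\bigr),
\]
and the second is $\tr(hU)=\norm{h}_1$. Dividing by $\approx n$ then produces $(\norm{h}_1+\text{cross term})/n$.

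\textbf{Main obstacle.} The cross term $\tr(\tr_A h\,\tr_A U)$ can be negative and is not obviously small. Handling it is the crux, and it is where the Gaussian POVM must beat the na\"ive estimate. Rather than the fixed linear choice $V_g$ above, I would first try $V_g=\operatorname{sign}\bigl((g^*\otimes I)(h+\lambda U)(g\otimes I)\bigr)$, or a truncated version of this. The idea is to exploit the fact that the actual bias $\norm{h_g}_1$ is convex and absolute-value-like: the absolute value kills the sign problem. Concretely, I would lower-bound $\mathbb E\norm{h_g}_1$ by
\[
  \sup_{\norm{X}_\infty\le1}\mathbb E\,\tr(h_g X_g),
\]
with $X_g$ allowed to depend on $g$, and take $X_g=\operatorname{sign}(h_g)$. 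I would then compare $\mathbb E\norm{h_g}_1$ with $\mathbb E\norm{h_g}_2^2/\norm{h_g}_\infty$, using Gaussian concentration of $\norm{h_g}_\infty$ around $\norm{h}_\infty$-type quantities, up to a factor $1+O(\sqrt{\log n/n})$. I expect this to lose only a $1+o(1)$ factor uniformly in $m$, since every estimate involves Alice's dimension only. Making the concentration uniform in $m$ is the step I expect to be hardest. If it fails, I would fall back on splitting $h$ into its high-rank and low-rank parts on Alice's side and treating each part separately.
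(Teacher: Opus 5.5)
Your lower bound and your protocol (Gaussian rank-one POVM for Alice, Helstrom measurement for Bob, so $\norm{h}_{A\to B}\ge\mathbb E_g\norm{h_g}_1$) are the paper's. You also correctly identify the cross term $\tr(\tr_Ah\,\tr_AU)$ as the crux. However, the upper bound is never proved. The cross term is not merely ``not obviously small'': it is fatal for your linear choice. For $h=\lambda E_{11}\otimes P-\epsilon(I_n-E_{11})\otimes P$, with $P$ a rank-one projector and $\epsilon\to0$, one gets $\norm{h}_1+\tr(\tr_Ah\,\tr_AU)\to(3-n)\lambda<0$. Neither proposed remedy works. The choice $V_g=\operatorname{sign}((g^*\otimes I)(h+\lambda U)(g\otimes I))$ is not analyzed, and $X_g=\operatorname{sign}(h_g)$ merely restates $\mathbb E\norm{h_g}_1$. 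The route through $\norm{h_g}_2^2/\norm{h_g}_\infty$ fails for two reasons:
\begin{itemize}
\item The inequality $\norm{X}_1\ge\norm{X}_2^2/\norm{X}_\infty$ loses a factor unbounded in $m$. For $h=I_n\otimes X$ with $X=\operatorname{diag}(1,k^{-1/2},\dots,k^{-1/2})$ ($k$ small entries), $h_g=\norm{g}^2X$, so this quantity is $O(n)$ independently of $k$. The target $\norm{h}_1/n=1+\sqrt k$ is larger once $k>4n^2$.
\item $\norm{h_g}_\infty$ does not concentrate. For $h=E_{11}\otimes B$, $h_g=\abs{g_1}^2B$, whose prefactor is exponentially distributed.
\end{itemize}
So your claim that every estimate involves only Alice's dimension is false for this route.

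The missing idea is to keep the linear test operator but \emph{center} it. Since $\mathbb E h_g=\tr_Ah$, the operator $X_g=(U_g-\tr_AU)/n$ (in your normalization $\mathbb E gg^*=I_n$; this is the paper's $S_U(g)$) satisfies $n\,\mathbb E[gg^*\otimes X_g]=U$. Hence $\mathbb E\tr(h_gX_g)=\norm{h}_1/n$ with no cross term (\cref{lem:exact-gaussian-representation}).

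The price is that $X_g$ is unbounded, and handling this is the real content of the paper's proof. One truncates $X_g$ spectrally at $1$ and bounds the discarded part by $\pm R\preceq n\mathcal K_r(U)/c_r$, with $c_r=r^r/(r-1)^{r-1}\ge r$. The key estimate is $\norm{\mathcal K_r(V)}_\infty\le 2/n+(r+1)!/n^2$ for every Hermitian contraction $V$ (\cref{prop:main-moment-bound}). It comes from the Wick expansion:
\begin{itemize}
\item centering kills every permutation with a fixed point in $\{1,\dots,r\}$;
\item each surviving diagram has norm at most $n^{\operatorname{drop}(\sigma)}$, by a cut-by-cut contraction that uses only $\norm{V}_\infty\le1$, which is where uniformity in $m$ comes from;
\item only two permutations reach $r-1$ drops.
\end{itemize}
Taking $r\to\infty$ with $(r+1)!\le n$ gives $\norm{R}_\infty\le\theta\le 3/c_r=o(1)$. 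In your fixed-$h$ formulation a single truncation step then suffices, since $\mathbb E\norm{h_g}_1\ge\mathbb E\tr(h_g\operatorname{trunc}_1(X_g))=\tr(h(U-R))/n\ge(1-\theta)\norm{h}_1/n$. The paper instead iterates (\cref{lem:iterative-representation}) to obtain an exact bounded representation.
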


This result is also proved by an explicit one-way LOCC protocol. Moreover, Alice's POVM is fixed: she always performs a Gaussian rank-one POVM and sends the outcome to Bob, who then performs a corresponding conditional POVM.

For local operations without communication, the best previous upper bound was $R_{\mathrm{LO}}(n,m)\le\sqrt2\,d$ \cite{LS26}. We obtain a smaller asymptotic constant.
\begin{theorem}\label{thm:lo-main}
Let $d=\min\{n,m\}$. Then
\begin{equation}\label{eq:intro-lo-main}
  R_{\mathrm{LO}}(n,m)
  \le
  \frac{\pi\sqrt3}{4}\,d+O(1).
\end{equation}
Here the $O(1)$ term only depends on $d$.
\end{theorem}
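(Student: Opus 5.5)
The plan is to argue by duality and then use a Gaussian sign-rounding construction, in the spirit of Grothendieck/Krivine rounding. First reduce to $n=m=d$. If, say, $n\ge m=d$, then any $h$ can be compressed to Alice's side only at a loss, so instead of compressing we work directly with a product measurement that is adapted to $h$; the $O(1)$ term will absorb any boundary effects. The dual of $\norm{\cdot}_{\mathrm{LO}}$ has unit ball equal to the closed convex hull of operators $\sum_{a,b}\varepsilon_{ab}A_a\otimes B_b$, where $(A_a)$ and $(B_b)$ are local POVMs, possibly with continuous outcome sets, and $\abs{\varepsilon_{ab}}\le1$. Since $\norm{h}_1=\tr(hS)$ for some $S$ with $\norm{S}_\infty\le1$, it suffices to show that for every Hermitian $h$ there is a local product POVM and a bounded sign function whose associated operator $K$ satisfies $\tr(hK)\ge \bigl(\tfrac{4}{\pi\sqrt3}d+O(1)\bigr)^{-1}\norm{h}_1$.

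For the concrete test, Alice and Bob each perform the Gaussian rank-one POVM $\{x x^*\,dx\}$ with $x\sim N_{\mathbb C}(0,I/n)$ rescaled so that $\mathbb E[xx^*]=I$, and similarly for $y$. Given outcomes $x$ and $y$, the relevant quantity is the bi-quadratic form $f(x,y)=\langle x\otimes y|h|x\otimes y\rangle$, and the bias is $\mathbb E\abs{f(x,y)}$. We therefore need to lower bound $\mathbb E\abs{f}$ by $c\,\norm{h}_1/d$. I would bound $\mathbb E\abs{f}\ge \mathbb E[f\,\mathrm{sgn}(g)]$, where $g(x,y)=\langle x\otimes y|S|x\otimes y\rangle$ and $S=P_+-P_-$ is the sign of $h$. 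Then $\mathbb E[f\,\mathrm{sgn}(g)]=\tr(hK)$ with $K=\mathbb E[\mathrm{sgn}(g)\,xx^*\otimes yy^*]$. The goal is to show $K\succeq$, on the relevant spectral subspaces, $\frac{c}{d}S$ plus an error that contributes only $O(1)$.

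The computation of $K$ is a Gaussian Price/Grothendieck-type identity. Conditioned on $y$, the variable $g$ is a Hermitian quadratic form in the Gaussian vector $x$. Its correlation with the quadratic form $\langle x|(\,\cdot\,)|x\rangle$ can be evaluated through the arcsine-type formula $\mathbb E[\mathrm{sgn}(u)v]=\sqrt{2/\pi}\,\mathbb E[uv]/\sigma_u$ in the Gaussian-approximation regime, which produces the factor $2/\pi$. The remaining factor $\sqrt3/2$ should come from the fourth-moment normalization $\mathbb E\abs{\langle x|A|x\rangle}^2$ versus $\tr(A)^2+\tr(A^2)$ for complex Gaussians, together with averaging over $y$. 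Combining these, the plan is to show $\tr(hK)\ge \frac{4}{\pi\sqrt3}\frac{\norm{h}_1}{d}(1-O(1/d))$.

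I expect the main obstacle to be the third step. The forms $g$ are chi-square-like rather than Gaussian, so the sign correlation has to be controlled exactly or through a CLT with explicit $O(1/d)$ errors, uniformly in $h$ and in the unbalanced case $n\ne m$. I would first try to handle this by writing $S$ in its spectral decomposition, bounding the deviation of $\mathrm{sgn}(g)$ from its Gaussian surrogate via a fourth-moment (Hölder) estimate, and checking that this deviation costs only an additive constant in the ratio.
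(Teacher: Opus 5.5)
There is a genuine gap, and it lies in the core construction rather than in the error analysis of your third step.

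Both of your local POVMs are fixed Gaussian rank-one POVMs, independent of $h$. So the bias of your protocol on $h$ is exactly $\mathbb E\abs{f(x,y)}$; the rounding with $\operatorname{sgn}(g)$ only lower-bounds this number and cannot increase it. This number is provably too small.

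Take $n=m=d$ even, $Z=\mathrm{diag}(I_{d/2},-I_{d/2})$ and $h=Z\otimes Z$. Then $\norm{h}_1=d^2$, $S=h$ and $g=f$, so your rounding is already optimal for this $h$. Here $f(x,y)=\langle x|Z|x\rangle\langle y|Z|y\rangle$ with $x,y$ independent, and $\langle x|Z|x\rangle$ is a difference of two independent $\mathrm{Gamma}(d/2,1)$ variables. Hence
\[
  \mathbb E\abs{f}
  =\biggl(\frac{2\,\Gamma(\frac{d+1}{2})}{\sqrt\pi\,\Gamma(\frac d2)}\biggr)^2
  <\frac{2d}{\pi},
  \qquad
  \frac{\norm{h}_1}{\mathbb E\abs{f}}>\frac{\pi}{2}\,d .
\]
Since $\pi/2\approx1.571>\pi\sqrt3/4\approx1.360$, the inequality $\mathbb E\abs{f}\ge\frac{4}{\pi\sqrt3}\frac{\norm{h}_1}{d}(1-O(1/d))$ that you aim for is false.

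The unbalanced case is worse. For $n=1$ and $h=P-Q$ with $P,Q$ complementary projections of rank $m/2$, one has $\norm{h}_1=m$ but $\mathbb E\abs{\langle y|h|y\rangle}\le(\tr h^2)^{1/2}=\sqrt m$. So your test has ratio at least $\sqrt m$ while $d=1$, and more generally $h=Z\otimes(P-Q)$ gives ratio at least $\sqrt{nm}$. Your remark about using a measurement ``adapted to $h$'' to reduce to $n=m=d$ names exactly the missing ingredient, but the concrete test you then analyze is not adapted to $h$ at all. Separately, the arcsine-type identity needs joint Gaussianity, which fails non-uniformly for low-rank $h$, and the factor $\sqrt3/2$ is never actually derived.

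The paper's proof makes Bob's measurement depend on $h$ and pays only Alice's dimension $n=d$. It chains three estimates:
\begin{enumerate}
\item The block bound $\norm{h}_1\le\sqrt n\,\norm{(h_{ij})}_{L_1[R+C]}$.
\item A noncommutative Khintchine inequality, $\norm{(h_{ij})}_{L_1[R+C]}\le(\sqrt{3n}+O(n^{-1/2}))\,\mathbb E_w\norm{\sum_{ij}a(w)_{ij}h_{ij}}_1+n^{-1/2}\norm{\tr_A h}_1$. Here $a(w)=wDw^*$ ranges over the Haar orbit of the nilpotent $D=(J_3\otimes I_r)\oplus0_s$. The inequality comes from a fourth-moment bound via truncation and iteration, and the $\sqrt3$ is $\sqrt{\beta_n/(n\lambda_n)}$ with $\beta_n\to2$ and $\lambda_n\approx 2/(3n)$.
\item For each fixed $w$, an explicit LO measurement. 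Alice uses the phase POVM $w\bigl((\ket{\theta}\!\bra{\theta}\otimes I_r)\oplus I_s\bigr)w^*$, whose first Fourier moment is $a(w)$. Bob measures the spectral projections of the polar unitary of $H(w)=\sum_{ij}a(w)_{ij}h_{ij}$. The square-wave sign $\operatorname{sgn}\cos(\theta-\phi_b)$ then extracts $\frac4\pi\norm{H(w)}_1$.
\end{enumerate}
So the factor $\pi/4$ is a Fourier coefficient of a square wave, not an arcsine law, and the $h$-dependence of Bob's measurement is essential.
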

Together with $R_{\mathrm{LO}}(n,m)\ge R_{\mathrm{SEP}}(n,m)=d$, this gives $d\le R_{\mathrm{LO}}(n,m)\le(\pi\sqrt3/4)d+O(1)$. The proof bounds the trace norm by an average over Haar conjugates of a nilpotent matrix and realizes each term in the average by an explicit product measurement.

\section{Maximal separable and LOCC balls}\label{sec:balls}

Given $H\in M_n\otimes M_m$ ($H=H^*$) and a measurement class $\mathcal M$, we call $H$ \emph{$\mathcal M$-admissible} if the binary POVM in \cref{eq:intro-binary-povm} belongs to $\mathcal M$.
In this section, we prove \cref{thm:schatten-balls}: for every $p\in[2,\infty]$, the operator $H$ is two-round-LOCC-admissible whenever $\norm{H}_p\le d^{\frac2p-1}$.

We first consider the two endpoints $p=2$ and $p=\infty$; the remaining Schatten balls are obtained by interpolation.
For both endpoints, we first treat the case where Alice's system has dimension two by constructing an LOCC implementation of the binary measurement. We then combine these two-dimensional protocols to treat arbitrary local dimensions.

\subsection{The Hilbert--Schmidt and operator-norm balls}

We first record a two-dimensional block construction that will be used in each branch of the protocol.
\begin{lemma}\label{lem:two-by-m}
If $X\in M_m$ and $\norm{X}_\infty\le1$, then the binary POVM
\begin{equation}\label{eq:two-by-m-povm}
  \left\{
  \frac12
  \begin{pmatrix}
    I & X\\
    X^* & I
  \end{pmatrix},
  \frac12
  \begin{pmatrix}
    I & -X\\
    -X^* & I
  \end{pmatrix}
  \right\}
\end{equation}
on $\mathbb C^2\otimes\mathbb C^m$ is implementable by a (Bob-first) one-way LOCC.
\end{lemma}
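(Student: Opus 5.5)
The plan is to reduce to the case where $X$ is unitary, and then give an explicit protocol in which Bob measures in the eigenbasis of $X$.

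\textbf{Reduction to unitary $X$.} Take a singular value decomposition $X=U\Sigma V^*$ with $0\le\Sigma\le I$. Define
\[
W_\pm=U\bigl(\Sigma\pm i\sqrt{I-\Sigma^2}\bigr)V^* .
\]
Both $W_\pm$ are unitary, because $\Sigma\pm i\sqrt{I-\Sigma^2}$ is a diagonal unitary, and $X=\tfrac12(W_++W_-)$. Both effects in \cref{eq:two-by-m-povm} are affine in $X$. Hence the POVM for $X$ is the equal-weight mixture of the POVMs for $W_+$ and $W_-$. A mixture of Bob-first one-way LOCC protocols is again Bob-first one-way LOCC, by the following implementation:
\begin{itemize}
\item Bob flips a fair coin $c\in\{+,-\}$.
\item He runs the protocol for $W_c$ and includes $c$ in his message to Alice.
\item The final binary outcome is reported as before.
\end{itemize}
So it suffices to treat unitary $X$.

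\textbf{Protocol for unitary $X=W$.} Diagonalize $W=\sum_{k=1}^m e^{i\theta_k}\ketbra{w_k}$ with an orthonormal eigenbasis $\{\ket{w_k}\}$.
\begin{itemize}
\item Bob measures in the basis $\{\ket{w_k}\}$ and sends $k$ to Alice.
\item Alice measures her qubit in the orthonormal basis $\ket{\phi_k^\pm}=(\ket0\pm e^{-i\theta_k}\ket1)/\sqrt2$.
\item They output $\pm$ according to Alice's result.
\end{itemize}
A direct computation gives
\[
\ketbra{\phi_k^\pm}=\tfrac12\begin{pmatrix}1&\pm e^{i\theta_k}\\ \pm e^{-i\theta_k}&1\end{pmatrix}.
\]
The effect for outcome $\pm$ is therefore
\[
\sum_k \ketbra{\phi_k^\pm}\otimes\ketbra{w_k}=\tfrac12\begin{pmatrix} I & \pm W\\ \pm W^* & I\end{pmatrix},
\]
where the block structure refers to Alice's qubit. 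This is exactly the POVM in \cref{eq:two-by-m-povm}.

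\textbf{Main obstacle.} The only point needing care is the reduction step. A general contraction $X$ is not normal, so there is no common basis in which Bob can measure. Writing $X$ as the average of two unitaries handles this, at the cost of one bit of private randomness on Bob's side. That bit can be absorbed into his classical message, so the protocol stays one-way.
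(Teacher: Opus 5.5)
Your proof is correct and follows essentially the same route as the paper. For unitary $W$, Bob measures in the eigenbasis of $W$ and Alice measures the phase-dependent qubit basis; for a general contraction, $X$ is the average of the two unitaries $U(\Sigma\pm i\sqrt{I-\Sigma^2})V^*$, which is the paper's polar-form construction $U(\abs{X}\pm i\sqrt{I-\abs{X}^2})$ written via the SVD, and your realization of the mixture by Bob's private coin sent along with his outcome is equally valid as the paper's shared randomness.
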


The separability counterpart of this lemma was proved by Gurvits and Barnum \cite[Proposition~1]{GB02}: they showed that the above POVM is in SEP. 
\Cref{lem:two-by-m} strengthens this conclusion by realizing the two effects jointly through an LOCC protocol.

\begin{proof}
First assume $X=U$ is unitary, and write its spectral decomposition as
\begin{equation}
  U=\sum_\ell e^{i\theta_\ell}Q_\ell.
\end{equation}
Bob performs the projective measurement $(Q_\ell)_\ell$ and sends the outcome to Alice. Conditional on $\ell$, Alice measures the binary observable
\begin{equation}
  A_{\theta_\ell}
  =e^{i\theta_\ell}E_{12}+e^{-i\theta_\ell}E_{21}
\end{equation}
on $\mathbb C^2$. This observable is Hermitian and squares to the identity. The plus effect of the resulting protocol is
\begin{equation}
  \sum_\ell \frac12(I+A_{\theta_\ell})\otimes Q_\ell
  =
  \frac12
  \begin{pmatrix}
    I & U\\
    U^* & I
  \end{pmatrix},
\end{equation}
and the minus effect is its complement.

For a general contraction $X$, choose a unitary polar factor $X=U\abs{X}$ and set
\begin{equation}
  U_\pm=U\left(\abs{X}\pm i\sqrt{I-\abs{X}^2}\right).
\end{equation}
Then $U_\pm$ are unitary and $X=(U_++U_-)/2$. Shared randomness is used to select one of these unitaries, after which the preceding protocol is applied. Averaging the two resulting POVMs gives \cref{eq:two-by-m-povm}.
\end{proof}

The next proposition gives a sufficient condition for combining the two-dimensional constructions into arbitrary dimensions. 
To state the proposition, we decompose:
\begin{equation}\label{eq:block-decomposition-H}
  H=\sum_{i,j=1}^n E_{ij}\otimes H_{ij}
  \in M_n\otimes M_m,
  \qquad H=H^*.
\end{equation}
Thus $H_{ji}=H_{ij}^*$.

\begin{proposition}\label{thm:weighted-block}
Suppose that there is an entrywise nonnegative matrix $A=(a_{ij})\in M_n$ such that
\begin{equation}\label{eq:weighted-row-sums}
  \sum_j a_{ij}=1
  \qquad (1\le i\le n)
\end{equation}
and
\begin{equation}\label{eq:weighted-products}
  a_{ij}a_{ji}\ge\norm{H_{ij}}_\infty^2
  \qquad (1\le i,j\le n).
\end{equation}
Then $H$ is two-round LOCC-admissible.
\end{proposition}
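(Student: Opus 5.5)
The plan is to build the protocol in two stages. First, Alice performs a coarse local measurement that chooses a pair of basis indices $\{i,j\}$, or a single index $i$, and collapses her system onto the span of those basis vectors. Second, conditional on that outcome, the two parties run the $2\times m$ protocol of \cref{lem:two-by-m} on the selected block.

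\textbf{Alice's first measurement.} Its outcomes are the unordered pairs $\{i,j\}$ with $i<j$, together with the singletons $\{i\}$. The Kraus operators are
\begin{equation*}
  K_{\{i,j\}}=\sqrt{a_{ij}}\,\ket{i}\!\bra{i}+\sqrt{a_{ji}}\,\ket{j}\!\bra{j},
  \qquad
  K_{\{i\}}=\sqrt{a_{ii}}\,\ket{i}\!\bra{i}.
\end{equation*}
Then $\sum K^*K=\sum_i\bigl(\sum_j a_{ij}\bigr)E_{ii}=I$ by \cref{eq:weighted-row-sums}, so this is a valid instrument. After outcome $\{i,j\}$, Alice's post-measurement state lies in $\operatorname{span}\{\ket i,\ket j\}\cong\mathbb C^2$, and she sends the outcome to Bob.

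\textbf{Conditional measurement on a pair $\{i,j\}$.} If $a_{ij}a_{ji}>0$, set $X_{ij}=H_{ij}/\sqrt{a_{ij}a_{ji}}$. By \cref{eq:weighted-products} we have $\norm{X_{ij}}_\infty\le1$, so \cref{lem:two-by-m} applies. The parties run its Bob-first protocol on $\mathbb C^2\otimes\mathbb C^m$: Bob measures and sends his outcome to Alice, and Alice finishes with a local measurement. If $a_{ij}a_{ji}=0$, then $H_{ij}=0$ by \cref{eq:weighted-products}, and we use $X_{ij}=0$, which is implemented trivially by an unbiased coin.

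Pulling the plus effect of \cref{lem:two-by-m} back through $K_{\{i,j\}}\otimes I$ gives the contribution
\begin{equation*}
  \tfrac12\bigl(a_{ij}E_{ii}+a_{ji}E_{jj}\bigr)\otimes I
  +\tfrac12\sqrt{a_{ij}a_{ji}}\,\bigl(E_{ij}\otimes X_{ij}+E_{ji}\otimes X_{ij}^*\bigr)
  =\tfrac12\bigl(a_{ij}E_{ii}+a_{ji}E_{jj}\bigr)\otimes I+\tfrac12\bigl(E_{ij}\otimes H_{ij}+E_{ji}\otimes H_{ji}\bigr).
\end{equation*}
This uses $H_{ji}=H_{ij}^*$.

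\textbf{Conditional measurement on a singleton $\{i\}$.} Bob alone measures $\{\tfrac12(I\pm H_{ii}/a_{ii})\}$. This is a valid POVM since $a_{ii}^2\ge\norm{H_{ii}}_\infty^2$; if $a_{ii}=0$ then $H_{ii}=0$ and Bob flips a fair coin. The resulting plus contribution is $\tfrac12 E_{ii}\otimes(a_{ii}I+H_{ii})$.

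\textbf{Summing the contributions.} Adding the pair and singleton terms, the identity parts combine to $\tfrac12\sum_i\bigl(\sum_j a_{ij}\bigr)E_{ii}\otimes I=\tfrac12 I$. The remaining terms reproduce every block $E_{ij}\otimes H_{ij}$ exactly once, so the total plus effect is $\tfrac12(I+H)$. The minus effect is automatically $\tfrac12(I-H)$, because each branch is a complete binary POVM.

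\textbf{Communication pattern.} The protocol uses Alice$\to$Bob (the pair label) followed by Bob$\to$Alice (Bob's spectral outcome inside \cref{lem:two-by-m}), and then a final local measurement by Alice. That is two rounds of communication, plus shared randomness, which the lemma already uses and which can be generated by Alice and included in her first message.

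\textbf{Main obstacle.} The nontrivial point is the bookkeeping that makes the first-stage instrument both complete and correctly weighted. The diagonal weights $a_{ij}$ must appear in the $E_{ii}$ slot and $a_{ji}$ in the $E_{jj}$ slot, so that the off-diagonal coefficient comes out as exactly $\sqrt{a_{ij}a_{ji}}$. This is precisely where the asymmetric product condition \cref{eq:weighted-products} and the row-sum condition \cref{eq:weighted-row-sums} enter. The degenerate cases with zero weights also need separate but routine treatment.
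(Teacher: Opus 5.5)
Your proposal is correct and follows the paper's proof essentially step for step. You use the same Kraus operators $\sqrt{a_{ij}}E_{ii}+\sqrt{a_{ji}}E_{jj}$ and $\sqrt{a_{ii}}E_{ii}$, the same Bob-only binary POVM $\{\tfrac12(I\pm H_{ii}/a_{ii})\}$ on diagonal branches, and \cref{lem:two-by-m} applied to $H_{ij}/\sqrt{a_{ij}a_{ji}}$ on off-diagonal branches, giving the same Alice$\to$Bob$\to$Alice pattern; your treatment of the zero-weight cases and of the shared randomness is also sound.
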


\begin{proof}
The condition \cref{eq:weighted-row-sums} enables the decomposition
\begin{equation}\label{eq:weighted-decomposition}
\begin{aligned}
  I\pm H
  ={}&
  \sum_i E_{ii}\otimes(a_{ii}I\pm H_{ii})\\
  &+\sum_{i<j}
  \left[
  (a_{ij}E_{ii}+a_{ji}E_{jj})\otimes I
  \pm E_{ij}\otimes H_{ij}
  \pm E_{ji}\otimes H_{ij}^*
  \right].
\end{aligned}
\end{equation}
We realize the terms in this decomposition as branches of an LOCC protocol.

Alice first performs the POVM whose effects, indexed by a pair $(i,j)$, are
\begin{equation}\label{eq:alice-pair-povm}
  \{a_{ii}E_{ii}\}_{i=1}^n
  \cup
  \{a_{ij}E_{ii}+a_{ji}E_{jj}\}_{1\le i<j\le n}.
\end{equation}
Indeed, the total coefficient of $E_{ii}$ is $\sum_j a_{ij}=1$. One choice of Kraus operators is
\begin{equation}
  K_{ii}=\sqrt{a_{ii}}E_{ii},
  \qquad
  K_{ij}=\sqrt{a_{ij}}E_{ii}+\sqrt{a_{ji}}E_{jj}
  \quad(i<j).
\end{equation}

For a diagonal outcome $(i,i)$, condition \cref{eq:weighted-products} gives $a_{ii}\ge\norm{H_{ii}}_\infty$. 
Bob then performs the binary POVM
\begin{equation}
  \left\{
  \frac12\left(I\pm\frac{H_{ii}}{a_{ii}}\right)
  \right\},
\end{equation}
with the branch omitted when $a_{ii}=0$. Its contribution to the two global effects is
\begin{equation}
  \frac12 E_{ii}\otimes(a_{ii}I\pm H_{ii}).
\end{equation}

For an off-diagonal outcome $(i,j)$ $(i\neq j)$, we define
\begin{equation}
  Y_{ij}
  =
  \begin{cases}
    H_{ij}/\sqrt{a_{ij}a_{ji}},&a_{ij}a_{ji}>0,\\
    0,&a_{ij}a_{ji}=0.
  \end{cases}
\end{equation}
In the second case \cref{eq:weighted-products} forces $H_{ij}=0$, and in both cases $\norm{Y_{ij}}_\infty\le1$. By \cref{lem:two-by-m}, on Alice's $(i,j)$ subspace the POVM with effects
\begin{equation}
  M_{ij}^{\pm}
  =
  \frac12\left(
  E_{ii}\otimes I+E_{jj}\otimes I
  \pm E_{ij}\otimes Y_{ij}
  \pm E_{ji}\otimes Y_{ij}^*
  \right)
\end{equation}
is LOCC-implementable. 
After Alice's first operation, this branch contributes
\begin{equation}
\begin{aligned}
  (K_{ij}^*\otimes I)M_{ij}^{\pm}(K_{ij}\otimes I)
  =\frac12\left[
  (a_{ij}E_{ii}+a_{ji}E_{jj})\otimes I
  \pm E_{ij}\otimes H_{ij}
  \pm E_{ji}\otimes H_{ij}^*
  \right].
\end{aligned}
\end{equation}

Summing the diagonal and off-diagonal branches gives $\frac12(I\pm H)$ by \cref{eq:weighted-decomposition}. 
Moreover, the protocol uses at most two messages, with communication order Alice to Bob in the diagonal branches, and Alice to Bob to Alice in the off-diagonal branches.
\end{proof}

We now apply \cref{thm:weighted-block} at the two endpoints $p=2$ and $p=\infty$. 
The first recovers the Gurvits–Barnum Hilbert–Schmidt ball result (see \cite[Theorem 1]{GB02}) with the stronger conclusion of LOCC implementability. 
The second recovers Ando's result on the operator-norm ball (see \cite[Corollary 8.4]{Ando04}) with the same stronger conclusion.
\begin{theorem}\label{thm:hs-ball}
If $H=H^*$ and $\norm{H}_2\le1$, then $H$ is two-round-LOCC-admissible.
\end{theorem}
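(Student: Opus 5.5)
We apply \cref{thm:weighted-block}, so the task is to construct an entrywise nonnegative matrix $A=(a_{ij})$ with unit row sums and $a_{ij}a_{ji}\ge\norm{H_{ij}}_\infty^2$. We set $b_{ij}=\norm{H_{ij}}_2$, the Hilbert--Schmidt norm of the $(i,j)$ block. Then $b_{ij}=b_{ji}$, $b_{ij}\ge\norm{H_{ij}}_\infty$, and
\begin{equation}
  \sum_{i,j} b_{ij}^2=\norm{H}_2^2\le 1 .
\end{equation}
It therefore suffices to find a nonnegative $A$ with row sums at most $1$ and $a_{ij}a_{ji}\ge b_{ij}^2$. Any deficit in the $i$-th row sum is then added to the diagonal entry $a_{ii}$; this only increases $a_{ii}^2$ and so preserves \cref{eq:weighted-products}.

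The natural first attempt is the symmetric choice $a_{ij}=b_{ij}$. This gives $a_{ij}a_{ji}=b_{ij}^2$ exactly, but the row sum is $\sum_j b_{ij}$, which can be as large as $\sqrt n$ times the row's $\ell^2$ norm. The symmetric choice therefore fails in general, and we use an asymmetric weighting instead. Let $r_i=\bigl(\sum_j b_{ij}^2\bigr)^{1/2}$ be the $\ell^2$ norm of row $i$, so that $\sum_i r_i^2\le1$. We set
\begin{equation}
  a_{ij}=\frac{b_{ij}^2}{r_i r_j}\cdot c_{ij},
\end{equation}
and look for a choice with $a_{ij}a_{ji}=b_{ij}^2$. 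The cleanest candidate is
\begin{equation}
  a_{ij}=\frac{b_{ij}\,r_j}{s},
\end{equation}
where $s$ is a normalizing constant to be determined. This gives
\begin{equation}
  a_{ij}a_{ji}=\frac{b_{ij}^2 r_i r_j}{s^2},
\end{equation}
which is not what we want. We therefore switch to the ratio form
\begin{equation}
  a_{ij}=b_{ij}\sqrt{\frac{w_j}{w_i}}
\end{equation}
with positive weights $w_i$. This automatically gives $a_{ij}a_{ji}=b_{ij}^2$, and the row-sum condition becomes
\begin{equation}
  \sum_j b_{ij}\sqrt{w_j}\le\sqrt{w_i}.
\end{equation}
In other words, the vector $v=(\sqrt{w_i})_i$ must satisfy $Bv\le v$ entrywise, where $B=(b_{ij})$. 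This is a Perron--Frobenius condition. Since $B$ is nonnegative and symmetric, its spectral radius satisfies
\begin{equation}
  \rho(B)=\norm{B}_\infty\le\norm{B}_2=\norm{H}_2\le1,
\end{equation}
so the nonnegative Perron eigenvector $v$ satisfies $Bv=\rho(B)v\le v$.

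The remaining step, and the main obstacle, is handling zero entries of the Perron vector. If $B$ is reducible, $v$ may vanish on some indices, and then $w_j/w_i$ is undefined. We would handle this by decomposing $B$ into irreducible diagonal blocks, which is possible because $B$ is symmetric. On each block we take that block's Perron vector, which is strictly positive and has eigenvalue at most $\norm{B}_\infty\le1$. For indices in different blocks we have $b_{ij}=0$ and simply set $a_{ij}=0$. Isolated indices with $b_{ii}=0$ are assigned $a_{ii}=1$. Alternatively, we can perturb $B$ to $B+\varepsilon J$, where $J$ is the all-ones matrix, rescale by $(1+\varepsilon n)^{-1}$ so that $\norm{\cdot}_2$ remains at most $1$, and pass to a limit using compactness of the set of admissible $A$ (the defining conditions are closed).

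With $A$ in hand, and after padding the diagonal so that the row sums equal exactly $1$, both \cref{eq:weighted-row-sums} and \cref{eq:weighted-products} hold. \Cref{thm:weighted-block} then yields two-round LOCC-admissibility of $H$.
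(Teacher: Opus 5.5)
Your proposal is correct and follows essentially the paper's proof. Both arguments take a symmetric nonnegative matrix of block norms with Frobenius norm at most $1$, use Perron--Frobenius on its irreducible components to get a strictly positive $v$ with $Bv\le v$, rescale as $a_{ij}=b_{ij}v_j/v_i$, pad the diagonal, and invoke \cref{thm:weighted-block}. The only differences are cosmetic: you use $\|H_{ij}\|_2$ where the paper uses $\|H_{ij}\|_\infty$, you offer an optional perturbation-and-compactness alternative for reducibility, and the abandoned intermediate ansätze in your write-up could simply be deleted.
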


\begin{proof}
Define the real symmetric nonnegative matrix $M=(x_{ij})_{i,j=1}^n$ by
\begin{equation}
  x_{ij}=\norm{H_{ij}}_\infty.
\end{equation}
Since the operator norm is bounded by the Hilbert--Schmidt norm, we have:
\begin{equation}\label{eq:block-norm-matrix-hs}
  \norm{M}_2^2
  =\sum_{i,j}\norm{H_{ij}}_\infty^2
  \le\sum_{i,j}\norm{H_{ij}}_2^2
  =\norm{H}_2^2
  \le1.
\end{equation}
Now we apply Perron--Frobenius theory to the matrix $M$.
After a simultaneous permutation of rows and columns, $M$ is a direct sum of irreducible nonnegative matrices. Perron--Frobenius theory gives, on each block $b$, a strictly positive vector $r^{(b)}$ satisfying
\begin{equation}
  M_br^{(b)}=\rho(M_b)r^{(b)},
  \qquad
  \rho(M_b)\le\norm{M_b}_2\le1.
\end{equation}
Combining these vectors gives $r\in\mathbb R^n$ with strictly positive entries and $Mr\le r$ entrywise.

We now rescale the entries of $M$ so that the products $x_{ij}x_{ji}$ are unchanged while the row sums are controlled by $Mr\le r$. Set
\begin{equation}
  b_{ij}=x_{ij}\frac{r_j}{r_i}.
\end{equation}
Then
\begin{equation}
  b_{ij}b_{ji}=x_{ij}^2=\norm{H_{ij}}_\infty^2,
  \qquad
  \sum_jb_{ij}=\frac{(Mr)_i}{r_i}\le1.
\end{equation}
Let $\varepsilon_i=1-\sum_jb_{ij}$ and define $a_{ij}=b_{ij}+\varepsilon_i\delta_{ij}$. The matrix $A=(a_{ij})$ is nonnegative, its rows sum to one, and
\begin{equation}
  a_{ij}a_{ji}\ge b_{ij}b_{ji}=\norm{H_{ij}}_\infty^2.
\end{equation}

The conclusion of the theorem then follows from \cref{thm:weighted-block}.
\end{proof}

\begin{theorem}\label{thm:operator-ball}
If $H=H^*$ and $\norm{H}_\infty\le1/d$, then $H$ is two-round-LOCC-admissible.
\end{theorem}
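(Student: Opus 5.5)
The plan is to apply \cref{thm:weighted-block} with the simplest possible weight matrix, the uniform one. By symmetry of the parties we may first reduce to the case $n\le m$, so that $d=n$.

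\textbf{Reduction to $n\le m$.} If $m<n$, conjugate $H$ by the swap unitary $\mathbb C^n\otimes\mathbb C^m\to\mathbb C^m\otimes\mathbb C^n$. This gives $\tilde H\in M_m\otimes M_n$ with $\norm{\tilde H}_\infty=\norm{H}_\infty\le 1/d$, and now the first factor has the smaller dimension $m=d$. A two-round LOCC protocol for $\tilde H$ turns into one for $H$ by exchanging the roles of Alice and Bob. The class of two-round LOCC protocols is invariant under this relabeling, since the communication pattern A$\to$B$\to$A simply becomes B$\to$A$\to$B. So we may assume $n\le m$ and $d=n$.

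\textbf{Block bound.} Write $H=\sum_{i,j}E_{ij}\otimes H_{ij}$ as in \cref{eq:block-decomposition-H}. Each block is a compression of $H$:
\begin{equation*}
  H_{ij}=(e_i^*\otimes I)\,H\,(e_j\otimes I).
\end{equation*}
Since $e_i^*\otimes I$ and $e_j\otimes I$ are contractions, this gives $\norm{H_{ij}}_\infty\le\norm{H}_\infty\le 1/n$ for all $i,j$.

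\textbf{Choice of weights.} Take $a_{ij}=1/n$ for all $i,j$. The matrix $A$ is entrywise nonnegative and each row sums to one, which is \cref{eq:weighted-row-sums}. Moreover
\begin{equation*}
  a_{ij}a_{ji}=1/n^2\ge\norm{H_{ij}}_\infty^2,
\end{equation*}
which is \cref{eq:weighted-products}. \Cref{thm:weighted-block} then shows that $H$ is two-round-LOCC-admissible.

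There is no real obstacle once \cref{thm:weighted-block} is available. The only point needing care is the symmetry reduction: we must check that the party-swapped protocol still counts as two-round LOCC. We must also note why the reduction is needed at all. The uniform weights $1/n$ only match the bound $1/d$ when $d=n$; if $n>m$, the blocks would only satisfy $\norm{H_{ij}}_\infty\le 1/m$, which is too weak.
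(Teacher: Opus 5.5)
Your proof is correct and follows the paper's route. Like the paper, you reduce to $n\le m$, bound each block $H_{ij}$ as a compression of $H$, and invoke \cref{thm:weighted-block}; the only difference is the weights. The paper takes $a_{ij}=\norm{H_{ij}}_\infty$ for $i\ne j$ and puts the slack $a_{ii}=1-\sum_{j\ne i}a_{ij}$ on the diagonal, which works under the weaker condition $\sum_j\norm{H_{ij}}_\infty\le1$ for every $i$, whereas your uniform choice $a_{ij}=1/n$ is simpler and suffices for the stated theorem.
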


\begin{proof}
We may assume $n\le m$, so $d=n$. For $i\ne j$, put $a_{ij}=\norm{H_{ij}}_\infty$, and set
\begin{equation}
  a_{ii}=1-\sum_{j\ne i}a_{ij}.
\end{equation}
Since each block is a compression of $H$,
\begin{equation}
  \norm{H_{ii}}_\infty+
  \sum_{j\ne i}\norm{H_{ij}}_\infty
  \le n\norm{H}_\infty
  \le1.
\end{equation}
Thus $a_{ii}\ge\norm{H_{ii}}_\infty\ge0$. Hence $A=(a_{ij})$ is entrywise nonnegative, and every row of $A$ sums to one. For $i\ne j$, we have $a_{ij}a_{ji}=\norm{H_{ij}}_\infty^2$, while the preceding estimate gives $a_{ii}^2\ge\norm{H_{ii}}_\infty^2$. Therefore \cref{thm:weighted-block} applies.
\end{proof}

\subsection{Schatten balls}

We finish the proof of \cref{thm:schatten-balls} by interpolating between \cref{thm:hs-ball,thm:operator-ball}.

\begin{proof}[Proof of \Cref{thm:schatten-balls}]
Let $\mathcal K$ be the set of Hermitian operators $H$ that are two-round-LOCC admissible. This set is convex, symmetric, and closed\footnote{By \cite[Corollary~3]{CLMOW14}, fixed-round LOCC instruments with fixed number of outcomes form a compact set, even if intermediate measurements may use an unbounded number of outcomes.}.
 By \cref{thm:hs-ball,thm:operator-ball},
\begin{equation}\label{eq:endpoint-ball-inclusions}
  \{H=H^*: \norm{H}_2\le1\}\subseteq\mathcal K,
  \qquad
  \{H=H^*: \norm{H}_\infty\le1/d\}\subseteq\mathcal K.
\end{equation}

Let $\mathcal K^\circ$ be the polar of $\mathcal K$ with respect to the trace pairing:
\begin{equation}
\mathcal K^\circ=\{Y=Y^*: \abs{\tr(YH)}\le1,~~\forall H\in\mathcal K\}.
\end{equation} 
The two inclusions in \cref{eq:endpoint-ball-inclusions} imply that every $Y\in\mathcal K^\circ$ satisfies
\begin{equation}\label{eq:polar-endpoint-bounds}
  \norm{Y}_2\le1,
  \qquad
  \norm{Y}_1\le d.
\end{equation}
Let $q$ be conjugate to $p$, and choose $0\le\theta\le1$ so that
\begin{equation}
  \frac1q=\frac{1-\theta}{2}+\frac{\theta}{1}.
\end{equation}
Then $\theta=1-2/p$. Log-convexity of the Schatten norms and \cref{eq:polar-endpoint-bounds} give
\begin{equation}
  \norm{Y}_q
  \le\norm{Y}_2^{1-\theta}\norm{Y}_1^\theta
  \le d^{1-\frac2p}.
\end{equation}
Consequently, if $H=H^*$ and $\norm{H}_p\le d^{2/p-1}$, then
\begin{equation}
  \abs{\tr(YH)}
  \le\norm{Y}_q\norm{H}_p
  \le1
  \qquad (\forall Y\in\mathcal K^\circ).
\end{equation}
The bipolar theorem and the closedness of $\mathcal K$ imply that $H\in\mathcal K$.
\end{proof}

The fact that $d^{2/p-1}$ is the largest possible radius is already established in \cite[Theorem~3]{GB02}.
For completeness, we restate the proof here, presenting it in a somewhat more transparent form.
\begin{proof}[Proof of the optimality in \Cref{thm:schatten-balls}]
Let $  F_d=\sum_{i,j=1}^d E_{ij}\otimes E_{ji}$ be the SWAP operator on embedded $d$-dimensional local subspaces. Then
\begin{equation}\label{eq:swap-partial-transpose}
  F_d^\Gamma
  =\sum_{i,j=1}^d E_{ij}\otimes E_{ij}
  =d\ket{\Phi_d}\!\bra{\Phi_d},
  \qquad
  \ket{\Phi_d}=\frac1{\sqrt d}\sum_{i=1}^d e_i\otimes e_i.
\end{equation}
For $t>1$, the Hermitian operator $H_t=tF_d/d$ satisfies
\begin{equation}
  \norm{H_t}_p=t d^{2/p-1},
  \qquad
  (I-H_t)^\Gamma
  =I-t\ket{\Phi_d}\!\bra{\Phi_d}
  \not\succeq0.
\end{equation}
Thus $H_t$ is not PPT-admissible.
\end{proof}

\section{Optimal data hiding against PPT, SEP and LOCC}\label{sec:data-hiding}

We now apply \cref{thm:schatten-balls} to the distinguishability norms introduced in \cref{eq:intro-measurement-norm}. Let $1\le q\le2$, and let $p$ be its conjugate exponent. For every Hermitian $W$ with $\norm{W}_p\le1$, \cref{thm:schatten-balls} shows that
\begin{equation}
  F_+=\frac12(I+d^{\frac 2p-1}W),
  \qquad
  F_-=\frac12(I-d^{\frac 2p-1}W)
\end{equation}
form a two-round LOCC POVM. Hence, for every Hermitian $h$,
\begin{equation}
\begin{aligned}
  \norm{h}_{\mathrm{LOCC}}
  &\ge \abs{\tr(F_+h)}+\abs{\tr(F_-h)}\\
  &\ge \abs{\tr((F_+-F_-)h)}
  =d^{\frac 2p-1}\abs{\tr(Wh)}.
\end{aligned}
\end{equation}
Taking the supremum over such $W$ and using the duality 
$\norm{h}_q = \sup_{W} \abs{\tr(Wh)}$ (where $W$ is Hermitian and $\norm{W}_p\leq 1$)
gives the following family of estimates.

\begin{corollary}\label{cor:locc-dominates-schatten}
For every Hermitian $h\in M_n\otimes M_m$ and every $1\le q\le2$,
\begin{equation}\label{eq:locc-dominates-schatten}
  \norm{h}_{\mathrm{LOCC}}
  \ge d^{1-\frac 2q}\norm{h}_q.
\end{equation}
In particular,
\begin{equation}\label{eq:locc-trace-bound}
  \norm{h}_1\le d\,\norm{h}_{\mathrm{LOCC}}.
\end{equation}
\end{corollary}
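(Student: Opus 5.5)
The plan is to turn the discussion just before the statement into a proof by duality: pair $h$ with each LOCC-implementable binary measurement supplied by \cref{thm:schatten-balls} and optimize over the admissible perturbations. Fix $1\le q\le 2$ and let $p\in[2,\infty]$ be its conjugate exponent, with $q=1$ corresponding to $p=\infty$. Take any Hermitian $W$ with $\norm{W}_p\le1$. Then $H=d^{2/p-1}W$ satisfies $\norm{H}_p\le d^{2/p-1}$. By \cref{thm:schatten-balls}, the pair $F_\pm=\frac12(I\pm H)$ is a two-round LOCC POVM, and in particular it belongs to the LOCC class used in \cref{eq:intro-measurement-norm}.

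Next I will bound the distinguishability norm from below. By the definition of $\norm{h}_{\mathrm{LOCC}}$,
\begin{equation*}
\norm{h}_{\mathrm{LOCC}}\ge\abs{\tr(F_+h)}+\abs{\tr(F_-h)}.
\end{equation*}
The triangle inequality bounds the right-hand side below by $\abs{\tr((F_+-F_-)h)}=d^{2/p-1}\abs{\tr(Wh)}$. Since $\frac1p=1-\frac1q$, the exponent is $\frac2p-1=1-\frac2q$.

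I then take the supremum over all Hermitian $W$ with $\norm{W}_p\le1$. The one step that needs care is that this supremum equals $\norm{h}_q$ when $h$ is Hermitian, i.e.\ that the Schatten duality may be restricted to Hermitian test operators. To see this, write the spectral decomposition $h=\sum_k\lambda_k P_k$ and set
\begin{equation*}
W=\sum_k \operatorname{sign}(\lambda_k)\,\abs{\lambda_k}^{q-1}P_k\Big/\norm{h}_q^{q-1}.
\end{equation*}
For $q=1$ this reduces to $W=\sum_k\operatorname{sign}(\lambda_k)P_k$, which has $\norm{W}_\infty\le1$. In general $W$ is Hermitian, satisfies $\norm{W}_p=1$, and gives $\tr(Wh)=\norm{h}_q$ by a direct computation using $(q-1)p=q$. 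Hölder's inequality gives the reverse bound, so the supremum is exactly $\norm{h}_q$. This yields \cref{eq:locc-dominates-schatten}.

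For \cref{eq:locc-trace-bound}, specialize to $q=1$ and $p=\infty$. The inequality becomes $\norm{h}_{\mathrm{LOCC}}\ge d^{-1}\norm{h}_1$, which is the claim. The only real input is \cref{thm:schatten-balls}; everything else is Hölder duality, and the mild point to check is that Hermitian $W$ suffice, which the explicit choice above settles.
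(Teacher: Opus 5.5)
Your proposal is correct and follows the paper's argument essentially verbatim: you apply \cref{thm:schatten-balls} to $H=d^{2/p-1}W$, use the triangle inequality on the two effects $F_\pm$, and take the supremum over Hermitian $W$ with $\norm{W}_p\le1$. Your explicit spectral choice of the optimal Hermitian $W$ simply fills in the Hermitian Schatten duality step that the paper invokes without proof.
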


It remains to show that the factor $d$ cannot be improved. Let $F_d$ be the embedded SWAP operator; let $(M_\alpha)_\alpha$ be a PPT POVM; define
\begin{equation}
  \varepsilon_\alpha = \operatorname{sgn}(\tr(M_\alpha F_d)),
\end{equation}
then
\begin{equation}
  \sum_\alpha \abs{\tr(M_\alpha F_d)} 
  =\tr((\sum_\alpha \varepsilon_\alpha M_\alpha) F_d)
  \leq \norm{\sum_\alpha \varepsilon_\alpha M_\alpha^\Gamma} \norm{F_d^\Gamma}_1.
\end{equation}
By PPT, $(M_\alpha^\Gamma)$ is also a POVM, so we have 
\begin{equation}
  \norm{\sum_\alpha \varepsilon_\alpha M_\alpha^\Gamma} 
  \leq \norm{\varepsilon}_\infty \norm{\sum_\alpha M_\alpha^\Gamma}=1,
\end{equation}
hence
\begin{equation}\label{eq:swap-ppt-norm}
  \norm{F_d}_{\mathrm{PPT}}
  =\sup_{(M_\alpha)\in\mathrm{PPT}}
  \sum_\alpha\abs{\tr(M_\alpha F_d)}
  \leq \norm{F_d^\Gamma}_1
  =d.
\end{equation}
On the other hand, $\norm{F_d}_1=d^2$. Thus $R_{\mathrm{PPT}}(n,m)\ge d$. Combining this with \cref{eq:intro-ratio-chain,eq:locc-trace-bound} proves \cref{thm:exact-data-hiding}.

We note that there exists another related witness.
Define $\tilde F_d \in \mathbb{C}^d\otimes \mathbb{C}^{2d}$ by:
\begin{equation}
  \tilde F_d = F_d \otimes (\ketbra{0}-\ketbra{1}).
\end{equation}
Then the same argument now gives:
\begin{equation}
  \norm{\tilde F_d}_{\mathrm{PPT}} \leq 2d,
\end{equation}
while $\norm{\tilde F_d}=2d^2$.
Hence, this $\tilde F_d$ also witnesses $R_{\mathrm{PPT}}(d,2d)\geq d$.
The advantange of this operator is that it is traceless Hermitian, hence can be represented as the difference of two mixed states:
\begin{equation}
  \tilde F_d \propto \rho_1-\rho_2.
\end{equation}
This gives us a quantum data-hiding with equal prior distribution.

\section{One-way LOCC data hiding}\label{sec:oneway}

In this section, we prove the \cref{thm:oneway-main} regarding the one-way LOCC data-hiding ratio.
For convenience, we define:
\begin{equation}
  C_n^{A\to B}
  =
  \sup_{m\ge1}
  R_{A\to B}(n,m)
\end{equation}
We will prove $C_n^{A\to B}\leq (1+o(1))n$, namely, 
\begin{equation}
  \norm{h}_1 \le n(1+o(1))\norm{h}_{A\to B},
\end{equation}
for every Hermitian $h\in M_n\otimes M_m$.
Here $n$ and $m$ denote the dimensions of Alice's and Bob's systems, respectively.
The $o(1)$ term only depends on $n$.

The proof uses a dual formulation of the one-way LOCC norm. We apply this formulation to a fixed Gaussian rank-one POVM on Alice's system, obtaining an exact but unbounded representation. We then make the representation bounded using a truncation-and-iteration argument similar to that in \cite{HM07,LS26}.

\subsection{Dual formulation}\label{sec:4.1}

We first record a useful expression for the Alice-first one-way LOCC norm.
\begin{proposition}\label{prop:oneway-norm-formula}
For every Hermitian $h\in M_n\otimes M_m$,
\begin{equation}\label{eq:oneway-norm}
  \norm{h}_{A\to B}
  =
  \sup_{(E_\alpha)}
  \sum_\alpha
  \norm{
  \tr_A\bigl[(E_\alpha\otimes I_m)h\bigr]
  }_1,
\end{equation}
where the supremum is over finite POVMs $(E_\alpha)$ on Alice's space.
\end{proposition}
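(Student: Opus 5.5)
I will prove the two inequalities separately, starting from the definition of $\norm{h}_{A\to B}$. A general Alice-first one-way LOCC POVM is obtained as follows. Alice performs a finite POVM $(E_\alpha)$, which may be coarse-grained from an instrument; only her effects matter for the final statistics. Bob then performs a POVM $(F^{(\alpha)}_\beta)_\beta$ that depends on $\alpha$. The final outcomes are relabelled into some outcome set $\gamma$, so the global effects are $M_\gamma=\sum_{(\alpha,\beta)\mapsto\gamma}E_\alpha\otimes F^{(\alpha)}_\beta$. I will check that allowing general instruments on Alice's side adds nothing: if Alice applies a Kraus map with outcome $\alpha$, Bob's subsequent measurement acts only on his system, so the effective operator is still $E_\alpha\otimes F^{(\alpha)}_\beta$ with $E_\alpha=\sum_k K_{\alpha k}^*K_{\alpha k}$. (If the paper's class also allows Alice to act after Bob, the definition of $A\to B$ excludes this, so I will take the one-message definition.)

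\textbf{Upper bound.} Fix such a POVM. By the triangle inequality, coarse-graining can only decrease $\sum_\gamma\abs{\tr(M_\gamma h)}$, so
\[
\sum_\gamma\abs{\tr(M_\gamma h)}\le\sum_{\alpha,\beta}\abs{\tr((E_\alpha\otimes F^{(\alpha)}_\beta)h)}=\sum_{\alpha,\beta}\abs{\tr(F^{(\alpha)}_\beta h_\alpha)},
\]
where $h_\alpha=\tr_A[(E_\alpha\otimes I_m)h]$ is Hermitian. This uses $\tr((E\otimes F)h)=\tr(F\,\tr_A[(E\otimes I)h])$. For each $\alpha$, Helstrom's bound \cite{Helstrom76} for a single POVM gives $\sum_\beta\abs{\tr(F_\beta^{(\alpha)}h_\alpha)}\le\norm{h_\alpha}_1$. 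The reason is that $\sum_\beta\varepsilon_\beta F_\beta$ is a contraction for any signs $\varepsilon_\beta$. Hence the left side is at most the right side of \cref{eq:oneway-norm}.

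\textbf{Lower bound.} Given Alice's POVM $(E_\alpha)$, Bob measures conditionally the two-outcome projective POVM $\{P_\alpha^+,P_\alpha^-\}$ onto the nonnegative and negative eigenspaces of $h_\alpha$. The global outcomes are $(\alpha,\pm)$, and the effects $E_\alpha\otimes P^\pm_\alpha$ form a valid one-way POVM. They achieve $\sum_\alpha\bigl(\tr(P_\alpha^+h_\alpha)-\tr(P_\alpha^-h_\alpha)\bigr)=\sum_\alpha\norm{h_\alpha}_1$. Taking the supremum gives equality.

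The only delicate point is bookkeeping of the measurement class: making sure relabelling/coarse-graining and Alice's post-measurement operations are covered. I expect no real obstacle, since both directions reduce to the one-line identity for $\tr_A$ and the Helstrom bound.
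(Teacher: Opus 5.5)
Your proof is correct and follows essentially the same route as the paper: reduce to effects $E_\alpha\otimes F_{\beta\mid\alpha}$, bound each $\alpha$-branch by $\norm{h_\alpha}_1$, and attain equality with Bob's spectral projections of $h_\alpha$. The differences are cosmetic. You justify the per-branch bound via the contraction $\sum_\beta\varepsilon_\beta F_\beta$ rather than the paper's $\abs{\tr(F h_\alpha)}\le\tr(F\abs{h_\alpha})$, and you spell out the coarse-graining and instrument bookkeeping that the paper leaves implicit.
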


\begin{proof}
An Alice-first one-way LOCC protocol begins with Alice performing a POVM $(E_\alpha)$ and sending the outcome $\alpha$ to Bob. Bob then performs a POVM $(F_{\beta\mid\alpha})_\beta$ chosen according to $\alpha$, so the joint measurement has effects $E_\alpha\otimes F_{\beta\mid\alpha}$. Set $h_\alpha=\tr_A[(E_\alpha\otimes I_m)h]$. The value of this protocol on $h$ is
\begin{equation}
  \sum_{\alpha,\beta}\abs{\tr(F_{\beta\mid\alpha}h_\alpha)}.
\end{equation}
For every $\alpha$, one has $\abs{\tr(F_{\beta\mid\alpha}h_\alpha)}\le\tr(F_{\beta\mid\alpha}\abs{h_\alpha})$, and hence the sum over $\beta$ is at most $\norm{h_\alpha}_1$. Equality can always be attained by the POVM corresponding to the spectral decomposition of $h_\alpha$. Optimizing first over Bob's conditional measurements and then over $(E_\alpha)$ proves \cref{eq:oneway-norm}.
\end{proof}

Our goal is to compare the right-hand side of \cref{eq:oneway-norm} with $\norm{h}_1$ for any Hermtian $h$. 
Recall that, by duality, we have $\norm{h}_1=\sup\{\abs{\tr(hV)}:V=V^*,\ \norm{V}_\infty\le1\}$. The following proposition shows that it is enough to represent such $V$ using a POVM $(E_\alpha)$ and a family of uniformly bounded observables on Bob's system.

\begin{proposition}\label{prop:bounded-representation}
Suppose there is a constant $K>0$ such that every Hermitian contraction $V\in M_n\otimes M_m$ can be decomposed as
\begin{equation}
  V
  =n\sum_\alpha E_\alpha\otimes F_{V,\alpha}
\end{equation}
where $(E_\alpha)$ is a POVM for Alice, $F_{V,\alpha}\in M_m^{\mathrm{sa}}$ and $\norm{F_{V,\alpha}}_\infty\le K$ for every $\alpha$. Then $C_n^{A\to B}\le Kn$.
\end{proposition}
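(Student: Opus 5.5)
The argument combines the duality formula for the trace norm with the one-way formula of \cref{prop:oneway-norm-formula}. Fix a Hermitian $h\in M_n\otimes M_m$; we must show $\norm{h}_1\le Kn\,\norm{h}_{A\to B}$, and since $m$ is arbitrary this gives $C_n^{A\to B}\le Kn$.

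First, choose a Hermitian contraction $V$ with $\norm{h}_1=\tr(hV)$; for instance $V=\operatorname{sgn}(h)$. By hypothesis we can write $V=n\sum_\alpha E_\alpha\otimes F_{V,\alpha}$ with $(E_\alpha)$ a POVM on Alice's system and $\norm{F_{V,\alpha}}_\infty\le K$. If the decomposition involves a continuous family of outcomes, as for the Gaussian POVM, one either reads \cref{eq:oneway-norm} with integrals in place of sums or approximates by finite POVMs. Writing $h_\alpha=\tr_A[(E_\alpha\otimes I_m)h]$, we compute
\begin{equation}
  \norm{h}_1=\tr(hV)=n\sum_\alpha\tr\bigl[(E_\alpha\otimes F_{V,\alpha})h\bigr]
  =n\sum_\alpha\tr\bigl(F_{V,\alpha}h_\alpha\bigr).
\end{equation}
The last equality is the defining property of the partial trace, $\tr[(E\otimes F)h]=\tr(F\,\tr_A[(E\otimes I)h])$.

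Second, Hölder's inequality gives $\abs{\tr(F_{V,\alpha}h_\alpha)}\le\norm{F_{V,\alpha}}_\infty\norm{h_\alpha}_1\le K\norm{h_\alpha}_1$. Summing over $\alpha$,
\begin{equation}
  \norm{h}_1\le nK\sum_\alpha\norm{h_\alpha}_1\le nK\,\norm{h}_{A\to B},
\end{equation}
where the final inequality is \cref{prop:oneway-norm-formula} applied to the POVM $(E_\alpha)$. Taking the supremum over $h\neq0$ and then over $m$ yields $C_n^{A\to B}\le Kn$.

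The only delicate point is the continuous-outcome case. If the hypothesized decomposition uses a continuous POVM, I would justify the last inequality by discretization. Partition the outcome space into finitely many cells and coarse-grain $E$ over each cell. By the triangle inequality for $\norm{\cdot}_1$, the quantity $\int\norm{h_\alpha}_1\,d\alpha$ is the supremum of the corresponding sums for coarse-grainings of $E$ along refining partitions, so it is still bounded by $\norm{h}_{A\to B}$. Alternatively, if the hypothesis is read literally with finite POVMs, no such issue arises and the proof is just the two displayed lines.
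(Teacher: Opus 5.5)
Your proof is correct and follows the paper's own argument. You decompose a norming Hermitian contraction $V$, rewrite the pairing through the partial traces $h_\alpha$, and apply H\"older's inequality together with \cref{prop:oneway-norm-formula}; choosing $V=\operatorname{sgn}(h)$ directly rather than taking a supremum over all contractions $V$ is only cosmetic, and your remark on discretizing continuous POVMs matches the paper's footnote.
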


\begin{proof}
Fix $m\ge1$, a Hermitian $h\in M_n\otimes M_m$, and a Hermitian contraction $V\in M_n\otimes M_m$. Choose $(F_{V,\alpha})_\alpha$ as in the statement. By H\"older's inequality and \cref{eq:oneway-norm},
\begin{equation}
\begin{aligned}
  \abs{\tr(hV)}
  &=n\abs{
  \sum_\alpha
  \tr\bigl(h(E_\alpha\otimes F_{V,\alpha})\bigr)
  }=n\abs{
  \sum_\alpha\tr(F_{V,\alpha}h_\alpha)
  }\\
  &\le nK\sum_\alpha\norm{h_\alpha}_1
  \le nK\norm{h}_{A\to B}.
\end{aligned}
\end{equation}
Taking the supremum over Hermitian contractions $V$ gives $\norm{h}_1\le Kn\norm{h}_{A\to B}$. 
\end{proof}

\subsection{Gaussian rank-one POVM}

We now allow Alice's POVM to be continuous; the preceding discussion carries over with sums replaced by integrals\footnote{Our results still hold for finite POVMs, since we can approximate a continuous POVM using finite ones.}. 
We will show that Alice's POVM can be fixed independently of $h$, while Bob's conditional POVMs may still depend on $h$.

Let $g=(g_1,\ldots,g_n)\in\mathbb C^n$ be a centered complex Gaussian vector normalized by
\begin{equation}\label{eq:gaussian-covariance}
  \mathbb E[g_a\overline{g_b}]=\frac{\delta_{ab}}n,
  \qquad
  n\,\mathbb E[gg^*]=I_n.
\end{equation}
Then $e(g)=ngg^*$ is the density of a POVM.
We call it Gaussian rank-one POVM.
For this POVM, the condition in \cref{prop:bounded-representation} is
\begin{equation}\label{eq:decompostionFV}
  n\,\mathbb E[gg^*\otimes F_V(g)]=\frac Vn.
\end{equation}
A bound $\norm{F_V(g)}_\infty\le K$ uniform in $g$, $m$, and $V$ would therefore give $C_n^{A\to B}\le Kn$.

We first construct a naive version of the decomposition \cref{eq:decompostionFV}. It will be improved in the next subsection.

For $V\in M_n\otimes M_m$, let $T_V(g)=\bra{g}V\ket{g}_A$, and define
\begin{equation}\label{eq:SVg}
  S_V(g)=T_V(g)-\frac{\tr_A V}{n}.
\end{equation}

\begin{lemma}\label{lem:exact-gaussian-representation}
For every $V\in M_n\otimes M_m$,
\begin{equation}\label{eq:exact-gaussian-representation}
  n\,\mathbb E\bigl[gg^*\otimes S_V(g)\bigr]
  =\frac Vn.
\end{equation}
\end{lemma}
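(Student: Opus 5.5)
By linearity in $V$, it suffices to treat $V=X\otimes Y$ with $X\in M_n$ and $Y\in M_m$. For such $V$ we have $T_V(g)=\bra{g}X\ket{g}\,Y$ and $\tr_A V=\tr(X)\,Y$. The identity \cref{eq:exact-gaussian-representation} then reduces to the scalar-matrix identity
\begin{equation}
  n\,\mathbb E\Bigl[gg^*\Bigl(g^*Xg-\frac{\tr X}{n}\Bigr)\Bigr]=\frac Xn,
\end{equation}
tensored with $Y$. This is a fourth-moment computation for the complex Gaussian vector $g$.

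The key step is to compute $\mathbb E[g_a\overline{g_b}\,\overline{g_c}g_d]$ using Wick's (Isserlis') theorem for centered circular complex Gaussians. With covariance $\mathbb E[g_a\overline{g_b}]=\delta_{ab}/n$ and $\mathbb E[g_ag_b]=0$, the only nonzero pairings couple each $g$ with a $\bar g$. This gives
\begin{equation}
  \mathbb E[g_a\overline{g_b}\,\overline{g_c}g_d]=\frac{\delta_{ab}\delta_{cd}+\delta_{ac}\delta_{bd}}{n^2}.
\end{equation}
Writing $g^*Xg=\sum_{c,d}\overline{g_c}X_{cd}g_d$, the $(a,b)$ entry of $\mathbb E[gg^*(g^*Xg)]$ becomes $\sum_{c,d}X_{cd}\,\mathbb E[g_a\overline{g_b}\,\overline{g_c}g_d]$. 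Substituting the moment formula yields
\begin{equation}
  \mathbb E[gg^*(g^*Xg)]=\frac{(\tr X)\,I_n+X}{n^2}.
\end{equation}

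The subtraction term follows from \cref{eq:gaussian-covariance}: $\mathbb E[gg^*]\,\tr X/n=(\tr X)I_n/n^2$. This exactly cancels the first part of the previous display. What remains is $n\cdot X/n^2=X/n$, which is the claimed identity.

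The only delicate point is getting the Wick pairing right: one must check that no $gg$ or $\bar g\bar g$ pairings contribute. This holds because the vector is circularly symmetric. No other obstacle is expected, since everything else is linear algebra.
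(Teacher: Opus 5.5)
Your proposal is correct and follows the paper's proof: both compute the fourth Gaussian moment by Wick's formula, obtain $n\,\mathbb E[gg^*\otimes T_V(g)]=(V+I_n\otimes\tr_A V)/n$, and then subtract the centering term. Your reduction by linearity to $V=X\otimes Y$ is only a cosmetic variant of the paper's block decomposition $V=\sum_{a,b}E_{ab}\otimes x_{ab}$.
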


\begin{proof}
  The proof is by explicit calculation.
Write $V=\sum_{a,b=1}^nE_{ab}\otimes x_{ab}$. The $(a,b)$ block of $n\,\mathbb E[gg^*\otimes T_V(g)]$ is
\begin{equation}
  n\sum_{i,j}
  \mathbb E[g_a\overline{g_b}\,\overline{g_i}g_j]x_{ij}.
\end{equation}
Using the complex Gaussian integral identity
\begin{equation}
  \mathbb E[g_ag_j\overline{g_b}\,\overline{g_i}]
  =\frac{\delta_{ab}\delta_{ji}+\delta_{ai}\delta_{jb}}{n^2},
\end{equation}
we find
\begin{equation}
  n\,\mathbb E[gg^*\otimes T_V(g)]
  =\frac{V+I_n\otimes \tr_A V}{n}.
\end{equation}
On the other hand, \cref{eq:gaussian-covariance} implies
\begin{equation}
  n\,\mathbb E\left[gg^*\otimes\frac{\tr_A V}{n}\right]
  =\frac{I_n\otimes \tr_A V}{n}.
\end{equation}
Subtracting the two identities proves the claim.
\end{proof}

\subsection{Spectral truncation and iteration}

Although \cref{lem:exact-gaussian-representation} represents $V$ exactly, the operators $S_V(g)$ are unbounded. To replace them by bounded operators, we follow the truncation-and-iteration approach used in \cite{HM07,LS26}. We truncate $S_V(g)$, absorb the discarded part into a residual operator, and repeat the construction for this residual. If the residuals decrease by a fixed factor, the bounded truncated terms can be summed to recover an exact representation.

We shall use the following abstract formulation of this iteration, extracted from the argument in \cite{HM07,LS26}.

\begin{lemma}[Lemma 2.7 in \cite{HM07}]\label{lem:iterative-representation}
Let $X$ and $Y$ be Banach spaces, and let $\mathcal Q:X\to Y$ be a bounded linear map. Suppose that there are constants $a>0$ and $0\le\theta<1$ such that every $y\in Y$ can be approximated by some $\mathcal Qx$ with
\begin{equation}\label{eq:one-step-abstract-representation}
  \norm{x}_X\le a\norm{y}_Y,
  \qquad
  \norm{y-\mathcal Qx}_Y\le\theta\norm{y}_Y.
\end{equation}
Then the approximation can be made exact: for every $y\in Y$, there is $\widetilde x\in X$ such that
\begin{equation}\label{eq:exact-abstract-representation}
  \mathcal Q\widetilde x=y,
  \qquad
  \norm{\widetilde x}_X
  \le\frac{a}{1-\theta}\norm{y}_Y.
\end{equation}
\end{lemma}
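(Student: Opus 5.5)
The plan is the standard geometric-series iteration. Fix $y\in Y$; without loss of generality $y\neq0$. Define a sequence of residuals $y_0=y$, and for each $k\ge0$ apply the hypothesis \cref{eq:one-step-abstract-representation} to $y_k$. This produces $x_k\in X$ with
\begin{equation*}
  \norm{x_k}_X\le a\norm{y_k}_Y,
  \qquad
  y_{k+1}:=y_k-\mathcal Qx_k,
  \qquad
  \norm{y_{k+1}}_Y\le\theta\norm{y_k}_Y .
\end{equation*}
By induction, $\norm{y_k}_Y\le\theta^k\norm{y}_Y$ and $\norm{x_k}_X\le a\theta^k\norm{y}_Y$.

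Next, set $\widetilde x=\sum_{k\ge0}x_k$. Because $\theta<1$, the series converges absolutely in $X$, and completeness of $X$ (it is a Banach space) guarantees that the sum exists. The triangle inequality then gives
\begin{equation*}
  \norm{\widetilde x}_X\le\sum_{k\ge0}a\theta^k\norm{y}_Y=\frac{a}{1-\theta}\norm{y}_Y .
\end{equation*}

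To verify exactness, telescope: $\sum_{k=0}^{N}\mathcal Qx_k=\sum_{k=0}^N(y_k-y_{k+1})=y-y_{N+1}$. Since $\mathcal Q$ is bounded and linear, hence continuous, the partial sums $\mathcal Q\bigl(\sum_{k\le N}x_k\bigr)$ converge to $\mathcal Q\widetilde x$. Meanwhile $y_{N+1}\to0$ because $\norm{y_{N+1}}_Y\le\theta^{N+1}\norm{y}_Y$. Hence $\mathcal Q\widetilde x=y$, which is \cref{eq:exact-abstract-representation}.

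There is no real obstacle here. The only points requiring care are invoking completeness of $X$, so that the absolutely convergent series has a limit, and continuity of $\mathcal Q$, so that we may pass to the limit in the telescoping identity. In the intended application $X$ will be a space of bounded operator-valued functions with the sup norm, and it must be checked to be complete there; this is routine.
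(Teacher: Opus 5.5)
Your proposal is correct and follows the paper's proof essentially line for line: the same residual recursion $y_{k+1}=y_k-\mathcal Qx_k$, the same geometric bounds $\norm{x_k}_X\le a\theta^k\norm{y}_Y$, the sum $\widetilde x=\sum_k x_k$, and the limit $N\to\infty$ in $y=\mathcal Q\bigl(\sum_{k\le N}x_k\bigr)+y_{N+1}$. You also state explicitly that the argument uses completeness of $X$ and continuity of $\mathcal Q$, which the paper uses without comment.
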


\begin{proof}
Set $y_0=y$. Given $y_k$, choose $x_k\in X$ as in \cref{eq:one-step-abstract-representation} and set $y_{k+1}=y_k-\mathcal Qx_k$. Then
\begin{equation}
  \norm{y_k}_Y\le\theta^k\norm{y}_Y,
  \qquad
  \norm{x_k}_X\le a\theta^k\norm{y}_Y.
\end{equation}
Hence $\widetilde x=\sum_{k\ge0}x_k$ converges in $X$ and satisfies the bound in \cref{eq:exact-abstract-representation}. Moreover,
\begin{equation}
  y=\mathcal Q\left(\sum_{k=0}^N x_k\right)+y_{N+1}.
\end{equation}
Letting $N\to\infty$ gives $\mathcal Q\widetilde x=y$.
\end{proof}

In our application, $X$ is the Banach space of bounded measurable functions $F:\mathbb C^n\to M_m^{\mathrm{sa}}$, equipped with
\begin{equation}
  \norm{F}_X=\sup_{g\in \mathbb C^n}\norm{F(g)}_\infty,
\end{equation}
while $Y=M_{nm}^{\mathrm{sa}}$ is equipped with the operator norm.
We take
\begin{equation}
  \mathcal Q(F)=n^2\,\mathbb E\bigl[gg^*\otimes F(g)\bigr].
\end{equation}
Then \cref{eq:exact-gaussian-representation} formally reads $\mathcal Q(S_V)=V$, but note that $S_V\notin X$ as it is not uniformly bounded. We therefore truncate $S_V$ and use the truncated function as the candidate in \cref{lem:iterative-representation}.

For $s\in\mathbb R$, let
\begin{equation}
  \operatorname{trunc}_1(s)
  =\operatorname{sgn}(s)\min\{\abs{s},1\},
  \qquad
  \rho_1(s)=s-\operatorname{trunc}_1(s).
\end{equation}
We use the same notation for the corresponding functions of a Hermitian operator. 
Since the estimates in \cref{eq:one-step-abstract-representation} are homogeneous, it suffices to construct the candidate in $X$ for $V\in Y$ with $\norm{V}_\infty=1$. For such a $V$, define
\begin{equation}
  G_V(g)
  =\operatorname{trunc}_1(S_V(g)).
\end{equation}
Denote the residual part as $R_V=V-\mathcal Q(G_V)$. \Cref{eq:exact-gaussian-representation} gives
\begin{equation}\label{eq:truncation-residual}
  R_{V}
  =n^2\,\mathbb E\bigl[gg^*\otimes\rho_1(S_V(g))\bigr].
\end{equation}

It remains to verify the second estimate in \cref{eq:one-step-abstract-representation} with a suitable $\theta$ by controlling $\norm{R_{V}}_Y$.
Following \cite{LS26}, we control $\rho_1(\cdot)$ by an even monomial. For every even integer $r\ge2$, define
\begin{equation}\label{eq:operator-valued-moment}
  \mathcal K_r(V)
  =n\,\mathbb E\bigl[gg^*\otimes S_V(g)^r\bigr].
\end{equation}
When $V$ is Hermitian, $\mathcal K_r(V)$ is positive. 
The next lemma bounds the remaining error in terms of this quantity.

\begin{lemma}\label{lem:one-truncation-step}
Let $r\ge2$ be even. For every $V\in Y$ with $\norm{V}_Y=1$, the function $G_V$ defined above satisfies
\begin{equation}\label{eq:truncated-representation}
  \norm{G_V}_X\le1,
\end{equation}
and
\begin{equation}\label{eq:residual-bound}
  \norm{R_{V}}_Y
  \le
  \frac{n\,\norm{\mathcal K_r(V)}_\infty}
  {c_r},
  \qquad
  c_r=\frac{r^r}{(r-1)^{r-1}}.
\end{equation}
\end{lemma}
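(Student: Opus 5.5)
First, the bound $\norm{G_V}_X\le1$ is immediate. For each $g$ the operator $S_V(g)$ is Hermitian, and $G_V(g)=\operatorname{trunc}_1(S_V(g))$ is obtained by functional calculus with a real function bounded by $1$ in absolute value. Hence $G_V(g)$ is Hermitian with $\norm{G_V(g)}_\infty\le1$ for every $g$. Measurability follows because $S_V(g)$ is polynomial in $g,\bar g$ and $\operatorname{trunc}_1$ is continuous.

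For the residual, we begin from \cref{eq:truncation-residual}, $R_V=n^2\,\mathbb E[gg^*\otimes\rho_1(S_V(g))]$. The plan is to dominate $\rho_1$ by the even monomial appearing in $\mathcal K_r$. The scalar inequality we need is
\begin{equation}
  \abs{\rho_1(s)}\le\frac{s^r}{c_r}\qquad(s\in\mathbb R),\qquad c_r=\frac{r^r}{(r-1)^{r-1}}.
\end{equation}
For $\abs s\le1$ the left side vanishes. For $s>1$ we must show $c_r(s-1)\le s^r$, i.e.\ $c_r\le\min_{s>1}s^r/(s-1)$. Setting the derivative to zero gives $rs^{r-1}(s-1)=s^r$, so $s=r/(r-1)$. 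The minimum value is then $(r/(r-1))^r\cdot(r-1)=r^r/(r-1)^{r-1}=c_r$, and this explains the constant. Evenness of $r$ handles $s<-1$ by symmetry. Since $\rho_1$ and $s\mapsto s^r/c_r$ are functions of the same Hermitian operator $S_V(g)$, functional calculus gives the operator inequalities
\begin{equation}
  -\frac{S_V(g)^r}{c_r}\preceq\rho_1(S_V(g))\preceq\frac{S_V(g)^r}{c_r}.
\end{equation}

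Next, tensor with $gg^*\succeq0$, which preserves these inequalities, since $P\otimes A\preceq P\otimes B$ whenever $P\succeq0$ and $A\preceq B$. Then take expectations and multiply by $n^2$. This yields
\begin{equation}
  -\frac{n\,\mathcal K_r(V)}{c_r}\preceq R_V\preceq\frac{n\,\mathcal K_r(V)}{c_r},
\end{equation}
using $n^2\,\mathbb E[gg^*\otimes S_V^r]=n\,\mathcal K_r(V)$. Finally, for Hermitian $R$ and $K\succeq0$ with $-K\preceq R\preceq K$, we have $\norm{R}_\infty\le\norm{K}_\infty$, since $\abs{\langle x,Rx\rangle}\le\langle x,Kx\rangle$. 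This gives \cref{eq:residual-bound}.

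The only delicate points are integrability and the validity of \cref{eq:truncation-residual}. The expectations are finite because $S_V(g)$ grows polynomially in $g$ and Gaussians have all moments. Linearity of $\mathcal Q$ on integrable functions, together with \cref{lem:exact-gaussian-representation}, justifies $R_V=\mathcal Q(S_V)-\mathcal Q(G_V)=\mathcal Q(\rho_1(S_V))$. The main step is the scalar optimization that identifies $c_r$; everything else is routine monotonicity of the operator inequalities.
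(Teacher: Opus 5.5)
Your proposal is correct and follows the paper's proof essentially step for step. Both arguments use the scalar bound $(\abs{s}-1)_+\le s^r/c_r$, obtained by minimizing $s^r/(s-1)$ at $s=r/(r-1)$, then the two-sided functional-calculus inequality, then tensoring with $gg^*\succeq0$ and taking expectations, and finally $\mathcal K_r(V)\succeq0$ to turn $\pm R_V\preceq n\,\mathcal K_r(V)/c_r$ into the norm bound; your remarks on measurability and integrability only make explicit details the paper leaves implicit.
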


\begin{proof}
The first estimate is immediate. 
For the second, we first note the scalar estimate
\begin{equation}\label{eq:scalar-truncation-bound}
  (\abs{s}-1)_+
  \le\frac{s^r}{c_r},
\end{equation}
which follows by minimizing $s^r/(s-1)$ over $s>1$.
By functional calculus,
\begin{equation}
  -\frac{S_V(g)^r}{c_r}
  \preceq\rho_1(S_V(g))
  \preceq
  \frac{S_V(g)^r}{c_r}.
\end{equation}
Tensoring with $gg^*\succeq0$ and taking expectations gives
\begin{equation}
  \pm R_{V}
  \preceq
  \frac{n\,\mathcal K_r(V)}{c_r}.
\end{equation}
Since $\mathcal K_r(V)\succeq0$, the two-sided order bound implies \cref{eq:residual-bound}.
\end{proof}

Together with \cref{lem:iterative-representation}, the preceding lemma reduces the construction of a bounded representation to a uniform bound on $\mathcal K_r(V)$. We first record precisely what such a bound gives.
\begin{proposition}\label{prop:moment-implies-representation}
Suppose that, for some even $r\ge2$ and some $\beta>0$,
\begin{equation}\label{eq:assumed-moment-bound}
  \norm{\mathcal K_r(V)}_\infty\le\frac\beta n
\end{equation}
for every Hermitian contraction $V$. Set $\theta=\beta/c_r$ and suppose that $\theta<1$. Then every Hermitian contraction $V$ admits a measurable function $F_V:\mathbb C^n\to M_m^{\mathrm{sa}}$ such that
\begin{equation}\label{eq:bounded-gaussian-representation}
  n\,\mathbb E[gg^*\otimes F_V(g)]=\frac Vn,
  \qquad
  \norm{F_V(g)}_\infty\le\frac1{1-\theta}.
\end{equation}
Consequently,
\begin{equation}\label{eq:moment-implies-Cn}
  C_n^{A\to B}\le\frac n{1-\theta}.
\end{equation}
\end{proposition}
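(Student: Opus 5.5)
The plan is to apply \cref{lem:iterative-representation} with the spaces $X$, $Y$ and the map $\mathcal Q(F)=n^2\,\mathbb E[gg^*\otimes F(g)]$ introduced above. Then \cref{prop:bounded-representation} (in its continuous-POVM form) converts the resulting bounded representation into the bound on $C_n^{A\to B}$.

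\textbf{Step 1: check that $\mathcal Q$ is a bounded linear map.} For $F\in X$ we have
\[
-\norm{F}_X\, gg^*\otimes I\preceq gg^*\otimes F(g)\preceq \norm{F}_X\, gg^*\otimes I .
\]
Taking expectations and using $n\,\mathbb E[gg^*]=I_n$ gives $\norm{\mathcal Q(F)}_Y\le n\norm{F}_X$. Self-adjointness of the output is clear.

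\textbf{Step 2: verify the one-step hypothesis \cref{eq:one-step-abstract-representation}.} By homogeneity it suffices to treat $V$ with $\norm{V}_Y=1$; for $y=0$ take $x=0$. Take $x=G_V=\operatorname{trunc}_1(S_V(g))$. Measurability in $g$ holds because $S_V(g)$ is continuous in $g$ and $\operatorname{trunc}_1$ is a continuous function, so the functional calculus preserves continuity. \Cref{lem:one-truncation-step} gives $\norm{G_V}_X\le 1$, so $a=1$. Combining \cref{eq:residual-bound} with the assumption \cref{eq:assumed-moment-bound} gives
\[
\norm{V-\mathcal Q(G_V)}_Y=\norm{R_V}_Y\le \frac{n\cdot\beta/n}{c_r}=\theta .
\]
For a general Hermitian $y\neq0$, apply this to $y/\norm{y}$ and rescale. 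The moment hypothesis is stated only for Hermitian contractions, and $y/\norm{y}$ is one, so it applies.

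\textbf{Step 3: conclude.} Since $\theta<1$, \cref{lem:iterative-representation} yields $\widetilde F\in X$ with $\mathcal Q(\widetilde F)=V$ and $\norm{\widetilde F}_X\le 1/(1-\theta)$ for $\norm{V}\le1$. Set $F_V=\widetilde F$. The identity $\mathcal Q(F_V)=V$ is exactly $n\,\mathbb E[gg^*\otimes F_V(g)]=V/n$, which is \cref{eq:bounded-gaussian-representation}. Since $e(g)=ngg^*$ is a POVM density, this has the form $V=n\int e(g)\otimes F_V(g)\,d\mu(g)$ required in \cref{prop:bounded-representation} with $K=1/(1-\theta)$. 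Hence $C_n^{A\to B}\le n/(1-\theta)$.

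\textbf{Main obstacle.} The only delicate point is the passage from a continuous POVM to the finite-POVM setting of \cref{prop:oneway-norm-formula,prop:bounded-representation}. The proof of \cref{prop:bounded-representation} goes through verbatim with integrals: H\"older gives
\[
\abs{\tr(F_V(g)h_g)}\le K\norm{h_g}_1, \qquad h_g=\tr_A[(ngg^*\otimes I)h],
\]
and so it suffices to bound $\int\norm{h_g}_1\,d\mu(g)$ by $\norm{h}_{A\to B}$. I would argue this by partitioning $\mathbb C^n$ into finitely many measurable cells $\Omega_k$ and setting $E_k=\int_{\Omega_k}ngg^*\,d\mu$. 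By the triangle inequality, $\sum_k\norm{\tr_A[(E_k\otimes I)h]}_1$ is at most the integral and increases under refinement. To close the gap, refine the cells so that the sum converges to the integral by continuity of $g\mapsto h_g$ together with dominated convergence, the tails being controlled by the Gaussian moments; this is the standard approximation invoked in the footnote. Everything else is bookkeeping.
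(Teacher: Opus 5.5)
Your proposal is correct and follows the paper's proof. Both take $G_V=\operatorname{trunc}_1(S_V)$ as the one-step candidate with $a=1$, bound the residual by $n\cdot(\beta/n)/c_r=\theta$ via \cref{lem:one-truncation-step} and the moment hypothesis, then apply \cref{lem:iterative-representation} and conclude with \cref{prop:bounded-representation}. The extra points you check are all sound and only make the same argument more complete: boundedness of $\mathcal Q$, measurability of $G_V$, and the discretization of the Gaussian POVM, which the paper relegates to a footnote.
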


\begin{proof}
For $V\in Y$ with $\norm{V}_\infty=1$, \cref{lem:one-truncation-step,eq:assumed-moment-bound} gives
\begin{equation}
  \norm{G_V}_X\le1,
  \qquad
  \norm{V-\mathcal Q(G_V)}_Y\le\theta.
\end{equation}
By homogeneity, this verifies \cref{eq:one-step-abstract-representation} for every $V\in Y$ with $a=1$. Applying \cref{lem:iterative-representation}, we obtain $F_V\in X$ such that
\begin{equation}
  \mathcal Q(F_V)=V,
  \qquad
  \norm{F_V}_X\le\frac1{1-\theta}.
\end{equation}
By the definition of $\mathcal Q$, these are exactly the two assertions in \cref{eq:bounded-gaussian-representation}. The bound on $C_n^{A\to B}$ now follows from \cref{prop:bounded-representation}.
\end{proof}

It remains to establish the required uniform bound on $\mathcal K_r(V)$.
Its proof is a direct but lengthy calculation based on Wick's formula and is deferred to \cref{app:moment-bound}.
\begin{proposition}\label{prop:main-moment-bound}
For every Hermitian contraction $V\in M_n\otimes M_m$ and every even integer $r\ge2$,
\begin{equation}\label{eq:main-moment-bound}
  \norm{\mathcal K_r(V)}_\infty
  \le\frac2n+\frac{(r+1)!}{n^2}.
\end{equation}
\end{proposition}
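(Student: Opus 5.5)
The plan is to bound $\mathcal K_r(V)$ by testing against unit vectors and expanding with Wick's formula. Since $\mathcal K_r(V)\succeq0$, it suffices to bound $\bra{\psi}\mathcal K_r(V)\ket{\psi}$ for unit $\psi\in\mathbb C^n\otimes\mathbb C^m$. Write $V=\sum_{a,b}E_{ab}\otimes x_{ab}$. Then $T_V(g)=\sum_{a,b}\overline{g_a}g_b\,x_{ab}$, and $S_V(g)=\sum_{a,b}(\overline{g_a}g_b-\mathbb E[\overline{g_a}g_b])x_{ab}$ is the Wick-ordered (centered) quadratic chaos. The point of subtracting $\tr_A V/n$ is that $S_V$ is exactly the second-chaos part. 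Hence, in the Wick expansion of $\mathbb E[g_a\overline{g_b}\,S_V(g)^r]$, no pairing contracts a $g$ with the $\overline g$ inside the same factor of $S_V$. The quantity to control is
\begin{equation*}
  n\,\mathbb E\bigl[\bra{\psi}(gg^*\otimes S_V(g)^r)\ket{\psi}\bigr]
  = n\,\mathbb E\bigl[\norm{S_V(g)\,\phi(g)}^2\cdot(\ldots)\bigr],
\end{equation*}
where $\phi(g)=(g^*\otimes I)\psi\in\mathbb C^m$. Concretely, it is $n\,\mathbb E\langle \phi(g), S_V(g)^r\phi(g)\rangle$, a Gaussian polynomial of degree $2r+2$ in $g,\overline g$.

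\textbf{Step 1: organize the pairings.} Pairings of the $r+1$ holomorphic variables with the $r+1$ antiholomorphic ones correspond to permutations $\sigma\in S_{r+1}$, with the constraint that $\sigma$ has no fixed points among the $S_V$ slots (centering). Each pairing contributes $n^{-(r+1)}$ times a contraction of the operators $x_{ab}$. The leading behaviour should come from the ``cyclic'' pairings that reassemble the blocks into products of $V$ itself, i.e.\ $V^r$ and related words such as $(\tr_A$-type$)$ terms.

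\textbf{Step 2: bound each contraction.} The matrix-sum structure produced by a pairing $\sigma$ is of the form $\sum$ over indices of $\psi^*(E\otimes x)(E\otimes x)\cdots\psi$, with the $A$-indices rewired according to $\sigma$. Each such contraction can be written as a partial trace or product involving copies of $V$ and identity operators. I would bound it using $\norm V_\infty\le1$ together with the fact that each closed index loop not absorbed by $V$ contributes a factor of at most $n$, estimated via Cauchy--Schwarz or the bound $\norm{\tr_A(\cdot)}\le n\norm{\cdot}$. The count gives $n\cdot n^{-(r+1)}\cdot n^{\#\text{loops}}$.

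\textbf{Step 3: separate leading from subleading terms.} The pairings achieving the maximal loop count $r$ should be exactly two: the one that produces $\bra\psi V^r\ket\psi$-type terms and the one that produces $\bra\psi I\otimes(\text{something bounded by }1)\ket\psi$-type terms. Together these give the $2/n$ term. Every other pairing loses at least one loop, so it contributes at most $n^{-2}$. Bounding the number of remaining pairings crudely by $(r+1)!$ then yields \cref{eq:main-moment-bound}.

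The main obstacle is Step 2/3. The difficulty is proving that no pairing has more than $r$ loops, that exactly two pairings attain $r$, and that each attaining term is genuinely bounded by $1/n$ in operator norm rather than merely in trace. Bob's operators do not commute, and partial traces of $V$ can have norm as large as $n$. My approach would be to represent each pairing's contraction as $\bra{\psi}(\text{product of }V\text{'s, SWAP-type rewirings on copies of }A)\ket\psi$ on an enlarged space $(\mathbb C^n)^{\otimes k}\otimes\mathbb C^m$. Each such factor is a contraction, so the only dimension factors come from the free traces over $A$-copies, which are counted by the cycle structure of $\sigma$ composed with the $r+1$-cycle. The centering condition removes the pairings that would otherwise produce $n^{r+1}$-type growth. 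This genus-type count is where care is needed.
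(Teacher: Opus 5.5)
\textbf{The per-diagram operator-norm bound is missing.} Your skeleton matches the paper's. You Wick-expand, use the centering to discard permutations $\sigma$ of $\{0,\dots,r\}$ with a fixed point in $1,\dots,r$, and bound each diagram $\mathcal O_\sigma$ in operator norm. Two diagrams are of size $n^{r-1}$ and all others at most $n^{r-2}$, giving $\norm{\mathcal K_r(V)}_\infty\le n^{-r}\bigl(2n^{r-1}+(r+1)!\,n^{r-2}\bigr)$. But the step carrying all the content is a bound on $\norm{\mathcal O_\sigma}_\infty$ valid uniformly over Hermitian contractions $V$, and the mechanism you propose for it fails. You argue that every factor is a contraction, so powers of $n$ come only from free $A$-traces counted by the cycles of $\sigma$ composed with the $(r+1)$-cycle. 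This treats $V$ as having a fixed index topology, but which Alice index lines close into free loops depends on $V$. Take $r=4$, $\sigma(0)=0$, $\sigma=(13)(24)$. The composition of $\sigma$ with $(0\,1\,2\,3\,4)$, in either order, is a single $5$-cycle, so your count predicts $\norm{\mathcal O_\sigma}_\infty\lesssim 1$. Yet for $V=I$ one gets $\mathcal O_\sigma=n^2I$. Conversely, the count that is right for $V=I$ (cycles of $\sigma$ avoiding $0$) predicts $n^0$ for $\sigma(0)=r$, $\sigma(1)=0$, $\sigma(t)=t-1$. For $V$ the swap operator on $\mathbb C^n\otimes\mathbb C^n$, that diagram equals $n^{r-1}$ times the swap. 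So you need one statistic dominating all such topologies simultaneously, together with a proof that it controls the norm for arbitrary entangled $V$.

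\textbf{The leading terms are misidentified, and the exponent count is off by one.} The $V^r$ diagram is $\sigma=(0\,1\,\cdots\,r)$, which contributes only $n^{-r}\norm{V^r}_\infty\le n^{-r}$. The two extremal diagrams are $\bigl((V^{T_A})^r\bigr)^{T_A}$ and $I_n\otimes\tr_A\bigl[(V^{T_A})^r\bigr]$. These are not contractions: both reach norm $n^{r-1}$ for the swap, and the $1/n$ arises as $n^{-r}\cdot n^{r-1}$. Your formula $n\cdot n^{-(r+1)}\cdot n^{\#\text{loops}}$ with maximal loop count $r$ gives $1$, not $1/n$. You need an exponent at most $r-1$, with equality for exactly two permutations.

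\textbf{How the paper closes the gap.} It sets $\operatorname{drop}(\sigma)=\#\{t\ge1:1\le\sigma(t)<t\}$ and contracts the diagram by moving a cut from right to left. Each inserted block then acts as one of four maps:
\begin{itemize}
\item $V$ itself, with norm at most $1$;
\item $V^{T_A}$, with norm at most $n$ by Cauchy--Schwarz from $\norm{x_{ab}}_\infty\le1$;
\item a column map $C$ with $C^*C=\tr_A(V^*V)\preceq nI$, so norm at most $\sqrt n$;
\item a row map $D$ with $DD^*=\tr_A(VV^*)\preceq nI$, so norm at most $\sqrt n$.
\end{itemize}
The exponents sum to exactly $2\operatorname{drop}(\sigma)$, so $\norm{\mathcal O_\sigma}_\infty\le n^{\operatorname{drop}(\sigma)}$. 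A short injectivity argument then gives $\operatorname{drop}(\sigma)\le r-1$, with equality only for the two permutations above: position $1$ is never a drop, and equality forces $\sigma(t)=t-1$ for $t\ge2$.
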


Now we use the proposition to complete the proof of \cref{thm:oneway-main}.
For sufficiently large $n$, let $r=r(n)\ge4$ be the largest even integer such that $(r+1)!\le n$. Then $r(n)\to\infty$, and \cref{prop:main-moment-bound} gives
\begin{equation}
  \norm{\mathcal K_{r(n)}(V)}_\infty\le\frac3n
\end{equation}
uniformly over all Hermitian contractions $V$. Since $c_r\ge r$, \cref{prop:moment-implies-representation} with $\beta=3$ yields
\begin{equation}
  C_n^{A\to B}
  \le\frac{n}{1-3/c_{r(n)}}
  =(1+o(1))n.
\end{equation}

\section{Local operations without communication}\label{sec:lo}

We now prove \cref{thm:lo-main}. By symmetry, it is enough to consider $n\le m$. 
We may also assume $n\ge3$, since our result is asymptotic in $n$. 

For a Hermitian operator $h=\sum_{i,j}E_{ji}\otimes h_{ij}\in M_n\otimes M_m$, we use the following three estimates:
\begin{align}
  \norm{h}_1
  &\le\sqrt n\,\norm{(h_{ij})}_{L_1[R+C]},
  \label{eq:lo-overview-block}\\
  \norm{(h_{ij})}_{L_1[R+C]}
  &\le\left(\sqrt{3n}+O(\frac{1}{\sqrt n})\right)\mathbb E_w\norm{\sum_{i,j}a(w)_{ij}h_{ij}}_1
  +\frac1{\sqrt n}\norm{\sum_i h_{ii}}_1,
  \label{eq:lo-overview-khintchine}\\
  \biggl\lVert\sum_{i,j}a(w)_{ij} h_{ij}\biggr\rVert_1
  &\le\frac\pi4\norm{h}_{\mathrm{LO}},\qquad \forall w\in U(n).
  \label{eq:lo-overview-measurement}
\end{align}
Here $w\in U(n)$ is Haar random and $a(w)=wDw^*$, where the traceless matrix $D$ is defined below.

The first estimate is the block-matrix estimate in \cite{LS26}. 
The second is a noncommutative Khintchine-type inequality, in the line of \cite{LP86,LPP91,HM07}; following \cite{HM07, LS26}, we derive it from a fourth-moment bound and a truncation-and-iteration argument. 
The third estimate follows from an explicit measurement implementable by LO.

Combining these inequalities with $\norm{\tr_Ah}_1\le\norm{h}_{\mathrm{LO}}$, which follows from the definition of the LO norm, gives
\begin{equation}\label{eq:lo-final-bound}
  \norm{h}_1
  \le
  \left(\frac{\pi\sqrt3}{4}n+O(1)\right)\norm{h}_{\mathrm{LO}}.
\end{equation}
This proves \cref{thm:lo-main}.

\subsection{Noncommutative Khintchine inequality}

In this subsection, we prove \cref{eq:lo-overview-khintchine}. 

First, we give the definition of $a(w)$.
Put $r=\lfloor n/3\rfloor$ and $s=n-3r$. On $\mathbb C^n=(\mathbb C^3\otimes\mathbb C^r)\oplus\mathbb C^s$, let
\begin{equation}\label{eq:lo-shift}
  D=(J_3\otimes I_r)\oplus0_s,
  \qquad
  J_3=E_{12}+E_{23}.
\end{equation}
For $w\in U(n)$, set $a(w)=wDw^*$; all expectations over $w$ below are with respect to Haar measure.
The matrix coefficients of $a(w)$ are almost (as $n\to\infty$) orthogonal: 
\begin{equation}\label{eq:lo-orbit-covariance}
  \mathbb E_w\bigl[\overline{a(w)_{ij}}a(w)_{kl}\bigr]
  =\lambda_n\left(\delta_{ik}\delta_{jl}-\frac1n\delta_{ij}\delta_{kl}\right),
  \qquad
  \lambda_n=\frac{2r}{n^2-1},
\end{equation}
which can be verified\footnote{More intrinsically, notice that the linear transformation on $M_n$: $X\mapsto \mathbb E_w [\tr (a(w)^*X)a(w)]$ is invariant under unitary conjugation, and vanishes for $I$.} by a direct calculation using Haar integration and the fact that $\tr(D)=0$, $\tr(D^*D)=2r$.

To prove \cref{eq:lo-overview-khintchine}, we argue by duality.
Recall that for a family $x=(x_{ij})$, we have the row norm and the column norm:
\begin{equation}
    \norm{x}_{L_1[R]}=\tr\left(\sum_{i,j}x_{ij}x_{ij}^*\right)^{1/2},
  \qquad
  \norm{x}_{L_1[C]}=\tr\left(\sum_{i,j}x_{ij}^*x_{ij}\right)^{1/2}.
\end{equation}
And the row-plus-column norm
\begin{equation}\label{eq:lo-row-column-norm}
\begin{gathered}
  \norm{x}_{L_1[R+C]}=\inf_{x=u+v}\left(\norm{u}_{L_1[R]}+\norm{v}_{L_1[C]}\right),
\end{gathered}
\end{equation}
is dual to the $RC$ norm \cite{Pisier03}:
\begin{equation}\label{eq:lo-row-column-dual}
  \norm{y}_{L_\infty[R\cap C]}
  =\max\qty{\biggl\lVert\sum_{ij} y_{ij} y_{ij}^*\biggr\rVert_\infty^{1/2}, \biggl\lVert\sum_{ij} y_{ij}^*y_{ij}\biggr\rVert_\infty^{1/2}}.
\end{equation}

We claim the following bounded representation.
\begin{proposition}\label{prop:lo-bounded-representation}
Let $y=(y_{ij})$ be Hermitian and traceless:
\begin{equation}\label{eq:lo-centered-Hermitian-family}
  y_{ji}=y_{ij}^*,
  \qquad
  \sum_i y_{ii}=0.
\end{equation}
If $\norm{y}_{L_\infty[R\cap C]}\le1$, then there exists a bounded measurable function $F:U(n)\to M_m$ such that
\begin{equation}\label{eq:lo-bounded-synthesis}
  y_{ij}=\mathbb E_w\bigl[\overline{a(w)_{ij}}F(w)\bigr],
  \qquad
  \norm{F}_\infty\le\left(\sqrt3+O(\frac1n)\right)\sqrt n.
\end{equation}
\end{proposition}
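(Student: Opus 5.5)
We follow the same scheme as in the one-way section: build an exact but unbounded representation, truncate it, and then apply \cref{lem:iterative-representation}. The natural unbounded candidate comes from \cref{eq:lo-orbit-covariance}. Put
\begin{equation}
  S_y(w)=\lambda_n^{-1}\sum_{k,l}a(w)_{kl}\,y_{kl}.
\end{equation}
Then
\begin{equation}
  \mathbb E_w\bigl[\overline{a(w)_{ij}}S_y(w)\bigr]
  =\sum_{k,l}\Bigl(\delta_{ik}\delta_{jl}-\tfrac1n\delta_{ij}\delta_{kl}\Bigr)y_{kl}
  =y_{ij}-\tfrac1n\delta_{ij}\sum_k y_{kk}
  =y_{ij},
\end{equation}
where the last equality uses tracelessness. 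However, $S_y(w)$ need not be Hermitian, because $a(w)$ is nilpotent rather than Hermitian. We therefore work with general (non-Hermitian) $F$, and truncate through the polar decomposition, i.e.\ via singular values: $\operatorname{trunc}_\tau(S)=U\min\{\abs{S},\tau\}$.

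We set up the Banach spaces as follows. $X$ consists of bounded measurable $F:U(n)\to M_m$ with the sup operator norm. $Y$ is the space of families $y$ (not necessarily Hermitian, and we drop tracelessness) with the $L_\infty[R\cap C]$ norm. The map is $\mathcal Q(F)_{ij}=\mathbb E_w[\overline{a(w)_{ij}}F(w)]$, followed by the projection removing the trace part $\tfrac1n\delta_{ij}\sum_k y_{kk}$. Since the orbit covariance kills the trace part anyway, $\mathcal Q(S_y)=y$ holds on the traceless subspace. To keep everything inside the traceless subspace, we should either check that $\mathcal Q$ maps into it automatically (it does: $\sum_i\overline{a(w)_{ii}}=\overline{\tr a(w)}=0$), or project.

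For the one-step estimate we take $G_y=\operatorname{trunc}_\tau(S_y)$ with $\tau=c\sqrt n$, and bound the residual $R_y=\mathcal Q(S_y-G_y)$ in the $R\cap C$ norm. The row part is
\begin{equation}
  \sum_{ij}R_{ij}R_{ij}^*,
  \qquad
  R_{ij}=\mathbb E\bigl[\overline{a_{ij}}\,\rho_\tau(S)\bigr].
\end{equation}
By a Cauchy--Schwarz/duality argument, for unit vectors this is bounded by $\mathbb E\bigl[\sum_{ij}\abs{a_{ij}}^2\bigr]$ (which is $\tr(D^*D)=2r$ pointwise) times $\mathbb E\norm{\rho_\tau(S)}^2$ in a suitable operator sense. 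More precisely, viewing $y\mapsto$ coefficients as a column operator, $\norm{\mathcal Q(\Phi)}_{R}\le\sqrt{2r}\,\bigl\lVert\mathbb E[\Phi\Phi^*]\bigr\rVert^{1/2}$. Next we use the scalar bound $(\abs{s}-\tau)_+^2\le s^4/(4\tau^2)$ to get $\rho_\tau(S)\rho_\tau(S)^*\preceq (SS^*)^2/(4\tau^2)$. This reduces the problem to a fourth-moment bound
\begin{equation}
  \bigl\lVert\mathbb E[(S_yS_y^*)^2]\bigr\rVert_\infty\le (K+o(1))\,\lambda_n^{-2}
  \quad\text{for }\norm{y}_{R\cap C}\le1,
\end{equation}
and the same bound with $S^*S$. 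The second-moment identity gives $\mathbb E[SS^*]\approx\lambda_n^{-1}\sum y_{ij}y_{ij}^*\preceq\lambda_n^{-1}$, and $\lambda_n^{-1}\approx 3n/2$. With $\tau\approx\sqrt{3n}$ the residual is then at most a fixed $\theta<1$ times $\norm{y}$. The correct bookkeeping has to produce final constant $\sqrt3\sqrt n$ after dividing by $1-\theta$. As in \cite{HM07,LS26}, this likely requires iterating with an optimized $\tau$, or using a sharper truncation, so that the constant is asymptotically governed by the second moment alone. I would aim for the leading constant $\sqrt{\lambda_n^{-1}\cdot 2}=\sqrt{3n}\,(1+O(1/n))$, so that the Gaussian-like second moment dictates it, with the fourth moment contributing only lower order.

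The main obstacle is the fourth-moment computation: evaluating $\mathbb E_w$ of degree-$(2,2)$ polynomials in the entries of $wDw^*$ via Weingarten calculus, and showing that the non-Gaussian corrections (coming from $D^2\ne0$ but $D^3=0$, and the cross terms $D D^*$) are $O(1/n)$ relative to the leading Wick-type terms, uniformly in $m$. I would first reduce by unitary invariance to the moments $\tr(D^{\epsilon_1}D^{\epsilon_2}\cdots)$ with $\epsilon\in\{1,*\}$. These are explicitly $2r$, $r$, etc., for $J_3\otimes I_r$. Then I would bound each Weingarten pairing's operator contribution by $\norm{y}_{R\cap C}^4$ using the standard noncommutative estimates, exactly as in the Khintchine fourth-moment arguments of \cite{HM07}.
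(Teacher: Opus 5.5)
Your architecture (exact representation by $S_y$, singular-value truncation, \cref{lem:iterative-representation}, fourth moment by Weingarten calculus) matches the paper's, but the residual step does not close as you wrote it. Your Cauchy--Schwarz bound $\sum_{ij}R_{ij}R_{ij}^*\preceq\bigl(\mathbb E\sum_{ij}\abs{a(w)_{ij}}^2\bigr)\mathbb E[\Phi\Phi^*]=2r\,\mathbb E[\Phi\Phi^*]$ is true but too weak by the factor $2r/\lambda_n=n^2-1$. With your normalization $S_y=\lambda_n^{-1}\sum_{k,l}a(w)_{kl}y_{kl}$ one has $\mathbb E(S_yS_y^*)^2\preceq\beta_n\lambda_n^{-2}$ with $\beta_n\approx2$. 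At $\tau\approx\sqrt{3n}$ your bound then controls the residual only by a quantity of order $n$. Making it smaller than $1$ forces $\tau$ of order $n^{3/2}$, i.e.\ $\norm{F}_\infty=O(n^{3/2})$. So your claim that $\tau\approx\sqrt{3n}$ gives a fixed $\theta<1$ does not follow from your own estimate. The missing idea is that \cref{eq:lo-orbit-covariance} makes the Gram matrix of the functions $a(w)_{ij}$ equal to $\lambda_n$ times an orthogonal projection (onto traceless matrices). Hence $\lambda_n^{-1/2}a(w)_{ij}$ is a Parseval frame for its span, and Bessel's inequality, applied coordinatewise to $\Phi(w)^*\xi$, gives $\sum_{ij}R_{ij}R_{ij}^*\preceq\lambda_n\,\mathbb E[\Phi\Phi^*]$, and likewise for the column sum. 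This is \cref{eq:lo-bessel}, and it is what makes a truncation level of order $\sqrt n$ work.

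The constant is also not correctly accounted for. It is not governed by the second moment alone. In the paper's normalization $S_y=\lambda_n^{-1/2}\sum_{i,j}a(w)_{ij}y_{ij}$, with $\mathbb E(S_y^*S_y)^2,\ \mathbb E(S_yS_y^*)^2\preceq\beta_nI_m$, the sharp scalar inequality $(s-\tau)_+^2\le s^4/(16\tau^2)$ (equivalent to $(s-2\tau)^2\ge0$) bounds the residual by $\sqrt{\beta_n}/(4\tau)$. Taking $\tau=\sqrt{\beta_n}/2$ gives $\theta=1/2$, $\norm{G}_X\le\sqrt{\beta_n}$, and $\norm{F}_\infty\le\sqrt{\beta_n/\lambda_n}$. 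The $2$ in your target $\sqrt{2\lambda_n^{-1}}$ is exactly the fourth-moment constant $\beta_n\to2$, and no further optimization of $\tau$ is needed. With your weaker inequality $s^4/(4\tau^2)$, the same scheme yields only $2\sqrt{\beta_n/\lambda_n}$.

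Your fourth-moment plan also fails as stated: you propose to bound each Weingarten pairing by $\norm{y}_{L_\infty[R\cap C]}^4$ and treat the non-Gaussian terms as $O(1/n)$ corrections. But the four-cycle contractions, e.g.\ $\mathcal P_{(1234)}=\sum_{i,k}Y_{ik}Y_{ik}^*$ with $Y_{ik}=\sum_jy_{ij}^*y_{kj}$, can have norm of order $n$. Take the centered SWAP family $y_{ij}=E_{ji}/\sqrt n-\delta_{ij}I/n^{3/2}$ with $m=n$. With Weingarten coefficients $C_n\approx-1/(2n)$ and $D_n\approx-1/(4n)$, these terms contribute $O(1)$, and bounding them in norm pushes $\beta_n$ above $2$. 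The paper controls them by sign: $C_n<D_n<0$, $Q_1\succeq0$, and $P+Q=\mathbb E_gX(g)^4\succeq0$ for the Gaussian $X(g)=\sum_{i,j}\overline{g_i}g_jy_{ij}$, together with $-P_2\preceq P_3\preceq P_2$. These steps use $y_{ji}=y_{ij}^*$. That is why the paper keeps $Y$ Hermitian and checks that $\mathcal QG_y$ stays Hermitian, via $S_y(wv)=S_y(w)^*$ for a unitary $v$ with $vDv^*=D^*$. If you drop Hermiticity from $Y$, you need a separate fourth-moment bound for non-Hermitian families.
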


\begin{proof}
For a bounded measurable function $G:U(n)\to M_m$, define
\begin{equation}\label{eq:lo-coefficient-projection}
  (\mathcal QG)_{ij}
  =\lambda_n^{-1/2}\mathbb E_w\bigl[\overline{a(w)_{ij}}G(w)\bigr].
\end{equation}
Also, for every Hermitian traceless family $y$, define
\begin{equation}\label{eq:lo-orbit-polynomial}
  S_y(w)=\lambda_n^{-1/2}\sum_{i,j}a(w)_{ij}y_{ij}.
\end{equation}
The identity \cref{eq:lo-orbit-covariance} and the fact that $y$ is traceless give
\begin{equation}\label{eq:QSyy}
  \mathcal Q S_y=y.
\end{equation}

\Cref{eq:QSyy} already has the form of the first identity in \cref{eq:lo-bounded-synthesis}. To obtain the norm bound, we use the iterative argument of \cref{lem:iterative-representation}. Let $Y$ be the real Banach space of families satisfying \cref{eq:lo-centered-Hermitian-family}, equipped with the $L_\infty[R\cap C]$ norm in \cref{eq:lo-row-column-dual}, and let $X$ be the subspace of $L_\infty(U(n);M_m)$ whose image under $\mathcal Q$ is in $Y$.

We now verify the single-iteration estimate in \cref{eq:one-step-abstract-representation}. 
Fix $\tau>0$, let $\operatorname{trunc}(S_y(w))$ be obtained by truncating the singular values of $S_y(w)$ at $\tau$, and set $\rho(S_y(w))=S_y(w)-\operatorname{trunc}(S_y(w))$. Define the candidate and its residual by
\begin{equation}\label{eq:lo-truncated-candidate}
  G_y(w)=\operatorname{trunc}(S_y(w)),
  \qquad
  r_y=y-\mathcal QG_y
  =\mathcal Q\rho(S_y).
\end{equation}
$\mathcal QG_y$ is traceless since $\tr a(w)=0$.
Moreover, $y$ being Hermitian implies $S_y(wv)=S_y(w)^*$ for every $w$, where $v$ is a unitary such that $vDv^*=D^*$. The fact that singular-value truncation commutes with unitary conjugation then implies $\mathcal QG_y$ is also Hermtian. 
Hence $G_y\in X$.
We also have $\norm{G_y}_X\le\tau$ due to truncation, so it remains to estimate $\norm{r_y}_Y$.

Since $a(w)$ is traceless, \cref{eq:lo-orbit-covariance} shows that the functions $\lambda_n^{-1/2}a(w)_{ij}$ form a Parseval frame for their span. 
Applying Bessel's inequality to the coefficient family $\mathcal Q\rho(S_y)=r_y$, and similarly to its adjoint, gives
\begin{equation}\label{eq:lo-bessel}
\begin{aligned}
  \sum_{i,j}(r_y)_{ij}^*(r_y)_{ij}
  &\preceq\mathbb E_w\qty[\rho(S_y(w))^*\rho(S_y(w))],\\
  \sum_{i,j}(r_y)_{ij}(r_y)_{ij}^*
  &\preceq\mathbb E_w\qty[\rho(S_y(w))\rho(S_y(w))^*].
\end{aligned}
\end{equation}
To estimate the right-hand sides, we use the scalar inequality
\begin{equation}\label{eq:lo-scalar-trunc-tail}
  (s-\tau)_+^2\le\frac{s^4}{16\tau^2},
  \qquad s\ge0.
\end{equation}
Applied to the singular values of $S_y(w)$, it gives
\begin{equation}\label{eq:lo-trunc-tail}
  \rho(S_y(w))^*\rho(S_y(w))
  \preceq\frac{(S_y(w)^*S_y(w))^2}{16\tau^2},
  \qquad
  \rho(S_y(w))\rho(S_y(w))^*
  \preceq\frac{(S_y(w)S_y(w)^*)^2}{16\tau^2}.
\end{equation}

Now we need to control the fourth moments of $S_y(w)$. The computation in \cref{app:lo-fourth-moment} gives a sequence $\beta_n=2+O(n^{-1})$ such that, whenever $y\in Y$ and $\norm{y}_Y\le1$,
\begin{equation}\label{eq:lo-fourth-moment}
  \mathbb E_w(S_y^*S_y)^2\preceq \beta_nI_m,
  \qquad
  \mathbb E_w(S_yS_y^*)^2\preceq \beta_nI_m.
\end{equation}
Combining \cref{eq:lo-bessel,eq:lo-trunc-tail,eq:lo-fourth-moment}, we obtain
\begin{equation}\label{eq:lo-residual-estimate}
\begin{aligned}
  \sum_{i,j}(r_y)_{ij}^*(r_y)_{ij}
  &\preceq \frac{\beta_n}{16\tau^2}I_m,\\
  \sum_{i,j}(r_y)_{ij}(r_y)_{ij}^*
  &\preceq \frac{\beta_n}{16\tau^2}I_m.
\end{aligned}
\end{equation}
Thus $\norm{r_y}_Y\le\sqrt{\beta_n}/(4\tau)$.
Taking $\tau=\sqrt{\beta_n}/2$ gives
\begin{equation}\label{eq:lo-one-step-contraction}
  \norm{G_y}_X\le\frac{\sqrt{\beta_n}}2,
  \qquad
  \norm{r_y}_Y\le\frac12.
\end{equation}

Therefore, iterative argument of \cref{lem:iterative-representation} gives $G\in X$ such that
\begin{equation}\label{eq:lo-exact-representation}
  \mathcal QG=y,
  \qquad
  \norm{G}_X\le\sqrt{\beta_n}.
\end{equation}
Setting $F=G/\sqrt{\lambda_n}$ proves \cref{eq:lo-bounded-synthesis}, since $\sqrt{\beta_n/\lambda_n}=\left(\sqrt3+O(n^{-1})\right)\sqrt n$.
\end{proof}

This representation gives the required Khintchine inequality after a scalar-traceless decomposition of the dual variable.
\begin{proposition}\label{prop:lo-orbit-khintchine}
For every Hermitian operator $h=\sum_{i,j}E_{ji}\otimes h_{ij}\in M_n\otimes M_m$,
\begin{equation}\label{eq:lo-orbit-khintchine}
\begin{aligned}
  \norm{(h_{ij})}_{L_1[R+C]}
  \le{}
  \left(\sqrt3+O(n^{-1})\right)\sqrt n\,
  \mathbb E_w\norm{\sum_{i,j}a(w)_{ij}h_{ij}}_1
  +\frac1{\sqrt n}\norm{\sum_i h_{ii}}_1.
\end{aligned}
\end{equation}
\end{proposition}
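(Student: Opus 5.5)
We argue by duality, using Proposition~\ref{prop:lo-bounded-representation} on the traceless part of the dual variable. By the duality between $L_1[R+C]$ and $L_\infty[R\cap C]$ in \cref{eq:lo-row-column-dual}, we have
\begin{equation*}
  \norm{(h_{ij})}_{L_1[R+C]}=\sup\Bigl\{\Bigl|\sum_{i,j}\tr(y_{ij}h_{ij})\Bigr| : \norm{y}_{L_\infty[R\cap C]}\le1\Bigr\},
\end{equation*}
with a pairing chosen to match the convention $h=\sum E_{ji}\otimes h_{ij}$. Since $h$ is Hermitian, $h_{ji}=h_{ij}^*$. Replacing $y$ by its Hermitian part $\frac12(y_{ij}+y_{ji}^*)$ does not increase the $R\cap C$ norm, because both the row and column sums are permuted or adjointed under $y\mapsto(y_{ji}^*)$. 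It also leaves the pairing's real part unchanged, and the pairing is real for Hermitian data. So we may restrict the supremum to Hermitian families $y$.

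Next, split $y$ into a scalar part and a traceless part. Put $z=\frac1n\sum_i y_{ii}$ and $y'_{ij}=y_{ij}-\delta_{ij}z$. Then $y'$ is Hermitian and traceless, as required by \cref{eq:lo-centered-Hermitian-family}. The pairing splits as
\begin{equation*}
  \sum_{i,j}\tr(y_{ij}h_{ij})=\sum_{i,j}\tr(y'_{ij}h_{ij})+\tr\Bigl(z\sum_i h_{ii}\Bigr).
\end{equation*}
For the scalar term, $\norm{z}_\infty\le\frac1n\norm{\sum_i y_{ii}}_\infty$. By Cauchy--Schwarz in the operator sense, $\bigl(\sum_i y_{ii}\bigr)^*\bigl(\sum_i y_{ii}\bigr)\preceq n\sum_i y_{ii}^*y_{ii}\preceq nI$. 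Hence $\norm{z}_\infty\le n^{-1/2}$, which gives the last term $\frac1{\sqrt n}\norm{\sum_ih_{ii}}_1$.

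For the traceless part we need $\norm{y'}_{L_\infty[R\cap C]}\le 1+O(1/n)$, or a bound sharp enough to be absorbed into the $O(n^{-1})$ factor. One way is to note that $y\mapsto y'$ is a contraction on $L_\infty[R\cap C]$. To see this, compute
\begin{equation*}
  \sum_{ij}y'^*_{ij}y'_{ij}=\sum_{ij}y_{ij}^*y_{ij}-n z^*z\preceq\sum_{ij}y_{ij}^*y_{ij},
\end{equation*}
since the cross terms give $-2nz^*z+nz^*z$, and similarly for the rows. So $\norm{y'}\le1$ exactly.

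Now apply Proposition~\ref{prop:lo-bounded-representation} to obtain $F$ with $y'_{ij}=\mathbb E_w[\overline{a(w)_{ij}}F(w)]$ and $\norm{F}_\infty\le(\sqrt3+O(1/n))\sqrt n$. Substituting, and keeping track of which index carries the conjugate so that it matches the pairing (if needed, replace $F$ by the representation for the transposed family, or use $a\mapsto\bar a$ conventions), we get
\begin{equation*}
  \Bigl|\sum_{i,j}\tr(y'_{ij}h_{ij})\Bigr|=\Bigl|\mathbb E_w\tr\Bigl(F(w)\sum_{i,j}a(w)_{ij}h_{ij}\Bigr)\Bigr|\le\norm{F}_\infty\,\mathbb E_w\Bigl\|\sum_{i,j}a(w)_{ij}h_{ij}\Bigr\|_1
\end{equation*}
by Hölder's inequality. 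Taking the supremum over $y$ gives \cref{eq:lo-orbit-khintchine}.

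The main obstacle is bookkeeping rather than analysis. The index and conjugation conventions must line up, namely $\overline{a_{ij}}$ in the synthesis, $a_{ij}h_{ij}$ in the estimate, and the pairing for $E_{ji}\otimes h_{ij}$. The reduction to Hermitian $y$ must also legitimately use that $h$ is Hermitian. The estimates themselves are immediate from the bounded representation and the exact contraction property of the traceless projection.
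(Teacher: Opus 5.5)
Your proposal is correct and follows the paper's proof step by step. Both proofs Hermitian-symmetrize the dual element, subtract the scalar part $\frac1n\sum_k y_{kk}$ using the identity $\sum_{ij}(y'_{ij})^*y'_{ij}=\sum_{ij}y_{ij}^*y_{ij}-nz^*z$, and then apply the bounded representation and H\"older; your operator Cauchy--Schwarz bound on $\norm{z}_\infty$ is just an equivalent variant of the paper's deduction of $nz^*z\preceq I_m$ from that same identity.

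The conjugation bookkeeping you flag is settled by using the sesquilinear pairing $\sum_{ij}\tr(y_{ij}^*h_{ij})$, as the paper does; this is legitimate because the $R\cap C$ norm is invariant under entrywise adjoints. With that pairing, $y'_{ij}=\mathbb E_w[\overline{a(w)_{ij}}F(w)]$ yields $\mathbb E_w\tr\bigl(F(w)^*\sum_{ij}a(w)_{ij}h_{ij}\bigr)$ directly. Your bilinear pairing also works: $D$ is real, so $\overline{a(w)}=a(\bar w)$, and $\bar w$ is again Haar distributed.
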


\begin{proof}
Choose $y$ in the $L_\infty[R\cap C]$ norm (\cref{eq:lo-row-column-dual}) unit ball such that $\sum_{i,j}\tr(y_{ij}^*h_{ij})=\norm{(h_{ij})}_{L_1[R+C]}$. Since $h$ is Hermitian, replacing each $y_{ij}$ with $(y_{ij}+y_{ji}^*)/2$ preserves this pairing.
Moreover, this replacement does not increase the $L_\infty[R\cap C]$ norm. We may therefore assume that $y$ is Hermitian. 

Set
\begin{equation}\label{eq:lo-dual-centering}
  b=\frac1n\sum_k y_{kk},
  \qquad
  y_{ij}^0=y_{ij}-\delta_{ij}b.
\end{equation}
Then $y^0$ is a traceless Hermitian family, and
\begin{equation}\label{eq:lo-centered-square-budgets}
\begin{aligned}
  \sum_{i,j}(y_{ij}^0)^*y_{ij}^0
  &=\sum_{i,j}y_{ij}^*y_{ij}-nb^*b,\\
  \sum_{i,j}y_{ij}^0(y_{ij}^0)^*
  &=\sum_{i,j}y_{ij}y_{ij}^*-nbb^*.
\end{aligned}
\end{equation}
Thus $y^0$ remains in the dual unit ball, and $nb^*b\preceq I_m$, so $\norm{b}_\infty\le1/\sqrt n$.
Moreover,
\begin{equation}\label{eq:lo-dual-decomposition}
  \norm{(h_{ij})}_{L_1[R+C]}
  =\sum_{i,j}\tr((y_{ij}^0)^*h_{ij})
  +\tr\left(b\sum_i h_{ii}\right).
\end{equation}
By \cref{eq:lo-bounded-synthesis}, there is a bounded function $F$ such that $y_{ij}^0=\mathbb E_w[\overline{a(w)_{ij}}F(w)]$ and $\norm{F}_\infty\le(\sqrt3+O(n^{-1}))\sqrt n$. Hence
\begin{equation}
\begin{aligned}
  \norm{(h_{ij})}_{L_1[R+C]}
  &\le\left|\mathbb E_w\tr\left(F(w)^*\sum_{i,j}a(w)_{ij}h_{ij}\right)\right|
  +\left|\tr\left(b\sum_i h_{ii}\right)\right|\\
  &\le\left(\sqrt3+O(n^{-1})\right)\sqrt n\,
  \mathbb E_w\norm{\sum_{i,j}a(w)_{ij}h_{ij}}_1
  +\frac1{\sqrt n}\norm{\sum_i h_{ii}}_1.
\end{aligned}
\end{equation}
\end{proof}



\subsection{Explicit local measurement}

We now prove \cref{eq:lo-overview-measurement}.

\begin{proposition}\label{prop:lo-phase-povm}
For every Hermitian $h=\sum_{i,j}E_{ji}\otimes h_{ij}\in M_n\otimes M_m$,
\begin{equation}\label{eq:lo-phase-povm}
  \norm{h}_{\mathrm{LO}}
  \ge
  \frac 4\pi \sup_{w\in U(n)} \norm{\sum_{i,j}a(w)_{ij}h_{ij}}_1.
\end{equation}
\end{proposition}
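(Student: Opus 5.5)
The idea is to realize $K(w):=\sum_{i,j}a(w)_{ij}h_{ij}$ as a Fourier coefficient of a family of partial traces produced by a fixed covariant phase POVM on Alice's side. We then bound $\norm{K(w)}_1$ by integrating against a square wave rather than a cosine. Because the relevant functions only contain Fourier modes $|k|\le2$, the square wave (which has only odd modes) can replace the cosine at the cost of the factor $\pi/4$.

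\textbf{Step 1: Alice's phase POVM.} Since $\tr_A[(X\otimes I)h]=\sum_{i,j}X_{ij}h_{ij}$ for $h=\sum E_{ji}\otimes h_{ij}$, we have $K(w)=\tr_A[(a(w)\otimes I_m)h]$. On $\mathbb C^3$ let $v_\varphi=(1,e^{i\varphi},e^{2i\varphi})/\sqrt3$ and $e_\varphi=3v_\varphi v_\varphi^*$, whose $(j,k)$ entry is $e^{i(j-k)\varphi}$. Then $\int e_\varphi\,\frac{d\varphi}{2\pi}=I_3$ and $\int e^{i\varphi}e_\varphi\frac{d\varphi}{2\pi}=E_{12}+E_{23}=J_3$. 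Alice's POVM on $\mathbb C^n$ consists of the continuous family $w\bigl[(e_\varphi\otimes I_r)\oplus 0_s\bigr]w^*$, $\varphi\in[0,2\pi)$, together with the single effect $w(0\oplus I_s)w^*$. By the footnote of the one-way section, a continuous POVM may be approximated by finite ones. This gives
\[
\int e^{i\varphi}\,w\bigl[(e_\varphi\otimes I_r)\oplus0_s\bigr]w^*\,\frac{d\varphi}{2\pi}=wDw^*=a(w).
\]
Setting $h_\varphi=\tr_A[(w[(e_\varphi\otimes I_r)\oplus0]w^*\otimes I)h]$, which is Hermitian, we get $K(w)=\int e^{i\varphi}h_\varphi\frac{d\varphi}{2\pi}$. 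Every matrix entry of $h_\varphi$ is a trigonometric polynomial in $\varphi$ of degree at most $2$.

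\textbf{Step 2: Bob's measurement.} Choose a unitary $W=\sum_\ell e^{i\psi_\ell}Q_\ell$ (spectral decomposition) with $\norm{K}_1=\tr(WK)$. Bob measures $(Q_\ell)_\ell$, independently of Alice, so the joint POVM is LO. Put $f_\ell(\varphi)=\tr(Q_\ell h_\varphi)$. These are real trigonometric polynomials of degree $\le2$, and
\[
\norm{K}_1=\sum_\ell\int\cos(\varphi+\psi_\ell)f_\ell(\varphi)\frac{d\varphi}{2\pi},
\]
where we took the real part, which is legitimate because the left side is real.

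\textbf{Step 3: square-wave replacement (the key point).} The function $c_\psi(\varphi)=\frac\pi4\operatorname{sgn}\cos(\varphi+\psi)$ has vanishing Fourier coefficients at $k=0,\pm2$ (only odd modes occur), and its $\pm1$ coefficients equal $\frac\pi4\cdot\frac2\pi e^{\pm i\psi}=\frac12e^{\pm i\psi}$, exactly those of $\cos(\varphi+\psi)$. Since $f_\ell$ only has modes $|k|\le2$, Parseval gives $\int\cos(\varphi+\psi_\ell)f_\ell=\int c_{\psi_\ell}f_\ell\le\frac\pi4\int|f_\ell|$. Summing over $\ell$,
\[
\norm{K(w)}_1\le\frac\pi4\sum_\ell\int\abs{\tr\bigl((e^w_\varphi\otimes Q_\ell)h\bigr)}\frac{d\varphi}{2\pi}\le\frac\pi4\norm{h}_{\mathrm{LO}},
\]
where $e^w_\varphi$ denotes Alice's effect density. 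The outcome for the $I_s$ effect only adds nonnegative terms. Taking the supremum over $w$ proves the claim.

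The main obstacle, and the heart of the argument, is recognizing that a cosine weight only yields the constant $1$. The improvement to $4/\pi$ comes from the degree-$2$ band-limitation of $f_\ell$, which is forced by the choice of the $3\times3$ nilpotent $J_3$: the square wave's missing even harmonics make it indistinguishable from the cosine on this band. Checking the Fourier bookkeeping in Step 3 and the measurability and approximation of the continuous POVM are the remaining routine points.
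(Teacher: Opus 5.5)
Your proposal is correct and follows essentially the same route as the paper. Both arguments use the covariant phase POVM built from $\ket{\theta}=e_1+e^{i\theta}e_2+e^{2i\theta}e_3$ on Alice's side, and have Bob measure the spectral projections of the unitary polar factor of $\sum_{i,j}a(w)_{ij}h_{ij}$. Both then replace the cosine weight by the square wave $\operatorname{sgn}\cos$, which is allowed because the pairing functions are real trigonometric polynomials of degree at most two.

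The differences are cosmetic. You split off the $I_s$ block as a separate outcome, whereas the paper adds it to every density, where the square wave's zero mean kills it. You also phrase the argument as an upper bound on $\norm{K(w)}_1$ rather than as a lower bound on the value of the LO measurement.
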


Equivalently, for any $w\in U(n)$, there exist POVMs for Alice and Bob, possibly depending on $w$, such that
\begin{equation}\label{eq:LOUpair}
  \expval{\text{POVM}_A\otimes \text{POVM}_B, h} \geq \frac 4\pi \norm{\sum_{i,j}a(w)_{ij}h_{ij}}_1,
\end{equation}
where the l.h.s. is an abbreviation of the r.h.s. of \cref{eq:intro-measurement-norm} (after the ``sup").
We will construct such POVMs explicitly.

\begin{proof}
For $\theta\in[0,2\pi)$, let $\ket{\theta}=e_1+e^{i\theta}e_2+e^{2i\theta}e_3$ and define Alice's POVM by
\begin{equation}\label{eq:lo-phase-measurement}
  M_w(d\theta)
  =m_w(\theta)\frac{d\theta}{2\pi},
  \qquad
  m_w(\theta)
  =w\bigl((\ket{\theta}\!\bra{\theta}\otimes I_r)\oplus I_s\bigr)w^*.
\end{equation}
This is indeed a POVM, since $\int_0^{2\pi}M_w(d\theta)=I_n$.
Moreover, for later use, we note that:
\begin{equation}\label{eq:lo-phase-first-moments}
  \int_0^{2\pi}e^{i\theta}M_w(d\theta)=a(w),
  \qquad
  \int_0^{2\pi}e^{-i\theta}M_w(d\theta)=a(w)^*.
\end{equation}

To define Bob's POVM, set $H(w)=\sum_{i,j}a(w)_{ij}h_{ij}$ and take its polar decomposition $H(w)=W\abs{H(w)}$.
Write $W=\sum_b e^{i\phi_b}F_b$. Bob's POVM is the projective measurement $(F_b)$, independently of Alice's outcome.

Now let us check \cref{eq:LOUpair}, which now reads
\begin{equation}\label{eq:lo-explicit-pairing}
  \sum_b\int_0^{2\pi}
  \abs{\tr\bigl(h(m_w(\theta)\otimes F_b)\bigr)}\frac{d\theta}{2\pi}
  \ge\frac4\pi\norm{H(w)}_1.
\end{equation}
Denote $s_b(\theta)=\tr\bigl(h(m_w(\theta)\otimes F_b)\bigr)$.
It is a real trigonometric polynomial of degree at most two
Using \cref{eq:lo-phase-first-moments}, we get its Fourier expansion:
\begin{equation}
\begin{aligned}
  s_b(\theta)
  &=(\text{constant})
  +e^{-i\theta}\tr(H(w)F_b)
  +e^{i\theta}\tr(H(w)^*F_b)
  +(\text{2\text{nd}-order terms})\\
  &=(\text{constant}) + 2\tr(\abs{H(w)}F_b)\cos(\theta-\phi_b)
  +(\text{2nd-order terms}).
\end{aligned}
\end{equation}
This motivates us to define
\begin{equation}\label{eq:lo-alt-sign-choice}
  c_b(\theta)=\operatorname{sgn}\cos(\theta-\phi_b),
\end{equation}
which only has odd Fourier modes as a function of $\theta$:
\begin{equation}
  c_b(\theta) = 
\frac4\pi \cos(\theta-\phi_b) + (\text{3rd-order terms})+\cdots.
\end{equation}
Now we have:
\begin{equation}\label{eq:lo-alt-measurement-bound}
\begin{aligned}
  \sum_b\int_0^{2\pi}
  \abs{\tr\bigl(h(m_w(\theta)\otimes F_b)\bigr)}\frac{d\theta}{2\pi}
  &\ge\sum_b\int_0^{2\pi}c_b(\theta)s_b(\theta)\frac{d\theta}{2\pi}\\
  &=\frac4\pi\sum_b\tr(\abs{H(w)}F_b)
  =\frac4\pi\tr\abs{H(w)}
  =\frac4\pi\norm{H(w)}_1.
\end{aligned}
\end{equation}
\end{proof}

\section{Conclusion and outlook}\label{sec:conclusion}

In this paper, we studied the optimal quantum data-hiding ratios for several classes of locality-restricted measurements.
 Our first result determines these ratios exactly for PPT, separable, and LOCC measurements: on $\mathbb C^n\otimes\mathbb C^m$, all three are equal to $d=\min\{n,m\}$. We also proved the stronger geometric statement that, for every $2\le p\le\infty$, the largest centered Schatten $p$-ball of binary observables implementable by two-round LOCC has radius $d^{2/p-1}$, which is optimal even for PPT measurements. 
 For more restricted measurement protocols, we proved that the largest data-hiding ratio for Alice-first one-way LOCC satisfies $\sup_m R_{A\to B}(n,m)=(1+o(1))n$, and obtained the upper bound $R_{\mathrm{LO}}(n,m)\le(\pi\sqrt3/4)d+O(1)$ for local operations without communication.

 Several natural questions remain and we leave them for future research.
 For one-way LOCC, is the $o(1)$ term necessary, or it can be improved to $R_{A\to B}(n,m)=\min\{n,m\}$?
 For local operations without communication, what is the precise optimal value?
 It is also natural to consider the corresponding questions in the multipartite quantum data-hiding setting \cite{EW02}.

\section*{Acknowledgements and AI use}

The main findings are human-generated, with two exceptions.
First, the construction in \cref{sec:oneway} is simplied by Claude.
We initially considered the POVM $\{\ketbra{\phi}\}$ with Haar random states, which also seems to work following the same argument.
Then Claude suggests the Gaussian rank-one POVM, which makes the moment estimates in \cref{app:moment-bound} cleaner.
Second, calculations in \cref{app:lo-fourth-moment} were first carried out by ChatGPT and then later verified and improved by human.
We also used AI tools in drafting the manuscript (later extensively revised by human).

\appendix

\section{Moment estimates in one-way LOCC proof} \label{app:moment-bound}

This appendix supplies the moment estimates (\cref{prop:main-moment-bound}) used in the one-way LOCC proof.

For the Gaussian vector $g$ from \cref{eq:gaussian-covariance}, recall that
\begin{equation}\label{eq:app-SVg}
  T_V(g)=\bra{g}V\ket{g}_A,
  \qquad
  S_V(g)=T_V(g)-\frac{\tr_A V}{n},
\end{equation}
and, for every even integer $r\ge2$,
\begin{equation}\label{eq:app-operator-valued-moment}
  \mathcal K_r(V)
  =n\,\mathbb E\bigl[gg^*\otimes S_V(g)^r\bigr].
\end{equation}
We prove \cref{prop:main-moment-bound}, namely, that for every Hermitian
contraction $V\in M_n\otimes M_m$ and every even integer $r\ge2$,
\begin{equation}\label{eq:app-main-moment-bound}
  \norm{\mathcal K_r(V)}_\infty
  \le\frac2n+\frac{(r+1)!}{n^2}.
\end{equation}
The proof proceeds by expanding $\mathcal K_r(V)$ using Wick's formula and
estimating the resulting terms.

For $r\ge2$, Wick's formula in our normalization is \cite{Janson97}
\begin{equation}\label{eq:complex-wick}
  \mathbb E\left[
  g_{u_0}\cdots g_{u_r}
  \overline{g_{v_0}}\cdots\overline{g_{v_r}}
  \right]
  =\frac1{n^{r+1}}
  \sum_{\sigma\in S_{\{0,\ldots,r\}}}
  \prod_{t=0}^r\delta_{u_t,v_{\sigma(t)}}.
\end{equation}

As a motivation, we first consider the corresponding moment of $T_V(g)$. Write $V=\sum_{a,b=1}^nE_{ab}\otimes x_{ab}$.
Applying \cref{eq:complex-wick} gives
\begin{equation}\label{eq:uncentered-wick-expansion}
  n\,\mathbb E\bigl[gg^*\otimes T_V(g)^r\bigr]
  =\frac1{n^r}
  \sum_{\sigma\in S_{\{0,\ldots,r\}}}\mathcal O_\sigma.
\end{equation}
Here, for a permutation $\sigma$ of $\{0,\ldots,r\}$,
\begin{equation}\label{eq:diagram-operator}
  \mathcal O_\sigma
  =
  \sum_{i_0,\ldots,i_r=1}^n
  E_{i_{\sigma(0)},i_0}\otimes
  x_{i_1,i_{\sigma(1)}}
  x_{i_2,i_{\sigma(2)}}\cdots
  x_{i_r,i_{\sigma(r)}}.
\end{equation}

The centering in \cref{eq:app-SVg} removes precisely the permutations that fix one of the positions $1,\ldots,r$. (Note that $\sigma(0)=0$ is allowed.)
\begin{lemma}\label{lem:centered-wick-expansion}
For every even $r\ge2$,
\begin{equation}\label{eq:centered-wick-expansion}
  \mathcal K_r(V)
  =\frac1{n^r}
  \sum_{\substack{
  \sigma\in S_{\{0,\ldots,r\}}\\
  \sigma(t)\ne t\ \text{for }1\le t\le r}}
  \mathcal O_\sigma.
\end{equation}
\end{lemma}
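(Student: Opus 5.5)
We expand $S_V(g)^r=\bigl(T_V(g)-\frac{\tr_AV}{n}\bigr)^r$ multilinearly and apply Wick's formula \cref{eq:complex-wick} to each term. Here $\tr_AV/n=\frac1n\sum_a x_{aa}$. For a subset $A\subseteq\{1,\ldots,r\}$ of positions at which we pick the centering term $-\tr_AV/n$ (and $T_V$ at the other positions), the product keeps its order. Applying \cref{eq:complex-wick} only to the Gaussian factors coming from positions outside $A$, together with position $0$ (the $gg^*$ factor), gives
\begin{equation}
  n\,\mathbb E\bigl[gg^*\otimes(\text{term }A)\bigr]
  =\frac{(-1)^{|A|}}{n^{r}}\sum_{\tau\in S_{\{0,\ldots,r\}\setminus A}}
  \mathcal O_{\tau\cup\mathrm{id}_A}.
\end{equation}
The key observation behind this is that when $\sigma$ fixes a position $t\ge1$, the factor $x_{i_t,i_{\sigma(t)}}=x_{i_t,i_t}$ is summed freely over $i_t$. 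That factor therefore equals $\tr_AV$ placed at slot $t$, and the accompanying normalization supplies exactly the factor $1/n$. So $\mathcal O_{\tau\cup\mathrm{id}_A}$, with the $1/n^{r}$ prefactor, reproduces exactly the ordered operator product with $\tr_AV/n$ in the slots of $A$. I will verify this identification carefully, keeping the noncommutative order of the $x$'s. The powers of $n$ should match because each Wick pairing over $r+1-|A|$ pairs contributes $n^{-(r+1-|A|)}$, and the outer factor $n$ together with the $|A|$ factors $1/n$ gives $n^{-r}$ overall.

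Once this identity holds, we have
\begin{equation}
  \mathcal K_r(V)=\frac1{n^r}\sum_{\sigma\in S_{\{0,\ldots,r\}}}\mathcal O_\sigma
  \sum_{A\subseteq \mathrm{Fix}(\sigma)\cap\{1,\ldots,r\}}(-1)^{|A|}.
\end{equation}
This holds because each pair $(A,\tau)$ corresponds bijectively to a permutation $\sigma=\tau\cup\mathrm{id}_A$ with $A\subseteq\mathrm{Fix}(\sigma)\setminus\{0\}$. The inner alternating sum is $1$ if $\sigma$ fixes no $t\in\{1,\ldots,r\}$ and $0$ otherwise, which is inclusion--exclusion. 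That gives \cref{eq:centered-wick-expansion}. Position $0$ is never centered, so $\sigma(0)=0$ remains allowed, as the statement notes.

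The main obstacle is the bookkeeping in the first step. We must check that contracting a fixed point at slot $t$ really produces $\tr_AV$ in the correct position of the ordered product $x_{i_1,i_{\sigma(1)}}\cdots x_{i_r,i_{\sigma(r)}}$. We must also check that the remaining indices of $\mathcal O_\sigma$ restricted to the non-fixed slots coincide with $\mathcal O_\tau$ for the Wick expansion of the smaller moment. I would do this directly from definition \cref{eq:diagram-operator}. Since $\sigma(t)=t$ means the index $i_t$ appears only in the factor $x_{i_t,i_t}$, summing over $i_t$ factors out cleanly, which makes the identification immediate.
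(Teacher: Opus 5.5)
Your proposal is correct and follows essentially the same route as the paper. Both expand $S_V(g)^r$ over subsets of centered positions and apply Wick's formula to the remaining Gaussian factors with uniform coefficient $n^{-r}$. Both then identify each term with $\mathcal O_\sigma$ for the extension $\sigma$ fixing the centered slots, using that a fixed point $t\ge1$ makes $i_t$ appear only in $x_{i_t,i_t}$ and so yields $\tr_A V$ in slot $t$, and cancel via $\sum_{A\subseteq F}(-1)^{|A|}=(1-1)^{|F|}$.
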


\begin{proof}
Put $[r]=\{1,\ldots,r\}$. For $Q\subseteq[r]$, let $\Pi_Q(g)$ be the
ordered product whose $t$th factor is $\tr_A V$ if $t\in Q$ and $T_V(g)$
otherwise. Then
\begin{equation}
  S_V(g)^r
  =\sum_{Q\subseteq[r]}
  \left(-\frac1n\right)^{|Q|}\Pi_Q(g).
\end{equation}
Put $P=[r]\setminus Q$. Applying Wick's formula to the Gaussian factors
indexed by $\{0\}\cup P$ gives
\begin{equation}
  \mathcal K_r(V)
  =\frac1{n^r}
  \sum_{Q\subseteq[r]}(-1)^{|Q|}
  \sum_{\tau\in S_{\{0\}\cup P}}\mathcal O_{\tau,Q}.
\end{equation}
The coefficient is $n^{-r}$ for every $Q$: the factor $n^{-|Q|}$ from the
centered terms compensates for the $|Q|$ missing Gaussian pairs.

Here $\mathcal O_{\tau,Q}$ is obtained by using $\tau$ on
$\{0\}\cup P$ and inserting $\tr_A V$ at the positions in $Q$. Equivalently,
it is $\mathcal O_\sigma$, where $\sigma$ agrees with $\tau$ on
$\{0\}\cup P$ and fixes every position in $Q$.

Now fix $\sigma\in S_{\{0,\ldots,r\}}$ and let
$F=\{t\in[r]:\sigma(t)=t\}$. The terms producing $\mathcal O_\sigma$
correspond exactly to the subsets $Q\subseteq F$, so its total coefficient is
\begin{equation}
  \frac1{n^r}\sum_{Q\subseteq F}(-1)^{|Q|}
  =\frac1{n^r}(1-1)^{|F|}.
\end{equation}
This coefficient vanishes unless $F$ is empty, proving \cref{eq:centered-wick-expansion}.
\end{proof}

By \cref{eq:centered-wick-expansion}, it remains to estimate the individual
operators $\mathcal O_\sigma$. For a permutation $\sigma$ occurring there, call
$t\in\{1,\ldots,r\}$ a drop if $1\le\sigma(t)<t$, and set
\begin{equation}\label{eq:drop-number}
  \operatorname{drop}(\sigma)
  =\#\{t\in\{1,\ldots,r\}:1\le\sigma(t)<t\}.
\end{equation}
The value $\sigma(t)=0$ is not counted.

\begin{lemma}\label{lem:diagram-bound}
For every Hermitian contraction $V$ and every permutation $\sigma$ in \cref{eq:centered-wick-expansion},
\begin{equation}\label{eq:diagram-bound}
  \norm{\mathcal O_\sigma}_\infty
  \le n^{\operatorname{drop}(\sigma)}.
\end{equation}
\end{lemma}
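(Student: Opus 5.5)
The plan is to write $\mathcal O_\sigma$ as a product of operators of norm at most one, together with a few "cup" and "cap" maps whose norms account exactly for the factor $n^{\operatorname{drop}(\sigma)}$. Each block is a matrix element of $V$ in Alice's system, $x_{ab}=(\bra{a}\otimes I_m)V(\ket{b}\otimes I_m)$. In the factor at position $t$, the summation index $i_t$ appears as a row (bra) index and $i_{\sigma(t)}$ as a column (ket) index. Since $\sigma$ is a permutation, every index $i_k$ with $k\ge1$ appears exactly twice: once as the row of factor $k$, and once as the column of factor $t=\sigma^{-1}(k)$ (or as the external index when $\sigma(0)=k$). The index $i_0$ is the external output index of $E_{i_{\sigma(0)},i_0}$ and is also a column index wherever $\sigma^{-1}(0)\ne0$.

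I will read the product $x_{i_1,i_{\sigma(1)}}\cdots x_{i_r,i_{\sigma(r)}}$ as a composition of maps on $\mathbb C^m$, tensored with auxiliary copies of $\mathbb C^n$ that carry the "open" Alice indices. There are two types of contraction.
\begin{itemize}
\item An ascent, $\sigma(t)=k>t$. The ket $\ket{i_k}$ is produced (read right to left) by factor $t$ and later consumed as a bra by factor $k$. This is a matrix-product contraction, like $\sum_j x_{aj}x_{jb}=(V^2)_{ab}$. I realize it by letting factor $t$ act as $V$ with its Alice output kept on a fresh register, and letting factor $k$ absorb that register as its input. The wire in between is an identity on an ancilla, so its norm is $1$.
\item A drop, $1\le\sigma(t)=k<t$. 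The orientation is reversed, as in $\sum_j x_{ja}x_{bj}$, which is a partial-transpose-type contraction. Here I insert the unnormalized cup $\sum_j\ket{j}\ket{j}$ to create the index, and the cap $\sum_j\bra{j}\bra{j}$ to consume it at the other end. Each has norm $\sqrt n$, so the pair costs $n$.
\end{itemize}
Fixed points $\sigma(t)=t$ would be self-loops, i.e.\ partial traces of a single block; they are excluded by hypothesis, so no such loops arise. Contractions involving position $0$, namely $\sigma(0)$ and $\sigma^{-1}(0)$, are the external legs. I route them to the output register carrying $E_{i_{\sigma(0)},i_0}$; this is again an identity wire or register, and costs nothing.

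With this bookkeeping, $\mathcal O_\sigma$ equals a composition of the following pieces:
\begin{itemize}
\item operators of the form $V$ acting on one Alice register and $\mathbb C^m$, tensored with identities on the other registers;
\item swaps of registers;
\item isometric embeddings, used for the ascent wires;
\item one cup and one cap for each drop.
\end{itemize}
By submultiplicativity, $\norm{V}_\infty\le1$, and $\norm{\text{cup}}_\infty=\norm{\text{cap}}_\infty=\sqrt n$, this gives $\norm{\mathcal O_\sigma}_\infty\le n^{\operatorname{drop}(\sigma)}$.

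I expect the main obstacle to be the bookkeeping. The factors must be ordered consistently, so that every ascent wire really is "created before consumed" in the operator composition while every drop wire goes the other way. One must also check that the external index pair $(i_{\sigma(0)},i_0)$ assembles into a genuine operator map into $M_n\otimes M_m$, rather than into a transposed one that would cost an extra $n$. I would first verify this on small cases: the transposition $\sigma=(0\,1)$, the $3$-cycles for $r=2$, and $\sigma=(1\,2)$ with $\sigma(0)=0$. Then I would argue the general case by induction on $r$, peeling off the last factor.
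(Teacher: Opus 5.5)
Your argument is correct and is essentially the paper's proof. The paper sweeps a cut from right to left and bounds the step that inserts factor $t$ by $1$, $\sqrt n$, $\sqrt n$ or $n$, according to whether its column index must be created and/or its row index absorbed. That is exactly your accounting of one cup at the drop position $t$ and one cap at position $\sigma(t)$. The only difference is that you factor each step through $\norm{V}_\infty\le1$ and the $\sqrt n$ norm of the cup and cap, whereas the paper invokes the block estimates $\norm{V^{T_A}}_\infty\le n$ and $\tr_A(V^*V),\tr_A(VV^*)\preceq nI_m$.

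One small slip: reading right to left, in an ascent $\sigma(t)=k>t$ it is factor $k$ that outputs $i_k$ and factor $t$ that absorbs it. Your description has these swapped, which amounts to working with $\mathcal O_\sigma^*$ and is harmless. The ordering and external-leg issues you anticipate resolve automatically from the definitions of ascent and drop, so no induction on $r$ is needed.
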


We first illustrate the proof idea by an example. 
Take $r=4$ and
\begin{equation} \label{eq:examplesigma}
  \sigma(0)=4,\qquad \sigma(1)=0,\qquad
  \sigma(2)=1,\qquad \sigma(3)=2,\qquad \sigma(4)=3.
\end{equation}
Then
\begin{equation}\label{eq:A14}
  \mathcal O_\sigma
  =\sum_{i_0,\ldots,i_4=1}^n
  E_{i_4,i_0}\otimes
  x_{i_1,i_0}x_{i_2,i_1}x_{i_3,i_2}x_{i_4,i_3}.
\end{equation}

We may view it as a matrix-product-like contraction as \cref{fig:open-diagram-contraction}.
Bob indices of consecutive $x$s are contracted horizontally (hidden in \cref{eq:A14}), while the permutation $\sigma$ prescribes the contractions of the Alice indices $i_0,\ldots,i_4$.

\begin{figure}[ht]
  \centering
  \begin{tikzpicture}[
    x=1.25cm,
    y=1cm,
    tensor/.style={draw, fill=gray!8, minimum width=1.55cm,
      minimum height=.9cm, inner sep=2pt},
    wire/.style={line width=.6pt},
    alice/.style={wire, draw=red!75!black},
    bob/.style={wire, draw=blue!60!black},
    cut/.style={draw=gray!55, densely dashed, line width=.4pt},
    boundary/.style={font=\scriptsize, align=center}
  ]
    \foreach \x/\t in {-1.1/0,1.1/1,3.3/2,5.5/3,7.7/4} {
      \draw[cut] (\x,1.65) -- (\x,-1.65);
      \node[below, font=\small] at (\x,-1.65) {$K_{\t}$};
    }

    \draw[->, wire] (7.9,2.05) -- node[above, font=\footnotesize]
      {right-to-left contraction} (-1.2,2.05);

    \draw[bob] (-1.75,0) -- (8.35,0);
    \node[boundary, below] at (-1.75,0) {Bob output};
    \node[boundary, below] at (8.35,0) {Bob input};

    \node[tensor] (X1) at (0,0) {$x_{i_1,i_0}$};
    \node[tensor] (X2) at (2.2,0) {$x_{i_2,i_1}$};
    \node[tensor] (X3) at (4.4,0) {$x_{i_3,i_2}$};
    \node[tensor] (X4) at (6.6,0) {$x_{i_4,i_3}$};

    \draw[alice] (X1.north)
      .. controls +(.55,.7) and +(-.55,-.7) ..
      node[pos=.5, above, font=\small] {$i_1$} (X2.south);
    \draw[alice] (X2.north)
      .. controls +(.55,.7) and +(-.55,-.7) ..
      node[pos=.5, above, font=\small] {$i_2$} (X3.south);
    \draw[alice] (X3.north)
      .. controls +(.55,.7) and +(-.55,-.7) ..
      node[pos=.5, above, font=\small] {$i_3$} (X4.south);

    \draw[alice] (-1.65,1.05)
      .. controls +(.9,.9) and +(-2.0,1.15) ..
      node[pos=.58, above, font=\small] {$i_4$} (X4.north);
    \node[boundary, above left] at (-1.65,1.05) {Alice output};

    \draw[alice] (X1.south)
      .. controls +(.9,-1.15) and +(-2.1,-1.15) ..
      node[pos=.55, below, font=\small] {$i_0$} (8.15,-1.05);
    \node[boundary, below right] at (8.15,-1.05) {Alice input};
  \end{tikzpicture}
  \caption{The tensor network for $\mathcal O_\sigma$ for the $\sigma$ in \cref{eq:examplesigma}, with contraction
  proceeding from right to left. The dashed lines indicate the cuts $K_0,\ldots,K_4$.}
  \label{fig:open-diagram-contraction}
\end{figure}

We estimate the tensor network by moving a vertical cut from right to left,
across one block at a time. This expresses $\mathcal O_\sigma$ as a composition
of four linear maps. 
Note that the domain and codomain at each step depend on the Alice
and Bob index lines crossing the cut, and can vary from cut to cut.

To write the corresponding maps explicitly, for a set $\mathcal A$ of Alice
labels let
\begin{equation}
  \mathcal H_{\mathcal A}=\ell_2([n]^{\mathcal A}).
\end{equation}
Each cut also crosses the Bob line, which contributes a factor $\mathbb C^m$.
Moving the cut in \cref{fig:open-diagram-contraction} from right to left
therefore gives the chain
\begin{equation}\label{eq:example-contraction-chain}
\begin{aligned}
  \mathcal H_{\{0\}}\otimes\mathbb C^m
  &\xrightarrow{L_4}
  \mathcal H_{\{0,3,4\}}\otimes\mathbb C^m
  \xrightarrow{L_3}
  \mathcal H_{\{0,2,4\}}\otimes\mathbb C^m \\
  &\xrightarrow{L_2}
  \mathcal H_{\{0,1,4\}}\otimes\mathbb C^m
  \xrightarrow{L_1}
  \mathcal H_{\{4\}}\otimes\mathbb C^m.
\end{aligned}
\end{equation}
Here $L_t$ inserts the factor in position $t$. 
Under the canonical identifications of the first and last spaces with $\mathbb C^n\otimes\mathbb C^m$, the composition
$L_1L_2L_3L_4$ is $\mathcal O_\sigma$.

Now we prove \cref{lem:diagram-bound} formally.
We first record the operator estimates used in the contraction argument.
\begin{lemma}\label{lem:block-estimates-oneway}
Let $V=\sum_{a,b=1}^nE_{ab}\otimes x_{ab}$ and suppose that
$\norm{V}_\infty\le1$. Then
\begin{equation}\label{eq:block-estimates-oneway}
\begin{gathered}
  \norm{x_{ab}}_\infty\le1,\\
  \sum_{a,b}x_{ab}^*x_{ab}
  =\tr_A(V^*V)\preceq nI_m,
  \qquad
  \sum_{a,b}x_{ab}x_{ab}^*
  =\tr_A(VV^*)\preceq nI_m,\\
  \norm{V^{T_A}}_\infty\le n.
\end{gathered}
\end{equation}
\end{lemma}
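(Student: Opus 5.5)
The three estimates in \cref{lem:block-estimates-oneway} are elementary consequences of $\norm{V}_\infty\le1$, and we treat them separately.

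\emph{Block bound.} Each block is a compression of $V$:
\begin{equation}
  x_{ab}=(\bra{e_a}\otimes I_m)\,V\,(\ket{e_b}\otimes I_m).
\end{equation}
Since $\bra{e_a}\otimes I_m$ and $\ket{e_b}\otimes I_m$ are contractions, $\norm{x_{ab}}_\infty\le\norm{V}_\infty\le1$.

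\emph{Row and column sums.} We compute $\tr_A(V^*V)$ directly from the block decomposition. From $V^*=\sum_{a,b}E_{ba}\otimes x_{ab}^*$ we get
\begin{equation}
  V^*V=\sum_{a,b,c}E_{bc}\otimes x_{ab}^*x_{ac},
\end{equation}
since $E_{ba}E_{a'c}=\delta_{aa'}E_{bc}$. Taking the partial trace over $A$ keeps only the terms with $b=c$, which gives $\tr_A(V^*V)=\sum_{a,b}x_{ab}^*x_{ab}$. The identity $\tr_A(VV^*)=\sum_{a,b}x_{ab}x_{ab}^*$ follows in the same way.

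For the bound, note that $V^*V\preceq I_{nm}$. The partial trace $\tr_A$ is positive and satisfies $\tr_A(I_n\otimes I_m)=nI_m$, so $\tr_A(V^*V)\preceq nI_m$. The same argument applies to $VV^*$.

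\emph{Partial transpose.} The expected obstacle is to avoid the naive bound $\norm{V^{T_A}}_\infty\le\sum_{a,b}\norm{x_{ab}}_\infty\le n^2$. Instead we use the swap $F_n$ acting on $\mathbb C^n\otimes\mathbb C^n$, where the first factor is Alice's space and the second is an auxiliary copy. Write $\Phi=\sum_{i,j}E_{ij}\otimes E_{ij}=n\ket{\Phi_n}\!\bra{\Phi_n}$, as in \cref{eq:swap-partial-transpose}. By the transpose trick,
\begin{equation}
  \tr_{A}\bigl[(V\otimes I)(\Phi\otimes I_m)\bigr]=V^{T_A},
\end{equation}
viewed as an operator on $\mathbb C^n_{\rm aux}\otimes\mathbb C^m$, up to identifying the auxiliary space with $A$. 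This exhibits $V^{T_A}$ as a contraction of $V$ against $\Phi$.

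Rather than estimating this contraction through $\norm{\Phi}_\infty=n$ directly, we use a cleaner route: realignment-free duality. Since the operator norm is dual to the trace norm,
\begin{equation}
  \norm{V^{T_A}}_\infty=\sup_{\norm{Z}_1\le1}\abs{\tr(V^{T_A}Z)}=\sup_{\norm{Z}_1\le1}\abs{\tr(VZ^{T_A})},
\end{equation}
and the last expression is at most $\sup_{\norm{Z}_1\le1}\norm{Z^{T_A}}_1$. Thus the claim reduces to the standard fact that the partial transpose on $M_n\otimes M_m$ has trace-norm-to-trace-norm norm $n$.

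By convexity it suffices to verify this on rank-one $Z=\ket{\psi}\!\bra{\phi}$ with unit vectors $\psi,\phi$. Take Schmidt decompositions
\begin{equation}
  \psi=\sum_k\sqrt{p_k}\,u_k\otimes v_k,\qquad \phi=\sum_l\sqrt{q_l}\,u'_l\otimes v'_l,
\end{equation}
each with at most $n$ terms. Then
\begin{equation}
  Z^{T_A}=\sum_{k,l}\sqrt{p_kq_l}\;\overline{u'_l}\,u_k^{T}\otimes v_k v_l'^{*},
\end{equation}
and each summand has trace norm $\sqrt{p_kq_l}$. By Cauchy--Schwarz, $\norm{Z^{T_A}}_1\le\bigl(\sum_k\sqrt{p_k}\bigr)\bigl(\sum_l\sqrt{q_l}\bigr)\le\sqrt n\cdot\sqrt n=n$. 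This gives $\norm{V^{T_A}}_\infty\le n$.
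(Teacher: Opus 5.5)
Your proof is correct. The block bound and the two square-sum estimates are exactly the paper's: compression, then positivity of $\tr_A$ applied to $I-V^*V$ and $I-VV^*$. For $\norm{V^{T_A}}_\infty\le n$, however, you take a dual route.

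The paper argues in the primal. For unit vectors $\xi=\sum_a e_a\otimes\xi_a$ and $\eta=\sum_b e_b\otimes\eta_b$ it bounds $\abs{\langle\xi,V^{T_A}\eta\rangle}\le\sum_{a,b}\norm{\xi_a}\,\norm{x_{ba}}_\infty\,\norm{\eta_b}\le\bigl(\sum_a\norm{\xi_a}\bigr)\bigl(\sum_b\norm{\eta_b}\bigr)\le n$. This uses nothing but the block bound just proved, so it actually shows $\norm{V^{T_A}}_\infty\le n\max_{a,b}\norm{x_{ab}}_\infty$.

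You instead move the transpose onto a trace-class test operator and bound $\norm{Z^{T_A}}_1$ for rank-one $Z$ through Schmidt decompositions. This needs the full hypothesis $\norm{V}_\infty\le1$. In exchange, because Schmidt ranks are at most $\min\{n,m\}$, it gives the sharper $\norm{V^{T_A}}_\infty\le\min\{n,m\}$. For the same reason, your phrase \emph{has trace-norm-to-trace-norm norm $n$} should read \emph{at most $n$}.

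If you expanded $\psi$ and $\phi$ in Alice's standard basis instead of Schmidt bases, your computation would be literally the dual of the paper's, with the same Cauchy--Schwarz step.

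The paragraph introducing $\Phi$ and the transpose trick is never used and can be deleted. Estimating through it directly would only give $n^2$, since $\tr_A$ can inflate operator norms by a factor $n$.
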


\begin{proof}
Each $x_{ab}$ is a submatrix of $V$. The two square-sum estimates follow from $V^*V\preceq I_{nm}$ and $VV^*\preceq I_{nm}$ by applying the positive map $\tr_A$.

For the last estimate, let $\xi=\sum_a e_a\otimes\xi_a$ and
$\eta=\sum_b e_b\otimes\eta_b$ be unit vectors. Then
$\sum_a\norm{\xi_a}^2=\sum_b\norm{\eta_b}^2=1$. Since the $(a,b)$ block of
$V^{T_A}$ is $x_{ba}$,
\begin{equation}
\begin{aligned}
  \abs{\langle\xi,V^{T_A}\eta\rangle}
  &\le\sum_{a,b}\norm{\xi_a}\,\norm{x_{ba}}_\infty\,\norm{\eta_b}\\
  &\le
  \left(\sum_a\norm{\xi_a}\right)
  \left(\sum_b\norm{\eta_b}\right)
  \le
  \sqrt{n\sum_a\norm{\xi_a}^2}
  \sqrt{n\sum_b\norm{\eta_b}^2}
  =n.
\end{aligned}
\end{equation}
Taking the supremum proves the claim.
\end{proof}

\begin{proof}[Proof of \Cref{lem:diagram-bound}]
Fix a permutation $\sigma$ occurring in \cref{eq:centered-wick-expansion}.
We first estimate a single step in the contraction. Suppose that the inserted
factor is $x_{sd}$, where $s$ and $d$ are its row and column indices. All other
index variables are carried through unchanged. After reordering tensor
factors, the full map is therefore the tensor product of the effective map
below with an identity operator, which does not change its operator norm.

\begin{itemize}
\item If $d$ occurs in the domain and $s$ in the codomain, the effective map is
\begin{equation}
  (Lw)_s=\sum_d x_{sd}w_d.
\end{equation}
This is the block matrix $V=(x_{sd})$, and hence
$\norm{L}_\infty\le1$.

\item If $s$ occurs in the domain and $d$ in the codomain, the effective map is
\begin{equation}
  (Lw)_d=\sum_s x_{sd}w_s.
\end{equation}
Its block matrix is $V^{T_A}$, so $\norm{L}_\infty\le n$ by
\cref{lem:block-estimates-oneway}.

\item If both $s$ and $d$ occur in the codomain, the effective map is
\begin{equation}
  C:\mathbb C^m\longrightarrow\ell_2([n]^2)\otimes\mathbb C^m,
  \qquad (Cz)_{s,d}=x_{sd}z.
\end{equation}
Since
\begin{equation}
  C^*C=\sum_{s,d}x_{sd}^*x_{sd}\preceq nI_m,
\end{equation}
we have $\norm{C}_\infty\le\sqrt n$.

\item If both $s$ and $d$ occur in the domain, the effective map is
\begin{equation}
  D:\ell_2([n]^2)\otimes\mathbb C^m\longrightarrow\mathbb C^m,
  \qquad D(w)=\sum_{s,d}x_{sd}w_{s,d}.
\end{equation}
Here
\begin{equation}
  DD^*=\sum_{s,d}x_{sd}x_{sd}^*\preceq nI_m,
\end{equation}
and therefore $\norm{D}_\infty\le\sqrt n$.
\end{itemize}

We now apply these four estimates while moving the cut from right to left. For
$0\le t\le r$, let $\mathcal M_t$ be the set of Alice labels whose index lines
cross the cut with the factors in positions $t+1,\ldots,r$ to its right, and
let
\begin{equation}
  K_t:\mathcal H_{\{0\}}\otimes\mathbb C^m
  \longrightarrow\mathcal H_{\mathcal M_t}\otimes\mathbb C^m
\end{equation}
be the contraction of this part of the diagram. Thus $K_r$ is the identity,
while $K_0$ is $\mathcal O_\sigma$ after identifying its remaining Alice index
with the output space. Moving the cut past $x_{i_t,i_{\sigma(t)}}$ defines
\begin{equation}
  L_t:\mathcal H_{\mathcal M_t}\otimes\mathbb C^m
  \longrightarrow\mathcal H_{\mathcal M_{t-1}}\otimes\mathbb C^m
\end{equation}
and gives
\begin{equation}\label{eq:partial-contraction-recurrence}
  K_{t-1}=L_tK_t.
\end{equation}

It remains to identify which of the row and column indices occur in the domain
of $L_t$. 
For the column label $\sigma(t)$, we have
\begin{equation}\label{eq:column-in-domain}
  \sigma(t)\in \mathcal M_t \quad\Longleftrightarrow
  \sigma(t)>t
  \quad\text{or}\quad
  \sigma(t)=0,
\end{equation}
because its other occurrence is either the row index in position $\sigma(t)$
or the input label $i_0$. Similarly,
\begin{equation}\label{eq:row-in-domain}
  t\in\mathcal M_t
  \quad\Longleftrightarrow\quad
  \sigma^{-1}(t)>t,
\end{equation}
since the other occurrence of $i_t$ is the column index in position
$\sigma^{-1}(t)$; when $\sigma^{-1}(t)=0$, it is the output index and lies
outside the partial contraction.

Set
\begin{equation}
  e_t
  =\mathbf1_{\{1\le\sigma(t)<t\}}
  +\mathbf1_{\{\sigma^{-1}(t)>t\}}.
\end{equation}
The first indicator is $1$ exactly when the column index is absent from the
domain of $L_t$, and the second is $1$ exactly when the row index is present.
The four possibilities and the corresponding local estimates are
\begin{equation}
\begin{array}{c|c|c|c}
\text{column index in domain} & \text{row index in domain}
& e_t & \text{bound for }\norm{L_t}_\infty\\
\hline
\text{yes} & \text{no}  & 0 & 1\\
\text{no}  & \text{yes} & 2 & n\\
\text{no}  & \text{no}  & 1 & \sqrt n\\
\text{yes} & \text{yes} & 1 & \sqrt n.
\end{array}
\end{equation}
In particular, in every case,
\begin{equation}
  \norm{L_t}_\infty\le n^{e_t/2}.
\end{equation}

Now we multiply the above estimates together:
\begin{equation}
  \norm{\mathcal O}_\infty \leq n^{\frac{1}{2}\sum_{t=1}^r e_t}.
\end{equation}
The first indicator sums exactly to $\operatorname{drop}(\sigma)$. Substituting
$u=\sigma^{-1}(t)$ in the second gives
\begin{equation}
  \#\{t:\sigma^{-1}(t)>t \geq 1\}
  =\#\{u:1\le\sigma(u)<u\}
  =\operatorname{drop}(\sigma).
\end{equation}
Therefore $\sum_{t=1}^r e_t=2\operatorname{drop}(\sigma)$. Iterating
\cref{eq:partial-contraction-recurrence} proves
\begin{equation}
  \norm{\mathcal O_\sigma}_\infty
  \le\prod_{t=1}^r\norm{L_t}_\infty
  \le n^{\operatorname{drop}(\sigma)}.
\end{equation}
\end{proof}

As an elementary combinatorial fact, only two permutations have the maximal number of drops.
\begin{lemma}\label{lem:extremal-permutations}
For every permutation $\sigma\in S_{\{0,1,\cdots,r\}}$, one has
$\operatorname{drop}(\sigma)\le r-1$. Equality holds exactly for
\begin{equation}\label{eq:two-extremal-permutations}
\begin{aligned}
  &\sigma(0)=r,\quad \sigma(1)=0,\quad
  \sigma(t)=t-1\quad(2\le t\le r),\\
  &\sigma(0)=0,\quad \sigma(1)=r,\quad
  \sigma(t)=t-1\quad(2\le t\le r).
\end{aligned}
\end{equation}
\end{lemma}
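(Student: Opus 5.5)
The bound is immediate from the definition of a drop, and the equality case follows from a short forced-choice induction. Throughout, $\sigma$ is an arbitrary permutation of $\{0,1,\ldots,r\}$; no derangement condition is needed.

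\emph{The bound.} Position $t=1$ can never be a drop. Indeed, $1\le\sigma(1)<1$ is impossible; the only value below $1$ is $\sigma(1)=0$, which is not counted by \cref{eq:drop-number}. Hence the drops lie in $\{2,\ldots,r\}$, and $\operatorname{drop}(\sigma)\le r-1$.

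\emph{Equality forces $\sigma(t)=t-1$ for $2\le t\le r$.} Suppose $\operatorname{drop}(\sigma)=r-1$. Then every $t\in\{2,\ldots,r\}$ is a drop, so $1\le\sigma(t)\le t-1$ for each such $t$. I will show by induction on $t$ that $\sigma(t)=t-1$.
\begin{itemize}
\item For $t=2$, the only admissible value is $\sigma(2)=1$.
\item If $\sigma(u)=u-1$ for all $2\le u<t$, then the values $1,\ldots,t-2$ are already used. Injectivity of $\sigma$ together with $\sigma(t)\in\{1,\ldots,t-1\}$ then forces $\sigma(t)=t-1$.
\end{itemize}

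\emph{The two extremal permutations.} The values used so far are exactly $\{1,\ldots,r-1\}$. The positions $0$ and $1$ must therefore take the remaining values $\{0,r\}$ in some order. This leaves exactly two permutations: either $\sigma(0)=r,\ \sigma(1)=0$, or $\sigma(0)=0,\ \sigma(1)=r$. These are the two listed in \cref{eq:two-extremal-permutations}.

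\emph{Converse.} Both permutations in \cref{eq:two-extremal-permutations} do attain $r-1$ drops, since every $t\in\{2,\ldots,r\}$ satisfies $1\le t-1<t$. This completes the characterization of equality.

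There is no real obstacle in this argument. The only care needed is the convention that $\sigma(t)=0$ is not counted as a drop; this is precisely what excludes $t=1$ and yields the bound $r-1$ rather than $r$.
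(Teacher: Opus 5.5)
Your proposal is correct and follows essentially the same argument as the paper: position $1$ can never be a drop, equality forces every $t\in\{2,\ldots,r\}$ to be a drop, injectivity then gives $\sigma(t)=t-1$ successively, and the leftover values $\{0,r\}$ fill positions $0$ and $1$ in exactly two ways. You spell out the forced-choice step as an explicit induction and check the converse, which the paper leaves as immediate.
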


\begin{proof}
The position $t=1$ cannot be a drop, so
$\operatorname{drop}(\sigma)\le r-1$. If equality holds, then
$1\le\sigma(t)<t$ for every $2\le t\le r$. Injectivity gives successively
\begin{equation}
  \sigma(2)=1,\quad\sigma(3)=2,\quad\ldots,\quad\sigma(r)=r-1.
\end{equation}
The unused values are $0$ and $r$, and the unused positions are $0$ and $1$.
There are therefore exactly two ways to complete $\sigma$, namely those in
\cref{eq:two-extremal-permutations}. The converse is immediate.
\end{proof}

\begin{proof}[Proof of \Cref{prop:main-moment-bound}]
By \cref{lem:diagram-bound,lem:extremal-permutations}, the two permutations in \cref{eq:two-extremal-permutations} satisfy
\begin{equation}
  \norm{\mathcal O_\sigma}_\infty\le n^{r-1},
\end{equation}
whereas every other permutation in \cref{eq:centered-wick-expansion} satisfies
\begin{equation}
  \norm{\mathcal O_\sigma}_\infty\le n^{r-2}.
\end{equation}
There are at most $(r+1)!$ permutations in total. Applying the triangle inequality to \cref{eq:centered-wick-expansion} gives
\begin{equation}
  \norm{\mathcal K_r(V)}_\infty
  \le
  2n^{-r}n^{r-1}+(r+1)!n^{-r}n^{r-2}
  \le\frac2n+\frac{(r+1)!}{n^2}.
\end{equation}
\end{proof}

\section{Fourth-moment estimate in the LO proof}\label{app:lo-fourth-moment}

Recall that, for a Hermitian traceless family $y=(y_{ij})$ and $w\in U(n)$,
\begin{equation}\label{eq:app-lo-orbit-polynomial}
  S_y(w)=\lambda_n^{-1/2}\sum_{i,j}a(w)_{ij}y_{ij},
\end{equation}
where $a(w)=wDw^*$, with $D$ and $\lambda_n$ as in \cref{eq:lo-shift,eq:lo-orbit-covariance}. We prove that there is a sequence $\beta_n=2+O(n^{-1})$ such that, whenever $\norm{y}_{L_\infty[R\cap C]}\le1$,
\begin{equation}\label{eq:app-lo-fourth-moment}
  \mathbb E_w(S_y^*S_y)^2\preceq \beta_nI_m,
  \qquad
  \mathbb E_w(S_yS_y^*)^2\preceq \beta_nI_m.
\end{equation}
The proof proceeds by directly expanding the fourth moment using the Haar integral formula and estimating each term in the expansion.

\begin{proof}
Fix such a family $y$.
We use the degree-four Haar integral formula \cite{Collins03,CS06}
\begin{equation}
  \mathbb E_w
  \left[
    \prod_{s=1}^4 w_{i_sj_s}
    \prod_{s=1}^4 \overline{w_{i'_sj'_s}}
  \right]
  =\sum_{\sigma,\tau\in S_4}
  \left(\prod_{s=1}^4\delta_{i_s,i'_{\sigma(s)}}\right)
  \left(\prod_{s=1}^4\delta_{j_s,j'_{\tau(s)}}\right)
  \operatorname{Wg}_n(\tau\sigma^{-1}),
\end{equation}
where $\operatorname{Wg}_n$ denotes the Weingarten function.
For $\sigma\in S_4$, define
\begin{equation}\label{eq:lo-external-contraction}
  \mathcal P_\sigma
  =
  \sum_{i_1,\ldots,i_4}
  y_{i_1,i_{\sigma(1)}}^*
  y_{i_{\sigma(2)},i_2}
  y_{i_3,i_{\sigma(3)}}^*
  y_{i_{\sigma(4)},i_4}.
\end{equation}
For matrices $X_1,\ldots,X_4$ and $\tau\in S_4$, define
\begin{equation}
  \operatorname{Tr}_\tau(X_1,X_2,X_3,X_4)
  =\prod_{c=(t_1\cdots t_\ell)\text{ a cycle of }\tau}
  \tr(X_{t_1}\cdots X_{t_\ell}).
\end{equation}
Applying the Haar integral formula to \cref{eq:app-lo-orbit-polynomial} gives
\begin{equation}\label{eq:lo-weingarten-convolution}
  \mathbb E_w(S_y^*S_y)^2
  =\sum_{\sigma\in S_4}\kappa_\sigma\mathcal P_\sigma,
  \qquad
  \kappa_\sigma
  =\lambda_n^{-2}\sum_{\tau\in S_4}
  \operatorname{Tr}_{\tau^{-1}}(D^*,D,D^*,D)
  \operatorname{Wg}_n(\tau\sigma^{-1}).
\end{equation}

If $\sigma$ has a fixed point, then the traceless property of $y$ gives $\mathcal P_\sigma=0$. 
Thus only fixed-point free $\sigma$ (three double transpositions and the six four-cycles) can contribute.
Moreover, for our choice of $D$, we have:
\begin{equation}
  \tr D=\tr D^2=0,
  \qquad
  \tr(D^*D)=\tr((D^*D)^2)=2r,
  \qquad
  \tr((D^*)^2D^2)=r,
\end{equation}
and hence the only nonzero contractions in \cref{eq:lo-weingarten-convolution} are
\begin{equation}\label{eq:contractions}
  \operatorname{Tr}_{\tau^{-1}}(D^*,D,D^*,D)
  =
  \begin{cases}
    (2r)^2,
      &\tau=(12)(34),(14)(23),\\
    2r,
      &\tau=(1234),(1432),\\
    r,
      &\tau=(1243),(1324),(1342),(1423).
  \end{cases}
\end{equation}
The Weingarten function depends only on the cycle type, and its standard
asymptotic expansion~\cite{CS06} gives, for $\pi\in S_4$,
\begin{equation}
  \operatorname{Wg}_n(\pi)
  =\begin{cases}
    n^{-4}+O(n^{-6}),&\pi\text{ has cycle type }1^4,\\
    -n^{-5}+O(n^{-7}),&\pi\text{ has cycle type }2\,1^2,\\
    n^{-6}+O(n^{-8}),&\pi\text{ has cycle type }2^2,\\
    2n^{-6}+O(n^{-8}),&\pi\text{ has cycle type }3\,1,\\
    -5n^{-7}+O(n^{-9}),&\pi\text{ has cycle type }4.
  \end{cases}
\end{equation}
Substituting the Weingarten functions and \cref{eq:contractions} into \cref{eq:lo-weingarten-convolution} gives
\begin{equation}\label{eq:lo-weingarten-decomposition}
  \mathbb E_w(S_y^*S_y)^2
  =A_n(P_1+P_2)+B_nP_3+C_nQ_{1}+D_nQ_{2},
\end{equation}
where
\begin{equation}\label{eq:lo-contraction-groups}
\begin{gathered}
  P_1=\mathcal P_{(12)(34)},\qquad
  P_2=\mathcal P_{(14)(23)},\qquad
  P_3=\mathcal P_{(13)(24)},\\
  Q_1=\mathcal P_{(1234)}+\mathcal P_{(1432)},\qquad
  Q_2=\mathcal P_{(1243)}+\mathcal P_{(1324)}
      +\mathcal P_{(1342)}+\mathcal P_{(1423)}.
\end{gathered}
\end{equation}
and (using $r=n/3+O(1)$),
\begin{equation}\label{eq:lo-weingarten-coefficients}
\begin{aligned}
  A_n&=1+O(n^{-2}),
  &B_n&=O(n^{-2}),\\
  C_n&=-\frac1{2n}+O(n^{-2}),
  &D_n&=-\frac1{4n}+O(n^{-2}).
\end{aligned}
\end{equation}

Let us estimate \cref{eq:lo-weingarten-decomposition} term by term.
For $P_1$, we have 
\begin{equation}
P_1= \sum_{ijkl} y_{ij}^* y_{ij} y_{kl}^* y_{kl}.
\end{equation}
Define $X =\sum_{i,j}y_{ij}^*y_{ij}
  =\sum_{i,j}y_{ij}y_{ij}^*
  \preceq I_m$
where the second equality follows from $y_{ji}=y_{ij}^*$. 
Therefore
\begin{equation}
  P_1=X^2\preceq I_m.
\end{equation}

For $P_2$, we have
\begin{equation}
  P_2
  =\sum_{ijkl}y_{ij}^* y_{kl}y_{kl}^* y_{ij} =\sum_{ij}y_{ij}^* X y_{ij}
  \preceq\sum_{i,j}y_{ij}^*y_{ij}
  =X
  \preceq I_m.
\end{equation}

For $P_3$, we have
\begin{equation}
  P_3
  =\sum_{ijkl}
  y_{ik}^*y_{lj}y_{ki}^* y_{jl}.
\end{equation}
We can check it is Hermitian by a relabeling.
Moreover, using $y_{ij}^*=y_{ji}$, we can check:
\begin{equation}
  0\preceq\sum_{ijkl}
  \bigl(y_{ik}^*y_{lj}\pm y_{lj}y_{ik}^*\bigr)
  \bigl(y_{ik}^*y_{lj}\pm y_{lj}y_{ik}^*\bigr)^*
  =2(P_2\pm P_3),
\end{equation}
so $-P_2\preceq P_3\preceq P_2\preceq I_m$.



For $Q_1$, we have
\begin{equation}
  \mathcal P_{(1234)}
  =\sum_{i,k} Y_{ik} Y_{ik}^*,
  \qquad
  Y_{ik}=\sum_jy_{ij}^*y_{kj},
\end{equation}
and analogously for $\mathcal P_{(1432)}$. Hence $Q_1\succeq0$.

Now set $P=P_1+P_2+P_3$ and $Q=Q_1+Q_2$. Let $g$ be a standard
complex Gaussian vector in $\mathbb C^n$ and put
$X(g)=\sum_{i,j}\overline g_i g_jy_{ij}$. Since $y$ is a Hermitian family,
$X(g)$ is Hermitian. Wick's formula gives
\begin{equation}\label{eq:lo-Q-bound}
  \mathbb E_gX(g)^4
  =\sum_{\sigma\in S_4}\mathcal P_\sigma
  =P+Q\succeq0.
\end{equation}
(If $\sigma$ has a fixed point, then $\mathcal P_\sigma=0$ by the
tracelessness of $y$.) Thus $Q\succeq-P$.

Now we can return to \cref{eq:lo-weingarten-decomposition}.
For all sufficiently large $n$, we have $C_n<D_n<0$. With above estimates in mind, we have
\begin{equation}
  C_nQ_1+D_nQ_2
  =D_nQ+(C_n-D_n)Q_1
  \preceq D_nQ
  \preceq\abs{D_n}P
  \preceq3\abs{D_n}I_m.
\end{equation}
Therefore,
\begin{equation}
  \mathbb E_w(S_y^*S_y)^2
  \preceq
  \left(2A_n+3\abs{D_n}+\abs{B_n}\right)I_m
  =:\beta_nI_m.
\end{equation}
By \cref{eq:lo-weingarten-coefficients}, $\beta_n=2+O(n^{-1})$. 

The second estimate in \cref{eq:app-lo-fourth-moment} follows analogously.
\end{proof}

\bibliographystyle{alpha}
\bibliography{data-hiding}

\end{document}